\def\notes{1}
\newcommand*\ie{i.\kern.1em e., }
\newcommand*\eg{e.\kern.1em g., }
\newcommand*\cf{c.\kern.1em f.\ }
\newcommand*\almev{a.\kern.1em e.\ }
\theoremstyle{plain}
\newtheorem{theorem}{Theorem}[section]
\newtheorem{lemma}[theorem]{Lemma}
\newtheorem{fact}[theorem]{Fact}
\newtheorem{corollary}[theorem]{Corollary}
\newtheorem{conjecture}[theorem]{Conjecture}
\newtheorem{observation}[theorem]{Observation}
\newtheorem{assumption}[theorem]{Assumption}
\newtheorem{definition}[theorem]{Definition}
\newtheorem{problem}[theorem]{Problem}
\crefname{claim}{Claim}{Claims}
\crefname{fact}{Fact}{Facts}
\theoremstyle{definition}
\newtheorem{remark}[theorem]{Remark}
\theoremstyle{plain}
\newcommand{\ignore}[1]{}
\DeclareMathOperator{\poly}{poly}
\DeclareMathOperator{\polylog}{polylog}
\DeclareMathOperator{\sign}{sign}
\newcommand{\dist}{\mathsf{dist}}
\newcommand{\Ex}[1]{\bE \left[ #1 \right]}
\newcommand{\Exu}[2]{\underset{#1} \bE \left[ #2 \right] }
\renewcommand{\Pr}[1]{\bP \left[ #1 \right]} % Probability
\newcommand{\Pru}[2]{\underset{ #1 }\bP \left[ #2 \right]}
\newcommand{\define}{\vcentcolon=}
\newcommand{\floor}[1]{\ensuremath{\left\lfloor #1 \right\rfloor}}
\newcommand{\ceil}[1]{\ensuremath{\left\lceil #1 \right\rceil}}
\DeclarePairedDelimiter{\abs}{\lvert}{\rvert}
\newcommand{\ind}[1]{\mathds{1} \left[ #1 \right] }
\newcommand{\zo}{\{0,1\}}
\newcommand{\pmset}{\{\pm 1\}}
\newcommand{\cA}{\ensuremath{\mathcal{A}}}
\newcommand{\cB}{\ensuremath{\mathcal{B}}}
\newcommand{\cC}{\ensuremath{\mathcal{C}}}
\newcommand{\cD}{\ensuremath{\mathcal{D}}}
\newcommand{\cE}{\ensuremath{\mathcal{E}}}
\newcommand{\cF}{\ensuremath{\mathcal{F}}}
\newcommand{\cG}{\ensuremath{\mathcal{G}}}
\newcommand{\cH}{\ensuremath{\mathcal{H}}}
\newcommand{\cN}{\ensuremath{\mathcal{N}}}
\newcommand{\cO}{\ensuremath{\mathcal{O}}}
\newcommand{\cP}{\ensuremath{\mathcal{P}}}
\newcommand{\cQ}{\ensuremath{\mathcal{Q}}}
\newcommand{\cR}{\ensuremath{\mathcal{R}}}
\newcommand{\cS}{\ensuremath{\mathcal{S}}}
\newcommand{\cT}{\ensuremath{\mathcal{T}}}
\newcommand{\cX}{\ensuremath{\mathcal{X}}}
\newcommand{\bE}{\ensuremath{\mathbb{E}}}
\newcommand{\bN}{\ensuremath{\mathbb{N}}}
\newcommand{\bP}{\ensuremath{\mathbb{P}}}
\newcommand{\bR}{\ensuremath{\mathbb{R}}}
\newcommand{\bS}{\ensuremath{\mathbb{S}}}
\newcommand{\bZ}{\ensuremath{\mathbb{Z}}}
    \newcommand{\rfnote}[1]{{\color{ForestGreen}\footnote{{\color{ForestGreen} {\bf RF:} #1}}}}
    \newcommand{\srnote}[1]{{\color{blue}\footnote{{\color{blue} {\bf S:} #1}}}}
    \newcommand{\dpnote}[1]{{\color{orange}\footnote{{\color{orange} {\bf DP:} #1}}}}
    \newcommand{\srnote}[1]{}
    \newcommand{\dpnote}[1]{}
    \newcommand{\rfnote}[1]{}
\renewcommand{\epsilon}{\varepsilon}
\newcommand{\eps}{\epsilon}
\newcommand{\inp}[2]{\left\langle #1, #2 \right\rangle}
\DeclareMathOperator{\round}{\mathsf{round}}
\newcommand{\bin}[1]{\left\langle #1 \right\rangle}
\newcommand{\QueryTime}{{\sf QueryTime}}
\DeclareMathOperator{\Lengths}{LenSupp}
\DeclareMathOperator{\image}{Im}
\newcommand{\Stat}{\textsc{Stat}}
\newcommand{\SQDim}{\mathrm{SQdim}}
\title{Computational Complexity in
%of
Property Testing\footnote{The conference version of this paper appeared at SODA 2026 \cite{FerreiraPR26}.}
}
\author{}
\author{
Renato Ferreira Pinto Jr.\thanks{Part of this work was done while R.F.\ was a student at the University of Waterloo and visiting Boston University. R.F. is supported by an NSERC Postdoctoral Fellowship, and was supported by an NSERC Canada Graduate Scholarship during the development of this article.} \\ Columbia University \and
Diptaksho Palit\thanks{D.P.\ and S.R.\ were supported by the U.S. National Science Foundation under Grant No.\ 2022446.} \\ Boston University \and
Sofya Raskhodnikova$^\ddagger$ \\ Boston University
}
\date{}
\begin{document}
\maketitle

\begin{abstract}
We initiate a systematic study of the computational complexity of property testing, focusing on the relationship between query and time complexity. While traditional work in property testing has emphasized query complexity—often via information-theoretic techniques—relatively little is known about the computational hardness of property testers. Our goal is to chart the landscape of {\em time-query} interplay and develop tools for proving time complexity lower bounds.
Our first contribution is a pair of {\em time-query hierarchy} theorems for property testing. For all suitable nondecreasing functions $q(n)$ and $t(n)$ with %$t(n) = \Omega(\text{polylog}(n))$, 
$t(n)\geq q(n)$,
we construct properties with query complexity $\widetilde{\Theta}(q(n))$ and time complexity $\widetilde\Omega(t(n))$. Our weak hierarchy holds unconditionally,
%and guarantees time complexity at most $2^{\poly(t(n))}$, 
whereas the strong version—assuming the Strong Exponential Time Hypothesis—
provides better control over the time complexity of the constructed properties.
%guarantees time complexity at most $\widetilde{O}(t(n)^{1+\gamma})$ for arbitrarily small $\gamma > 0$.

We then turn to halfspaces in $\mathbb{R}^d$, a fundamental class in property testing and learning theory. We study the problem of approximating the distance from the input function to the nearest halfspace within additive error $\eps$. (The distance approximation problem is known to have roughly the same complexity as tolerant property testing for appropriate setting of parameters.) For the distribution-free distance approximation problem, known algorithms achieve query complexity $O(d/\epsilon^2)$, but run in time $\widetilde{\Theta}(1/\epsilon^d)$. We provide a fine-grained justification for this gap: assuming the (integer) $k$-SUM conjecture, any algorithm must have running time $(1/\epsilon)^{\ceil{(d+1)/2}-o(1)}$. This  fine-grained lower bound yields
%highlights 
a provable (under a well-established assumption) separation between query and time complexity for a natural and well-studied (tolerant) testing problem.
We also prove that any randomized Statistical Query (SQ) algorithm under the standard  Gaussian distribution requires  $(1/\eps)^{\Omega(d)}$ queries if the queries are answered with additive error up to $\eps^{\Omega(d)}$, revealing a fundamental barrier even in the distribution-specific setting.

% Finally, we provide evidence for a time complexity lower bound for distance approximation of halfspaces under the standard Gaussian distribution in low dimensions. Specifically, we prove that any randomized Statistical Query (SQ) algorithm
% requires $(1/\eps)^{\Omega(d)}$ queries if the queries are answered with additive error up to $\eps^{\Omega(d)}$. This unconditional lower bound in the SQ model implies that any faster general algorithm must exploit structure beyond what can be learned from simple expectation estimates alone, echoing the fine-grained distribution-free bound and highlighting a fundamental computational barrier in the distribution-specific setting.

\end{abstract}

\thispagestyle{empty}
\setcounter{page}{0}
\newpage
{
\setcounter{tocdepth}{2}
\tableofcontents
}
\thispagestyle{empty}
\setcounter{page}{0}
\newpage
\setcounter{page}{1}

\section{Introduction}\label{sec:intro}
We initiate a systematic investigation of computational complexity of property testing, focusing in particular on the relationship between query and time complexity. Property testing \cite{RubinfeldS96,GGR98}—along with related models of tolerant testing and distance approximation \cite{PRR06}—was introduced to study extremely efficient algorithms that run in sublinear time and therefore do not even have time to read the entire input. However, most prior work in the area  analyzes only the {\em query complexity} of property testers, not their {\em running time}. Property testing textbooks (see, e.g., \cite{Goldreich17,BhattacharyyaY22}) also focus on query complexity because we understand it better than time complexity and can, in particular, prove lower bounds on query complexity using information-theoretic arguments. In many cases, property testers are computationally very simple and have similar query and time complexity. However, this is not always the case.
The goal of our work is to understand the landscape of {\em query-time interplay} in property testing and to develop tools for proving computational hardness of property testers.

Our first contribution is time-query hierarchy theorems for property testing. For all appropriate\footnote{In our hierarchy theorems, $t(n) = \Omega(\log^2 n)$. We do not optimize polylogarithmic factors, as they are sensitive to model details we consider insignificant; for instance, reading the input length $n$ already takes $\Omega(\log n)$ time.} nondecreasing functions $q(n)$ and $t(n)$, with $t(n)=\widetilde{\Omega}(q(n))$, where $n$ represents the length of the input,
we exhibit properties with query complexity $\widetilde{\Theta}(q(n))$ and time complexity $\tilde \Omega(t(n))$. These results show 
the existence of properties with any desired query complexity and arbitrarily higher time complexity.
We provide two hierarchy theorems: weak and strong. The weak hierarchy theorem holds unconditionally whereas the strong one assumes the Strong Exponential Time Hypothesis (SETH). The latter provides better control over the time complexity of the constructed properties: the weak hierarchy guarantees time complexity at most $2^{\poly(t(n))}$, whereas the strong version guarantees time complexity at most $\widetilde{O}(t(n)^{1+\gamma})$ for arbitrarily small $\gamma > 0$.

Next we focus on %turn to 
a specific, fundamental class of properties: halfspaces in
$\bR^d$. This class has been widely studied in %the context of 
property testing
\cite{MatulefORS10,MatulefORS09,MatulefORS10b,BalcanBBY12,Harms19,BlaisFH21,ChenP22} and
PAC learning
\cite{DobkinEM96,KalaiKMS08,DKZ20,DiakonikolasKKT21,BermanMR22,RubinfeldV23,GollakotaKSV24}.
It is one of the simplest classes that exhibit a gap between query and time complexity of known sublinear-time algorithms—specifically,
%; specifically, 
for tolerant testing and the related task %problem 
of distance approximation. 
In the (distribution-free) distance approximation problem for a property $\cP$ (in our case,  $\cP$ is the set of halfspaces in $\bR^d$), the goal is to estimate the distance from a function $f$ to the nearest $g \in \cP$ with additive error $\epsilon\in(0,1)$ and high probability, given sample access to a distribution $\cD$ over the domain and query access to $f$, where the distance between $f$ and $g$ is $\Pru{x \sim \cD}{f(x) \ne g(x)}$.
%The (distribution-free) distance approximation problem for a property $\cP$ (in our case, $\cP$ is the set of halfspaces in $\bR^d$) is parameterized by the additive error parameter $\eps\in (0,1)$. The goal is, given sample access to an unknown distribution $\cD$ over the domain and query access to a function $f$, to approximate within additive error $\eps$ (and with probability at least $\frac 23$) the distance from $f$ to $\cP$ defined as
%$\displaystyle\inf_{g \in \cP} \Big(\Pru{x \sim \cD}{f(x) \ne g(x)}\Big)$. 
The query and time complexity of this problem are nearly the same as for tolerant testing of $\cP$ (with appropriate setting of parameters) \cite{PRR06,PallavoorRW22}.
%are closely related to those of tolerant testing of $\cP$ (with appropriate setting of parameters) \cite{PRR06,PallavoorRW22}.

To approximate the distance to $\cH$, the class of halfspaces in $\bR^d$, one can take a sample from distribution $\cD$ and return the empirical distance,  %with respect to the sample, 
that is, the fraction of sample points that need to be changed to make the sample consistent with a halfspace. 
%By the VC-dimension theory \cite{VapnikC71,ShalevB14}, s
Since $\cH$ has VC-dimension $d+1$, a sample of size 
%$s$, where
$s:=O(\frac{d}{\eps^2})$ suffices \cite{VapnikC71,ShalevB14}.
%is sufficient for this task.
However, no computationally efficient algorithms are known for this task.  The  fastest known algorithm is implied by %best known running time comes from 
the work of Matheny and Phillips \cite{MathenyP21} who showed how to approximate maximum bichromatic discrepancy for halfspaces in $d$-dimensions in time $\tilde{\Theta}(s + 1/\eps^d)$, which %implies that 
allows us to estimate
the empirical distance—and thus the distance to $\cH$—%can be estimated 
in time $\tilde{O}(1/\eps^d)$.  This leaves a huge gap between the query complexity and the running time of the known distance approximation algorithms for halfspaces:
$O(\frac{d}{\eps^2})$ vs. $\tilde{\Theta}(\frac 1{\eps^d})$.
While this problem is  NP-hard when the dimension $d$ grows with the size of
the input\footnote{The %While the 
hardness results in
\cite{GuruswamiR09,FeldmanGKP09} are stated for learning, but the hard instances
used imply the same lower bounds for distribution-free distance approximation.}
\cite{GuruswamiR09,FeldmanGKP09}, it is not known whether this gap %the gap above between query and time complexity 
is necessary for any constant dimension $d\geq 3$.

We justify %present a justification of 
this gap under the $k$-SUM conjecture, a widely used assumption for proving conditional lower bounds across 
%This conjecture has been widely used to prove conditional lower bounds in many areas of 
computer science (see the lecture notes \cite{vassilevska2020lecture9}). We demonstrate that, under the (integer)
$k$-SUM conjecture, the exponential dependence on $d$ is necessary for distribution-free distance approximation to $\cH$: specifically, for all
constant
$d \in \bN$ and $\gamma > 0$, there is no distribution-free
distance approximation algorithm for halfspaces over $\bZ^d$
running in time $(1/\epsilon)^{\ceil{(d+1)/2} - \gamma}$. % (as a function of $\epsilon$). 
E.g., for $d=4$, our result gives a (conditional) separation of $O(\frac 1{\eps^2})$ vs.\ at least $(\frac 1 \eps)^{3-o(1)}$.
Our fine-grained lower bound yields a provable (under a well-established assumption) separation between query and time complexity for a
fundamental
%natural and well-studied 
(tolerant) testing problem.

Our hardness result for distribution-free distance approximation for halfspaces raises a natural question: does the hardness persist for standard, well-structured
%natural, well-behaved %% too many ``natural''
input distributions, or does it arise only for carefully constructed worst cases?
We provide evidence that the problem remains hard even for well-behaved distributions by proving a lower bound for Statistical Query (SQ) algorithms—a broad class of algorithms that can only access the input distribution via estimates of expectations of bounded query functions, rather than seeing individual samples directly. Specifically, we show that every SQ algorithm for distance approximation (and thus for agnostic learning) of halfspaces under the standard Gaussian distribution over $\bR^d$ must use time at least $(1/\epsilon)^{\Omega(d)}$ when the dimension $d$ is constant. This unconditional lower bound in the SQ model implies that any faster general algorithm must exploit structure beyond what can be learned from simple expectation estimates alone, echoing the fine-grained distribution-free bound and revealing a fundamental computational barrier in the distribution-specific setting.

To summarize, our time-query hierarchies and time/query separations for concrete problems shed light on fundamental differences between information-theoretic and algorithmic barriers.

\subsection{Motivation: known gaps between query and time complexity}\label{sec:known-gaps}
There are many examples of property testing problems for which all known algorithms have much higher time complexity than query complexity. The first examples appear in the seminal work of Goldreich, Goldwasser, and Ron \cite{GGR98}, which studied multiple graph properties, including bipartiteness, $k$-colorability, $\rho$-clique, $\rho$-cut, $\rho$-bisection, and other graph partition problems. In the table of results on p.\ 667, all properties except  bipartiteness exhibit %have 
significantly different query complexity and running time upper bounds: while the query complexity is polynomial in the relevant parameters, the running time is exponential. (The query complexity bounds for some of these problems have been improved in subsequent work \cite{FiatR21,BlaisS23,ShapiraS24}, but the bounds on running time remain exponential.) Moreover, \cite{GGR98} argued that if the time complexity of testing $k$-colorability (and other NP-complete problems) is polynomial in $1/\eps$ then NP$\subseteq$BPP, because when $\eps$ is sufficiently small, a property tester solves the exact decision problem. Note that our time-query hierarchies hold for constant $\eps$, so the argument that, for small enough $\eps$, the corresponding property testing problems require solving exact decision problems does not apply.

Next, we mention several problems where the gap between the best upper bounds on
query and time complexity scales with the input length $n$. A textbook
\cite[\S10.3.5]{BhattacharyyaY22} highlights the following example: although
\emph{every} graph property is testable with a number of queries independent of
$n$ on minor-free graphs \cite{NewmanS11}, the running time of the best known
algorithms \cite{NewmanS11,onak2012complexity}, which involve a brute-force
search, has a large dependence on $n$. For testing monotonicity of functions
over partially-ordered domains, \cite{FischerLNRRS02,Ras03} gave an algorithm with
query complexity $O(\sqrt n)$, but no sublinear in $n$ bound on the time
complexity is known. For testing whether the input string has a sufficiently
small period (up to $\frac{\log n}6$), \cite{LachishN11} designed an algorithm
that makes $\poly (\log \log n)$ queries, but runs in time $\poly(n)$.

Recent work on testing and learning decision trees has established one setting
where %such a 
a query-time
gap can be formally justified: \cite{KochST23np} showed that
distribution-free testing whether a Boolean function over $n$ variables is a
size-$s$ decision tree (for $s = O(n)$), for constant distance parameter
$\epsilon$,
is NP-hard by reducing from \textsc{VertexCover} instances of size
$n^{\Omega(1)}$. In contrast, the sample complexity of testing this class
is at most its VC-dimension, which is $O(s\log n)$. Thus, sufficiently
strong superpolynomial time lower bounds for \textsc{SAT} would imply a
query-time separation for this problem.

For tolerant testing and distance approximation,
to our knowledge, the only formal justification for a query-time
gap comes from recent work of \cite{BlancKST23} on junta testing,
which showed the NP-hardness of tolerantly testing $k$-juntas
over $n$-variable Boolean functions in the distribution-free setting; as
noted in \cite{BerettaHK26}, implicit in \cite{BlancKST23}
%their \srnote{Who is ``their'' refering to? Maybe split into two sentences?} result
is an $n^{k-o(1)}$ time lower bound for this problem under SETH, which is larger than the
$O(2^k + \log \binom{n}{k})$ sample complexity upper bound from VC dimension theory. However, many gaps between query and time complexity of
known algorithms remain unjustified. One family of examples comes from work on geometric properties \cite{BermanMR22,EdenGR}: halfspaces in $\bR^d$ and triangles, convexity, and connectedness in $\bR^2$. For connectedness, the gap %between query and time complexity 
is exponential in
relevant parameters, while for halfspaces for constant dimension $d\geq 3$, triangles, and
convexity, it is polynomial. In fact, \cite{BermanMR22} (a journal version of
\cite{BMR16}) had a distance approximation algorithm for halfplanes with query
complexity $\Theta(1/\eps^2)$ and time $\Theta(1/\eps^3)$, but it turns
out the running time can be improved to $\tilde O(1/\eps^2)$ using techniques in
subsequent work of \cite{MathenyP21}.
This raises the fundamental question: When are the gaps in the query and time complexity intrinsic to the problems and cannot be improved?

\subsection{Overview of our results and techniques}\label{sec:results}

\paragraph{Time-query hierarchy.} Our (unconditional) weak and (conditional) strong time-query hierarchies are stated in \Cref{thm:weak,thm:strong}, respectively. They are inspired by the query complexity hierarchies of Goldreich et al.~\cite{GoldreichKNR12}. 
To establish time–query hierarchies, we construct properties by combining two orthogonal sources of hardness: one component enforces the specified query complexity using a 3CNF property (constructed in~\cite{Ben-SassonHR05}) that requires a linear number of queries to test, but is easy to decide given the entire input, while the other enforces the specified time complexity by encoding a language that is either provably hard to decide (for the unconditional weak hierarchy) or assumed hard under SETH (for the strong hierarchy). To link the hardness of the language to property testing, we map the language through an error-correcting code of Spielman~\cite{Spielman96}, which is efficiently constructable, encodable, and decodable.

The final property stitches these parts together through repetition (to adjust input sizes to match the desired query and time bounds) and concatenation (to ensure both forms of hardness coexist). A key concatenation lemma guarantees that the combined property inherits both the query and time lower bounds while remaining testable within the claimed upper bounds. This modular design allows us to compose query and computational hardness in a clean, general way, yielding rich hierarchies. 
%for any admissible pair of query and time complexity functions.

Our hierarchy theorems cover bounds ranging from polylogarithmic to nearly
exponential in the input size, and therefore a precondition for formally stating
and proving those results is a suitable computational model for property
testing. In \cref{sec:prelim}, we develop a model of property testing in
random-access machines (RAMs), which are well-suited to algorithm design and
analysis as well as standard in fine-grained complexity. Our model extends the
logarithmic cost RAM model of Cook and Reckhow \cite{CookR73} and captures both
property testing (with query access to the input) and classic computation; the
logarithmic cost model is important to enable principled reductions between
instances of vastly different input sizes, which is a key component of our
proof.

\paragraph*{Distribution-free distance approximation for halfspaces.}

Our fine-grained $(1/\epsilon)^{\ceil{(d+1)/2}-\gamma}$ time lower bound for
distribution-free distance approximation for halfspaces over $\bZ^d$ under the
$k$-SUM conjecture is stated in \cref{thm:distribution-free-testing}.
% To prove fine-grained harness of distribution-free distance approximation for
% halfspaces over low-dimensional integer grids, we give a reduction from the
% classic $k$-SUM problem.
The proof builds on well-known connections between
$k$-SUM and geometric degeneracy problems: specifically, the hardness of
detecting whether any $d+1$ given points in $\bR^d$ lie on a non-vertical
hyperplane. Our reduction adapts a standard construction from computational
geometry—embedding $k$-SUM instances as configurations of points that are
affinely dependent iff %if and only if 
the original instance is a YES instance—and
then carefully modifies this point set to produce a labeled sample distribution
that encodes the same combinatorial structure.

To ensure that distinguishing YES from NO instances reduces to estimating the
distance to the class of halfspaces, the construction augments each point with a
small vertical “witness” gadget: each original point is replaced by a pair of
nearby points just above and below its hyperplane, labeled oppositely. Any
halfspace that fits the data must cut correctly between these pairs, so
approximate distance estimation with fine enough additive error effectively
solves the underlying $k$-SUM instance. By working in the integer grid and
bounding coordinates, the reduction avoids trivial hardness due to input
encoding and matches a realistic RAM model. The result is a nearly tight
time lower bound for distribution-free testing of halfspaces in low dimension under a
standard fine-grained complexity assumption. %conjecture.

\emph{The cost of tolerance.}
One implication of our result is that, for distribution-free  testing of
halfspaces in low-dimensional space, tolerance is more expensive in terms of
time than queries. Specifically, for constant dimension $d$, the query
complexity of standard testing is $\widetilde \Theta(1/\epsilon)$, while the
query complexity of tolerant testing is $\tilde\Theta(1/\epsilon^2)$ by VC theory and
\cite{BermanMR22}; hence the query gap between standard and tolerant testing is
quadratic up to logarithmic factors. In contrast, standard testing can be done
in $\poly(d/\epsilon)$ time via linear programming, from which our
$(1/\epsilon)^{\Omega(d)}$ tolerant lower bound (under the $k$-SUM conjecture)
establishes an arbitrarily large polynomial separation for time complexity.

% \renato{
% Mention Theorem~A.1 of \url{https://arxiv.org/pdf/2312.16616}: in the
% high-dimensional setting under Gaussian, agnostic proper learning with queries is more
% efficient than with samples. Our hardness result shows that (under $k$-SUM) this
% is not the case (beyond a quadratic gap) in the constant-dimensional
% distribution-free setting,
% }

\paragraph*{SQ lower bound under the Gaussian distribution.}
The SQ model, introduced by \cite{Kearns98} as a natural restriction of the PAC
learning model of \cite{Valiant84}, allows access to the input $f$ only via
approximations to the expectations $\Ex{q(x, f(x))}$ of bounded query functions
$q$ (see \Cref{section:sq-model} for a formal definition). A sample complexity
lower bound against SQ algorithms serves as evidence of a time complexity lower
bound against general algorithms in the sense that any faster algorithm must go
beyond estimating expectations $\Ex{q(x, f(x))}$ from independent samples. SQ is
attractive because unconditional lower bounds can be proved for this model, yet
it still captures a wide variety of learning algorithms (see \cite{Reyzin20} for
a survey).

Our $(1/\epsilon)^{\Omega(d)}$ lower bound for (randomized) SQ distance
approximation algorithms for halfspaces under the Gaussian distribution is
stated in \cref{thm:sq-lb}. The starting point of our proof is an SQ lower bound
of $d^{\poly(1/\epsilon)}$ for agnostically learning halfspaces under the
Gaussian distribution in the high-dimensional setting where $d \gg
\poly(1/\epsilon)$ \cite{DKZ20}. That work constructed, from packing
number lower bounds on the sphere, sets of Boolean functions which are
correlated with halfspaces, yet hard to learn using statistical queries. Our
proof builds upon their work via two key components: 1)~a packing number lower
bound for the low-dimensional sphere, which we obtain by analyzing the limit
distribution of angles in random packings from \cite{CFJ13} in our precise
asymptotic regime; and 2)~a ``pseudorandom'' Boolean function on $\bR^d$, which
is adversarially constructed to be uncorrelated with the queries of a given SQ
algorithm, and serves as the NO instance in our proof.

%\subsection{Prior work on hierarchies in property testing}\label{sec:prior-work}
\subsection{Related work}\label{sec:prior-work}
\paragraph*{Hierarchies and separations in property testing.}
In addition to the query hierarchies in terms of the input size $n$ by \cite{GoldreichKNR12} that inspired our query-time hierarchy theorems, there is a query hierarchy in terms of the distance parameter $\epsilon$ by \cite{Goldreich19}. It states that ``for essentially every function $q : (0,1] \to \bN$, there exists a property of Boolean functions that is testable'' with $O(q(\Omega(\eps)))$ queries but not with $o(q(O(\eps)))$ queries. We believe our techniques can be used to extend this result to query-time hierarchies in terms of $\eps$.
Additionally, \cite{CanonneG18} %proved 
showed
an adaptivity hierarchy %theorem
for property testing by constructing properties that are easy to test
(using $\widetilde O(k)$ queries) for
algorithms with $k$ rounds of adaptivity, but %hard to test (requiring 
require
$\widetilde \Omega(n/k^2)$ queries %)
for algorithms with $k-1$ rounds.
Finally, \cite{LachishNS08} proved a separation between space and query
complexity, showing that for all space-constructible functions $s(n)$ satisfying $\log \log n \le s(n) \le \log n$, there exists
a language decidable in space $O(s(n))$ but requiring $2^{\Omega(s(n))}$ queries to test.

\paragraph*{Computational complexity of learning.} In contrast to property
testing, computational learning theory has historically emphasized
computational efficiency. Indeed, the original definition of PAC learning
\cite{Valiant84} requires that the learner run in polynomial time; and
many known barriers in learning theory are computational in nature, \eg for learning
DNFs (see \cite{KearnsV94}, and the recent work \cite{AlmanNPS25} and the
discussion therein) and for parities
\cite{FeldmanGKP09,GrigorescuRV11,Valiant15,KarppaKK18,ChenSZ25}.

We mention two lines of research that are related to our work. First,
\cite{DecaturGR99} introduced the notion of \emph{computational sample
complexity}---the number of samples required to learn a concept class in
polynomial time---and showed that, under standard cryptographic assumptions,
there exist classes whose computational sample complexity is polynomially larger
than the sample complexity of learning in the information-theoretic sense. 
Follow-up work \cite{Servedio00} strengthened these results and extended them to
%the model of 
attribute-efficient learning, and recently
\cite{BlancKST25} showed similar separations under worst-case assumptions (\ie
NP-hardness) rather than cryptographic ones. Second, \cite{Vadhan17,KothariL18}
explored the conditions for efficient learnability by showing that the
computational complexity of learning (in realizable and agnostic settings) is
characterized by certain measures of \emph{refutation complexity}. It would be
interesting to extend their work toward characterizations of efficient
testability.

\paragraph*{Efficient constant-sample testability.} In recent work, which appeared after
the first version of our manuscript, \cite{DworkT25} studied which properties can be
tested with a constant number of samples using a small circuit. Extending the work of
\cite{BlaisY19}, who characterized constant-sample testability in the absence of
computational constraints, \cite{DworkT25} showed that a property of Boolean functions $f$ can be tested
with a constant number of samples using a small circuit if and only if membership in the
property only depends on the average values of $f$ on the input in each part of a small partition,
which is itself computable by a small circuit.

\subsection{Open questions}\label{sec:open}

% \diptaksho{It would be interesting to link time-efficient property testing of certain properties to complexity theoretic lower bounds, similar to the algorithmic method for proving circuit lower bounds \cite{ImpagliazzoKW02,FortnowK06,Williams10}.
%     The algorithmic method proves theorems of the form: nontrivial improvements in the time complexity of solving certain meta-problems would imply breakthrough circuit lower bounds.
%     There exists a learning-based algorithmic method lower bound \cite{FortnowK06}, so it is natural to wonder whether this can be relaxed to a testing-based lower bound.}
One open question is
%An open question would be 
to resolve the exponential gap in the time upper and lower bounds in our unconditional query-time hierarchy theorem.
A hierarchy of the form $\mathsf{BPTIME}[t(n)^{1+\gamma}] \not\subseteq \mathsf{BPTIME}[t(n)]$ in the RAM model would improve this gap to an upper bound $t(n)^{1+\gamma}$ for the lower bound $t(n)$.
Currently, the best known randomized time hierarchy also has an exponential gap between the time upper and lower bounds (through brute forcing all possible random tape initializations).

In \Cref{sec:known-gaps}, we collected many specific testing problems with
unexplained gaps in query and time complexity. Specifically, for halfspace
distance approximation (and tolerant testing), our results partially justify
this gap for dimension $d\geq 4$. For $d=3$, the problem remains open. It is
open as well (for all constant $d\geq 3$) for the uniform distribution (e.g.,
over a unit cube or  sphere).
For the Gaussian distribution, it is unknown
whether $(1/\eps)^{\Omega(d)}$ time is required for this
problem, or whether our SQ lower bound can be circumvented by some
non-SQ algorithm (in the high-dimensional setting, the $d^{\poly(1/\epsilon)}$
SQ lower bound for agnostic learning
shown by \cite{DKZ20} is supported by a cryptographic hardness
result based on the LWE problem \cite{DKR23}).
%Is it possible to improve algorithms for any of these variants of the problem or, alternatively,  prove fine-grained or cryptographic hardness for them?
%
Can these algorithmic gaps be closed by new, more efficient algorithms, or do they reflect inherent complexity that can be formalized through fine-grained or cryptographic lower bounds?

% \begin{enumerate}
   
%     \item It would be interesting to link time-efficient property testing of certain properties to complexity theoretic lower bounds, similar to the algorithmic method for proving circuit lower bounds \cite{ImpagliazzoKW02,FortnowK06,Williams10}.
%     The algorithmic method proves theorems of the form: nontrivial improvement in the time complexity of solving some meta-problem would imply breakthrough circuit lower bounds.
%     There exists a learning-based algorithmic method lower bound \cite{FortnowK06}, so it is natural to wonder whether this can be relaxed to a testing-based lower bound.

%  %    \item Cryptographic or fine-grained hardness for constant-dimensional halfspaces under the
%  %    uniform distribution (over the hypergrid) or Gaussian distribution?

%  % \item specific properties where there is a gap between query complexity
 
%  %     \item $d=3$ halfspaces

%  %    \item \renato{Is it possible to do beat the $(1/\epsilon)^{\Omega(d)}$
%  %    lower bound, using samples or queries, under the Gaussian distribution for
%  %    constant $d$?\rfnote{I should revisit their paper and make sure their ideas
%  %    don't apply. If I recall correctly, one of their main tools is a type of
%  %    dimensionality reduction, but here we already have constant $d$.}}
% \end{enumerate}

\paragraph*{Organization.} The rest of this paper is organized as follows. In
\cref{sec:prelim}, we formalize our computational model and lay out technical
preliminaries for the rest of the paper. In \cref{sec:hierarchies}, we prove our
time-query hierarchy theorems. In \cref{sec:distribution-free-lb-halfspaces},
we prove our fine-grained hardness result for distribution-free distance
approximation for halfspaces. Finally, in \cref{sec:sq-lb} we prove our SQ
lower bound for distance approximation for halfspaces under the Gaussian distribution.

\section{Our model and setup}
\label{sec:prelim}

We use $\bin{\cO}$ to denote the binary representation of $\cO$: for integers, this is the standard base-2 encoding; for other objects, the representation will be specified as needed.

\subsection{Computational model}
\label{sec:comp-model}

To precisely characterize time complexity in property testing, we must 
%carefully 
choose
an appropriate computational model. While oracle machines (\eg \cite{Goldreich17}) suffice for analyzing query complexity, RAM models are better suited for fine-grained time analysis
%While property testers are often modeled as oracle machines (e.g., \cite{Goldreich17}), this suffices only when query complexity is the focus. For finer time analyses, RAM models are preferable—they align better with practical computation 
and yield sharper bounds than Turing machines. (The best known simulation of a time-$t$ RAM by a Turing machine takes time
$t^2$). RAM models are standard in % the study of 
fine-grained complexity (e.g., \cite{Williams15}). 

We propose a natural augmentation of the classic logarithmic cost RAM model of \cite{CookR73} for property testing. 
In that model, the program has access to an infinite
sequence of registers $X_1, X_2, \dotsc$, each holding an integer; basic operations on register $X_i$ incur cost\footnote{Formally, the cost should be $\log(\max(2, |X_i|))$, but we omit this
technical detail for simplicity.} $\log(|X_i|)$, where $|X_i|$ is the absolute value of the integer stored at $X_i$; and the \emph{indirect} operations of reading and writing from register $X_{X_j}$ incur additional cost $\log(X_j)$. The machine 
can read from an input tape and print to an output tape.

In the original model, the input tape holds a finite binary 
string,
and the operation $\textsc{Read}(X_i)$ reads the next symbol from the input tape into register $X_i$. To model property testing (see \cref{def:tester}), we
replace the single input tape with two: the \emph{parameter tape} holding $p$, readable bit by bit via the operation \textsc{ReadParam}, and the \emph{input tape} holding a binary string $x$, accessible via arbitrary queries using the operation \textsc{QueryInp}$(X_j, X_i)$, which loads the $X_j$-th bit of $x$ into $X_i$.

To handle property testing, we let $n \define |x|$ and place $p = \bin{n}$
on the parameter
tape,\footnote{We fix the alphabet $\{-1,0,1\}$ and stipulate that the input and
parameter tapes consist of the binary strings $x$ and
$p$, respectively,
followed by the symbol $-1$ in all the remaining tape cells.}
so the algorithm can first read the input length and then query bits of $x$. We define the query complexity of a tester as the maximum number of 
\textsc{QueryInp} calls
on inputs of length $n$.
To recover standard computation (without queries), we place the input on the
parameter tape and leave the input tape empty. Algorithm $\cA$ \emph{computes} a
function $f: \zo^* \to \zo^*$ in time $t$ if, given input $p$ on the parameter
tape and emtpy input tape, it prints $f(p)$ and halts after executing
operations with total time cost at most $t$. In both settings, $\cA$ accepts if
it prints 1, and rejects if it prints 0. Randomized algorithms may fail with
with some probability over their internal randomness.

\cref{table:ram} summarizes the list of operations in the model, along with the
time cost of each operation.
While \cite{CookR73} only included addition and subtraction as the built-in
arithmetic operations in their model, 
noting that multiplication could be simulated but was unnecessary for the algorithms
studied in that paper\footnote{The algorithms literature features
a vast diversity of RAM model variations; see \cite{GrandjeanJ22} for a discussion.}\!\!,
we explicitly model multiplication and division between registers $X_i$ and $X_j$ as
taking
$O(\log n \log\log n) = O(\log^{1+o(1)} n)$ time\footnote{This matches
the complexity in multitape Turing machines \cite{HarveyVDH21}; we do not %attempt to 
optimize polylogarithmic factors.}
when $\abs*{X_i} \le n$ and $\abs*{X_j} \le n$.
We also include the operation \textsc{FlipCoin} to support randomized algorithms.

\begin{table}[t]
    \begin{center}
        \caption{Operations and their costs in the logarithmic cost RAM model for property testing.}
        %Here, $n$ denotes the contents of the parameter tape and $x$ denotes the contents of the input tape.}
        \label{table:ram}
        \vspace{.5em}
        \bgroup
        \def\arraystretch{1.25} % Adjust vertical spacing.
        \begin{adjustbox}{center}
        \begin{tabular}{c|l|c|l}
            \textbf{Category}
            & \multicolumn{1}{c|}{\textbf{Operation}}
            & \textbf{Description}
            & \multicolumn{1}{c}{\textbf{Time cost}} \\
            \hline
            \multirow{3}{*}{\begin{tabular}{@{}c@{}}Input \&\\Output\end{tabular}}
            & $\textsc{ReadParam}(X_i)$ & read next bit of $p$ into $X_i$ & $O(1)$ \\
            & $\textsc{QueryInp}(X_j, X_i)$ & read $X_j$-th bit of $x$ into $X_i$ & $O(\log(X_j))$ \\
            & $\textsc{Print}(X_i)$ & output $X_i$ in binary notation & $O(\log(|X_i|)$ \\
            \hline
            \multirow{6}{*}{\begin{tabular}{@{}c@{}}Arithmetic\\\& Control\end{tabular}}
            & $X_i \leftarrow C$, $C$ any integer & constant assignment & $O(1)$\\
            & $X_i \leftarrow X_j + X_k$ & addition & $O(\log(|X_j| \cdot |X_k|))$ \\
            & $X_i \leftarrow X_j - X_k$ & subtraction & $O(\log(|X_j| \cdot |X_k|))$ \\
            & $X_i \leftarrow X_j \cdot X_k$ & multiplication
                & $O(\log^{1+o(1)}(|X_j| \cdot |X_k|))$ \\
            & $X_i \leftarrow \floor{X_j / X_k}$ & integer division
                & $O(\log^{1+o(1)}(|X_j| \cdot |X_k|))$ \\
            & $\textsc{Jump}(m, X_i>0)$ & jump to line $m$ if $X_i > 0$ & $O(\log(|X_i|)$ \\
            \hline
            \multirow{2}{*}{Indirection}
            & $X_i \leftarrow X_{X_j}$ & indirect fetch & $O(\log(X_j) + \log(|X_{X_j}|))$ \\
            & $X_{X_j} \leftarrow X_i$ & indirect store & $O(\log(X_j) + \log(|X_i|))$ \\
            \hline
            Randomness
            & $\textsc{FlipCoin}(X_i)$ & put uniformly random bit in $X_i$ & $O(1)$
        \end{tabular}
        \end{adjustbox}
        \egroup
    \end{center}
\end{table}

\subsection{Property testing}
 Next, we formalize property testing in our model. %\srnote{Should we put distance approximation here, too? And discuss its connections to tolerant testing.}
\begin{definition}[Property, distance]
    For $n \in \bN$, a \emph{property} $\cP_n$ (over inputs of length $n$) is a
    set $\cP_n \subseteq \zo^n$. Denote the property $\{\cP_n\}_{n \in \bN}$
    (over inputs of all lengths) by $\cP$.
Given $n$-bit strings $x$ and $y$, let $\dist(x,y)=\Pru{i\in[n]}{x_i\neq y_i}$ denote the relative Hamming distance between them. For
property $\cP$, define
$\displaystyle    \dist(x, \cP) \define \min_{y \in \cP_n} \dist(x, y) \,.
$
String $x$ is  {\em $\eps$-far} from property $\cP$ if $\dist(x, \cP) \geq \eps$.
\end{definition}

\begin{definition}[Tester]\label{def:tester}
    For a property $\cP$ and a parameter $\epsilon \in (0,1)$, an
    $\epsilon$-tester for $\cP$ is a randomized algorithm $\cA$ such that, for each input length
    $n \in \bN$ and $x \in \zo^n$, when $\cA$ is run with
    $\bin{n}$ on the parameter tape and $x$ on the input tape, $\cA$ accepts
    with probability at least $\frac 23$ if $x \in \cP$ and rejects with probability
    at least $\frac 23$ if $x$ is $\epsilon$-far from $\cP$.
    Tester $\cA$ has \emph{time complexity} %$t = 
    $t(n)$ if the maximum time it
    takes to terminate on inputs of length $n$ is $t(n)$. Tester $\cA$ has
    \emph{query complexity} %$q = 
    $q(n)$ if the maximum number of queries it makes
    on inputs of length $n$ is $q(n)$.
\end{definition}

\begin{remark}
    \Cref{def:tester} places a lower bound of $\Omega(\log n)$ on the time complexity of most
    nontrivial property testing algorithms, as this is the time required to read the input length
    $n$ (and to read a general bit of $x$). We limit our attention to $\Omega(\log
    n)$-time algorithms because below this threshold, the model choice becomes overly influential and the study too brittle to be meaningful.
\end{remark}

\begin{remark}
In \Cref{sec:hierarchies}, we treat $\epsilon > 0$ as a constant and part of the
problem definition, whereas in \Cref{sec:distribution-free-lb-halfspaces}, the
parameter $\epsilon$ is given to the tester (on the parameter tape).
\end{remark}

\subsection{Notation}

\paragraph*{Asymptotic behavior of functions.}
A function $k : \bN \to \bN$ is \emph{eventually surjective} if
its image omits only finitely many natural numbers.
Given nondecreasing $f : \bN \to \bR$ and function $g : \bN
\to \bR$, $f$ has \emph{slope} $O(g(n))$ if there exists a constant $c \in \bN$ such that $f(n) - f(n-1) \le c \cdot g(n)$
for all but finitely many $n \in \bN$. A set $S \subset \zo^*$ is \emph{almost everywhere (a.e.) nonempty}
if $S \cap \zo^n$ is nonempty for all but finitely many $n \in \bN$.

\paragraph*{Computable and invertible functions.} A function $f : \bN \to \bN$ is computable in
time $t : \bN \to \bR$ if there exists an algorithm $\cA$ such that,
on each $n \in \bN$, when $\bin{n}$ is placed on the parameter tape, algorithm $\cA$
computes $f(n)$ in time $t(n)$.

Given a function $f : \bN \to \bN$, the function $f^\dag : \bN \to \bN$ is a
\emph{right inverse} of $f$ if, for all $n$ in the image of $f$, it holds that
$f(f^\dag(n)) = n$. For nondecreasing $f$, we define its \emph{minimum inverse}
$f^{-1} : \bN \to \bN$ as follows: for each $n \in \bN$, let $f^{-1}(n) \define
n'$ where $n' \in \bN$ is the smallest number satisfying $f(n') \ge n$. Then
$f^{-1}$ is a right inverse of $f$, and we have $f(f^{-1}(n)) \ge n$ and
$f^{-1}(f(n)) \le n$.

%We define the \emph{minimum inverse} $f^{-1} : \bN \to \bN$
%of a nondecreasing function $f : \bN \to \bR$ as follows:
%%Given a nondecreasing $f : \bN \to \bR$, we define $f^{-1} : \bN \to
%%\bN$ as follows:
%for each $n \in \bN$, let $f^{-1}(n) \define n'$ where $n' \in
%\bN$ is the smallest number satisfying $f(n') \ge n$.
%%We say $f^{-1}(n)$ is
%%computable in time $t(n)$ if there exists an algorithm $\cA$ such that, on each
%%$n \in \bN$ (given as $\bin{n}$ on the parameter tape), algorithm $\cA$ computes
%%$f^{-1}(n)$ in time $t(n)$.
%By the monotonicity of $f$, the function
%$f^{-1}$ satisfies the following properties: 1)~either $f(f^{-1}(n)) = n$ or
%$f(f^{-1}(n) - 1) < n < f(f^{-1}(n))$; 2)~$f(f^{-1}(n)) \ge n$; and
%3)~$f^{-1}(f(n)) \le n$.

\paragraph*{String notation.} 
For a string $x$ and natural numbers $s \le t$, we write $x^t$ for the concatenation of $t$
copies of $x$, and $x[s:t]$ for the length-$(t-s+1)$ substring of $x$ between
indices $s$ and $t$, inclusive.

\section{Time-query hierarchies in property testing}
\label{sec:hierarchies}

In this section, we prove our hierarchy theorems (\Cref{thm:weak,thm:strong}). 

\begin{definition}[Admissible query and time functions]\label{def:admissible-functions}
The set of \emph{admissible query and time functions} is the set of
nondecreasing, unbounded functions $q, t : \bN \to \bN$ (query and time
functions, respectively) satisfying the following conditions\footnote{The
constant $2.01$ could be replaced with $2+\gamma$ for any $\gamma > 0$.
This is also true for the constant $2.01$ appearing in \Cref{thm:weak} and \Cref{thm:strong}.}: $q(n)
\le \min\left\{ n, \tfrac{t(n)}{\log^{2.01} n} \right\}$; $q(n)$ is
eventually surjective and
computable
in time $O(q(n) \log q(n))$; $t(n)$ is computable in time $O(t(n))$; and
$t^{-1}(n)$ is computable in time $O(n)$.
\end{definition}

\begin{definition}[Query and time complexity functions and classes]\label{def:query-time-classes}
Given a property $\cP$ and a constant $\eps\in(0,1)$, let $Q_{\eps,n}(\cP)$ and
$T_{\eps,n}(\cP)$ be functions of $n\in\bN$ that denote the query and
time\footnote{
The query complexity $Q_{\epsilon,n}$ is a well-defined number for each $n$. In contrast,
the time complexity $T_{\epsilon,n}(\cP)$ is defined for uniform algorithms and thus only admits asymptotic 
bounds.
}
% \footnote{While the query complexity $Q_{\epsilon,n}(\cP)$ is well defined
% for each $n$, the time complexity $T_{\epsilon,n}(\cP)$ is only well defined
% asymptotically. Indeed, we will only make asymptotic statements about
% $T_{\epsilon,n}(\cP)$. \renato{TODO Remark that query complexity is nonuniform
% and time complexity is uniform.}}
complexity of $\eps$-testing $\cP$ on inputs
of length $n$, respectively. Given $\eps\in(0,1)$ and nondecreasing functions
$q(n),t(n)$, let $\QueryTime_\eps(q(n),t(n))$ be the class of properties
$\eps$-testable by algorithms that make $O(q(n))$ queries {\em and} run in time
$O(t(n))$.
\end{definition}

\begin{theorem}[Unconditional weak hierarchy]
    \label{thm:weak}
    Let $\epsilon \in (0, 1)$ be a sufficiently small constant.
    Let $q, t : \bN \to \bN$ be admissible query and time functions
    (as in \cref{def:admissible-functions})
    such that
    $q^{-1}(n)$ is computable in time $O(t \circ q^{-1} (n))$.
    % Then there exists a property $\cP$ with the following complexity\footnote{The upper bound on $T_{\eps,n}(\cP)$ in the proof of \cref{thm:weak} is super-exponential in $t(n)$, since the lower bound relies on a hard language constructed via diagonalization over probabilistic machines. Improving the upper bound by our approach would likely require a time hierarchy theorem for \texttt{BPTIME}.}:
    % \[
    % \Omega\left(\frac{q(n)}{\log q(n)}\right)\leq Q_{\eps,n}(\cP)\leq O(q(n)) \text{\ \  and \ \ }   T_{\eps,n}(\cP)\geq \Omega\left(t(n) \right).
    % \]
    Then there exists a property\footnote{We do not try to optimize the constant $2.01$ in the exponent.
    We remark that a direct diagonalization argument in the log-cost RAM model for
    randomized algorithms would suffice to make the constant $1.01$ instead.}
    $\cP\in \QueryTime_\eps\left(q(n), 2^{t(n)^{2.01} \log^{4.05} n}\right)$ such that
    $Q_{\eps,n}(\cP)\geq \Omega\left(\frac{q(n)}{\log q(n)}\right)$ and
     %\text{\ \  and \ \ }  
     $T_{\eps,n}(\cP)\geq \Omega\left(t(n)\right).
    $
\end{theorem}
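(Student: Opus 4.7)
The plan is to instantiate the modular recipe outlined in the introduction: construct $\cP$ by combining a query-hard block (from the 3CNF-based property of \cite{Ben-SassonHR05}) with a time-hard block (from a language produced by a time hierarchy theorem, connected via Spielman's code) through repetition and concatenation, and then argue via a concatenation lemma that the composition inherits both forms of hardness while remaining testable within the claimed upper bounds.

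For the query-hard ingredient, I would use the family $\{\cP^Q_m\}_m$ of 3CNF-style properties from \cite{Ben-SassonHR05}: each $\cP^Q_m \subseteq \zo^m$ is decidable in $\poly(m)$ time but requires $\Omega(m/\log m)$ queries to $\epsilon$-test. Setting $m \define \Theta(q(n))$—computable efficiently from $\bin{n}$ by admissibility of $q$—the $\log m$ loss is exactly what produces the $\log q(n)$ denominator in the theorem's query lower bound. For the time-hard ingredient, I would invoke an unconditional randomized time hierarchy in the log-cost RAM model to obtain a language $L^*$ which, on inputs of a length $s$ chosen relative to $n$, is not in $\mathsf{BPTIME}(t(n))$ but is decidable within the allotted $2^{t(n)^{2.01}\log^{4.05} n}$ budget (e.g., via brute-force enumeration of coin-flip sequences of a near-tight hierarchy witness). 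The exponent $2.01$ absorbs the slack of known unconditional randomized hierarchies; the footnote notes that a direct diagonalization would sharpen this to $1.01$. Spielman's linear-time, constant-rate, constant-relative-distance code $E$ then converts YES instances of $L^*$ into codewords that are Hamming-well-separated from NO codewords, letting language hardness surface as Hamming distance from the property.

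Given these ingredients, I would define $\cP_n$ as the set of $n$-bit strings obtained by concatenating a repeated copy of some $w \in \cP^Q_m$ with a repeated copy of some Spielman codeword $E(z)$ for a YES instance $z$ of $L^*$, with repetition counts calibrated so the two blocks together cover all $n$ bits and so the tester can sample a single random embedded copy of each inner instance. The central technical tool is a concatenation lemma that transfers both forms of hardness: if $x$ is $\epsilon$-far from $\cP_n$, then at least one block is $\Omega(\epsilon)$-far from its corresponding inner property, and conversely $\cP_n$ is $\epsilon$-testable by running independent $O(\epsilon)$-testers on each block. The upper bounds then follow from this decomposition: the BHR05 half contributes $O(q(n))$ queries and $\poly(q(n))$ time; the Spielman/$L^*$ half contributes $O(s)$ queries to recover a candidate $z$ and $2^{t(n)^{2.01}\log^{4.05} n}$ time to brute-force $L^*(z)$, with $s$ tuned so both costs fit the budget. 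For the query lower bound, a tester for $\cP_n$ restricted to a fixed planted $z^* \in L^*$ in the time-hard half yields a tester for $\cP^Q_m$ via the concatenation lemma, inheriting the $\Omega(m/\log m) = \Omega(q(n)/\log q(n))$ bound. For the time lower bound, a tester running in time $o(t(n))$, composed with Spielman's linear-time decoder applied to a sampled block of the time-hard half, would decide $L^*$ in time $o(t(n))$, contradicting the hierarchy theorem.

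The main obstacle I anticipate is executing these reductions rigorously inside the log-cost RAM model: preparing planted instances on the tester's input-tape interface, simulating the tester on them, computing $m$, $s$, and the repetition counts from $\bin{n}$, and applying Spielman's decoder, all while staying strictly under $o(t(n))$ total time for the time reduction and $O(q(n))$ total queries for the query upper bound. The admissibility hypotheses of \cref{def:admissible-functions}—in particular the $\log^{2.01} n$ slack between $q$ and $t$, and the computability assumptions on $q$, $t$, $q^{-1}$, and $t^{-1}$, together with the theorem's extra hypothesis that $q^{-1}$ is computable in time $O(t \circ q^{-1}(n))$—are calibrated precisely to make this bookkeeping fit within the stated budgets.
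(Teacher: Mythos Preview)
Your plan matches the paper's approach closely: the same 3CNF query-hard block, a diagonalization-based hard language mapped through Spielman's code for the time-hard block, repetition to hit the target lengths, and a concatenation lemma that transfers both hardness modes while yielding the composite upper bound.

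There is one genuine gap. You assert that the \cite{Ben-SassonHR05} property $\cP^Q_m$ is ``decidable in $\poly(m)$ time,'' but the hard formulas $\varphi_m$ in that work arise from random expanders with strong two-sided unique-neighbor expansion, for which no explicit (let alone near-linear-time) construction is known; the property is decidable only \emph{non-uniformly}. This breaks your time upper bound: your tester cannot produce $\varphi_m$ from $\bin{n}$ in order to check the satisfying-assignment block. The paper fixes this by redefining the query-hard property so that the input itself carries the formula: each string in $\cQ_n$ has the form $\cE(\langle \psi_m \rangle) \circ y(x)$, where $\psi_m$ ranges over \emph{all} 3CNFs of the appropriate size, $x$ over satisfying assignments of $\psi_m$, and $y(x)$ is a repetition-padded copy of $x$. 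The tester reads and decodes $\psi_m$ from the first block and then checks whether $x$ satisfies it, giving a uniform $O(n \log n)$-time decider; the $\Omega(n/\log n)$ query lower bound still follows by planting the (non-constructive) BHR05 formula and reducing. A second, related point you gloss over: for the time lower bound reduction you need to plant a YES instance of the query-hard block \emph{efficiently} (within $o(t(n))$ time), which the paper handles via an explicit $O(n \log n)$-time YES query provider for $\cQ$. Once you incorporate these modifications, the rest of your plan goes through essentially as in the paper.
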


Assuming SETH, we obtain a similar hierarchy, where the time complexity of the properties is more tightly characterized. Specifically, for all suitable $q(n)$ and $t(n)$, there exist properties with query complexity $\widetilde\Theta(q(n))$ and time complexity within a small polynomial factor of $t(n)$.
% Next, assuming SETH, we show a similar result where the time
% complexity of testing the properties in the hierarchy is more tightly characterized. Specifically,
% we establish the existence of properties with query complexity
% $\widetilde\Theta(q(n))$ and time complexity roughly $t(n)$ (up to a small
% polynomial gap), for all appropriate choices of $q(n)$ and $t(n)$.

\begin{theorem}[Strong hierarchy from SETH]
    \label{thm:strong}
    Let $\epsilon \in (0, 1)$ be a sufficiently small constant and $\gamma > 0$ be any arbitrary constant.
    %\srnote{Do we need $\gamma$ to be sufficiently small? Or it works for all $\gamma>0$?}}
    % Under SETH, the following holds for all sufficiently small constants $\epsilon, \gamma > 0$.
    Let $q, t : \bN \to \bN$ be admissible query and time functions
    (as in \cref{def:admissible-functions})
    such that
    $t(n) \le 2^{O\left(\frac{q(n)}{\log q(n)}\right)}$ and
    $\log t(n) \log\log t(n)$ has slope $o(1)$.
    Then, under SETH, there exists a property $\cP\in \QueryTime_\eps\left(q(n), t(n)^{1+\gamma}\log n)\right)$ 
    such that
    %with the following complexity lower bounds:
    %$$
     $Q_{\eps,n}(\cP)\geq \Omega\left(\frac{q(n)}{\log q(n)}\right)$ and %\text{\ \  and \ \ }   
     $T_{\eps,n}(\cP)\geq \Omega\left(\frac{t(n)}{\log^{2.01} n}\right).$
    %$$
\end{theorem}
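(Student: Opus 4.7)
The plan is to construct a property $\cP$ by combining, through a product with repetition, two orthogonal sources of hardness: a property that enforces the query lower bound, and an instance of a SETH-hard language that enforces the time lower bound. For the query part, I would use the 3CNF property $\cP_0 \subseteq \zo^m$ of \cite{Ben-SassonHR05}, which requires $\Omega(m)$ queries to $\eps$-test yet is $\poly(m)$-time decidable given the full input. For the time part, I would fix $k$-CNF-SAT as the language $L$, where $k$ is large enough that, under SETH, deciding $L$ on $\ell$-bit inputs requires time $\Omega(2^{\ell(1-\gamma/2)})$. To embed $L$-instances into property-testing inputs, I would use Spielman's linear-time encodable and decodable asymptotically good code $E : \zo^\ell \to \zo^{C\ell}$ with relative distance $\delta > 0$, which guarantees that $E(w)^r$ is $\Omega(1)$-far from $E(w')^r$ whenever $w \ne w'$.

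Given input length $n$, I would set $m_n = \Theta(q(n)/\log q(n))$ by computing $q^{-1}$, and $k_n = \floor{\log_2 t(n) / (1 - \gamma/2)}$, so that the SETH lower bound on $k_n$-bit instances of $L$ is $\Omega(2^{k_n(1-\gamma/2)}) = \Omega(t(n))$. The hypothesis $t(n) \leq 2^{O(q(n)/\log q(n))}$ ensures $k_n = O(m_n)$, and the slope condition on $\log t(n) \log \log t(n)$ keeps $k_n$ a sufficiently smooth, uniformly computable function of $n$. I would then define
\[
    \cP_n \define \left\{ (y_0^{r_A},\, E(w)^{r_B}) \;:\; y_0 \in \cP_0 \cap \zo^{m_n},\ w \in L \cap \zo^{k_n} \right\},
\]
with repetition counts $r_A, r_B$ chosen so that each half has length $\Theta(n)$. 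The upper-bound tester would query one full copy from each half (using $m_n + Ck_n = O(q(n))$ queries), perform $O(1/\eps)$ spot-checks to verify the repetition structure, then invoke the $\poly(m_n)$-time decider for $\cP_0$ on the recovered $y_0$ and the $2^{k_n}\poly(k_n) = t(n)^{1+o(1)}$-time SAT algorithm on the decoded $w$. Including the $\log^{1+o(1)} n$ RAM cost per arithmetic operation or input query, the total running time is $O(t(n)^{1+\gamma} \log n)$.

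For the matching lower bounds, I would prove a concatenation lemma: any $\eps$-tester for $\cP$ yields $\Theta(\eps)$-testers for the two component subproperties. Given $y_0' \in \zo^{m_n}$ to test for $\cP_0$, embed it as $(y_0'^{r_A}, E(w_0)^{r_B})$ for any fixed $w_0 \in L$; the constant fraction of the input occupied by each half ensures that $\eps$-farness of $y_0'$ from $\cP_0$ implies $\Theta(\eps)$-farness of the combined input from $\cP$, and a symmetric embedding handles $L$. The Ben-Sasson-Harsha-Raskhodnikova bound then yields the required $\Omega(m_n) = \Omega(q(n)/\log q(n))$ query lower bound, while the time lower bound follows because Spielman decoding recovers a candidate $w'$ in linear time, so any tester running in time $o(t(n)/\log^{2.01} n)$ would beat SETH on $k_n$-bit instances of $L$. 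The main obstacle is the careful RAM-model bookkeeping: absorbing $\log n$ factors from multiplications and indirect accesses into both the upper bound $t(n)^{1+\gamma} \log n$ and the lower bound $t(n)/\log^{2.01} n$, and translating the universal-constant SETH guarantee precisely into the stated tolerance $\gamma$ after accounting for all polylogarithmic overheads.
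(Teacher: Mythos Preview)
Your high-level architecture matches the paper's: combine the Ben-Sasson--Harsha--Raskhodnikova 3CNF property for query hardness with a SETH-hard SAT language passed through Spielman's code for time hardness, then glue the two pieces by repetition and concatenation. The concatenation lemma and the two embedding reductions for the lower bounds are essentially correct.

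However, there are two genuine gaps.

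\textbf{The SETH bit-length confusion.} You assert that SETH gives a $2^{\ell(1-\gamma/2)}$ lower bound on $\ell$-\emph{bit} $k$-SAT inputs, and accordingly set $k_n = \lfloor \log_2 t(n)/(1-\gamma/2) \rfloor$. But SETH is stated in terms of the number of \emph{variables}: a $k$-CNF on $n$ variables with $O(n)$ clauses (via sparsification) has encoding length $\ell = \Theta(n\log n)$, so on $\ell$-bit inputs the lower bound is only $2^{\Theta(\ell/\log\ell)}$. With your $k_n = \Theta(\log t(n))$, the SETH lower bound you actually get is $2^{\Theta(k_n/\log k_n)} = t(n)^{\Theta(1)/\log\log t(n)} = t(n)^{o(1)}$, which is far too weak. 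The correct choice is $k_n \approx \frac{A_\gamma}{1-O(\gamma)} \log t(n)\,\log\log t(n)$, and this is exactly why the theorem requires $\log t(n)\log\log t(n)$ to have slope $o(1)$: it is needed so that the map $n \mapsto k_n$ is (eventually) surjective, which in turn is required for the reduction from deciding $L$ back to testing $\cP$ at the appropriate input length. Your proposal neither sets $k_n$ correctly nor explains where the slope hypothesis enters.

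\textbf{Non-uniformity of the 3CNF property.} You claim $\cP_0$ ``is $\poly(m)$-time decidable given the full input.'' This is true only once the formula $\varphi_m$ is fixed; the hard formulas of \cite{Ben-SassonHR05} arise from random expanders with two-sided unique-neighbor expansion, for which no explicit near-linear-time construction is known. So there is no uniform $\poly(m)$-time decider across all $m$. The paper's fix is to make the formula part of the input: each string in the property carries a Spielman-encoded description of a 3CNF $\psi_m$ followed by (a repetition of) a satisfying assignment; the tester then reads and decodes the formula before checking the assignment. Without some such device, your upper-bound tester is not a single algorithm.
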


The condition that $\log t(n) \log\log t(n)$ has slope $o(1)$ can be viewed as a pointwise analogue of $t(n) \le 2^{o(n/\log n)}$. We argue this requirement is mild and explain its role in our arguments.
The slope condition is satisfied by natural choices such as $t(n) = \polylog n$, $\poly(n)$, $2^{\polylog n}$, $2^{n^\alpha}$ for $\alpha \in (0,1)$, and $2^{n/\log^{1+\gamma} n}$ for $\gamma > 0$. It suffices for $\log t(n) \log\log t(n)$ to be $o(n)$ and concave.
This condition arises because our lower bounds use a repetition scheme that embeds many small instances (of length $k$) into a larger input (of length $n \gg k$), reducing from the small to the large instance. For this to work, the mapping $n \mapsto k(n)$—essentially $\log t(n) \log\log t(n)$—must be surjective. The $o(1)$ slope ensures surjectivity and allows us to efficiently compute and invert this mapping within our reductions.\footnote{In fact, a sufficiently small constant slope would also suffice.}
A similar surjectivity condition appears in \cite{GoldreichKNR12} for query complexity hierarchies. Although a weaker “dense range” requirement could work, it would complicate our reductions (e.g., requiring careful padding of $k$-SAT encodings). Since the current condition already covers typical $t(n)$, we prioritize simplicity.

{\bf Organization of \Cref{sec:hierarchies}.}
%In \cref{sec:comp-model}, we define property testing and the log-cost RAM model for measuring time complexity.
\Cref{sec:hard-lang} constructs languages that are hard for randomized
algorithms in our model. \Cref{section:codes} describes codes used to map these
languages to hard properties. In \cref{sec:3cnf-property}, we design a property
with near-maximal query complexity that is still efficiently testable.
Finally, we combine these elements to obtain the main results: we present a general
construction in \cref{sec:hierarchies-proof-const}, and then apply it to prove
Theorems~\ref{thm:weak} and \ref{thm:strong} in \cref{sec:hierarchies-proof}.
%\cref{sec:hierarchies-proof-weak,sec:hierarchies-proof-strong}, respectively.

\subsection{Hard languages}
\label{sec:hard-lang}

In this section, we construct %outline the construction of
languages that are hard to decide, \ie %languages that have 
with high randomized time complexity.
In \cref{sec:hard-lang-ram}, we unconditionally prove the existence of such languages in the RAM model.
The best upper bound known for deciding such languages is exponential in the lower bound.\footnote{A randomized time hierarchy theorem would suffice to overcome this shortcoming.}
In \cref{sec:seth-language}, we prove the existence of a language that is hard to decide and has a nearly matching upper bound, assuming SETH.

\subsubsection{Hard language in the RAM model}
\label{sec:hard-lang-ram}

In this section, we construct a language that is hard to decide in the randomized RAM model.
Our construction proceeds via diagonalization for the classical Turing machine (TM) model.
To translate between computation models, we use 
\cite[Theorem~2]{CookR73}, stated next.

\begin{theorem}[\cite{CookR73}]
    \label{thm:ram-tm-overhead}
   % \begin{enumerate}
   %     \item 
        If a language $A$ is decided
        %\srnote{Should we use ``language'' and ``decided''?}
        by a RAM within time $T(n) > n$, then $A$ is decided by some multitape TM within time $T(n)^2$.
        %\item 
        Conversely, if a TM decides $A$ within time $T(n) \geq n$, then some RAM decides $A$ within time $T(n) \log T(n)$.
    %\end{enumerate}
\end{theorem}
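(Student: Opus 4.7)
The plan is to prove the two directions separately using standard simulation arguments tailored to the log-cost RAM model.

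For the forward direction (a time-$T(n)$ RAM simulated by a multitape TM in time $O(T(n)^2)$), I would maintain the simulated RAM's memory on one work tape as a list of binary-encoded pairs $(\bin{i}, \bin{X_i})$ for every register that has been written to, separated by a fixed delimiter; a second tape holds the instruction pointer and serves as scratch space for arithmetic. The central invariant is that the total length of this memory representation never exceeds the cumulative log-cost charged to the RAM so far, since the log-cost accounting charges at least one time unit per bit of every operand manipulated and per bit of every index used. Hence the list stays of length $O(T(n))$ throughout. To simulate one RAM instruction, the TM scans the memory tape to locate the register(s) it refers to in $O(T(n))$ time, performs the arithmetic or indirect operation bit by bit on the scratch tape in time proportional to operand length, and then splices the updated pair back into the list, shifting the tail if its length changes. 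Since tape scanning dominates and there are at most $T(n)$ RAM steps, the total simulation time is $O(T(n)^2)$.

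For the converse direction (a time-$T(n)$ TM simulated by a RAM in time $O(T(n)\log T(n))$), I would allocate, for each TM tape, a contiguous block of RAM registers indexed by cell position, together with a register storing the head's current position; additional registers hold the finite-state control. Simulating one TM step then reduces to an indirect fetch to read each tape symbol, a hard-coded transition-table lookup implemented via a constant number of comparisons and conditional jumps, an indirect store to write the new symbol, and an addition of $\pm 1$ to each head register. Every operand here has magnitude at most $T(n)$, so each instruction costs $O(\log T(n))$ in the log-cost model. Aggregating over $T(n)$ TM steps yields total RAM cost $O(T(n)\log T(n))$.

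The main technical care lies in the forward direction: the simulation must remain affordable despite the fact that a RAM register can in principle hold an integer of unbounded magnitude and sit at an arbitrarily large index. The log-cost accounting is precisely what prevents this from hurting us, since the sum of $\log|X_i|$ over active registers, plus the sum of $\log i$ over their indices $i$, is bounded by the RAM's total running time. This invariant is exactly what pins the tape length at $O(T(n))$ and makes the linear scan affordable; an analogous but simpler accounting handles indirect addressing (scan once to retrieve $X_j$, then scan again to locate $X_{X_j}$) within the same per-step budget, and the condition $T(n)>n$ ensures that the initial loading of the input onto the work tape is absorbed by the $O(T(n)^2)$ bound.
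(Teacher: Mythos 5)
Your proposal is correct, but note that the paper does not prove this statement at all: it is imported verbatim as Theorem~2 of Cook and Reckhow's 1973 paper, so there is no in-paper argument to compare against. What you have written is the standard simulation proof underlying that cited theorem, and both directions are sound: the forward direction correctly rests on the invariant that the log-cost accounting bounds the total bit-length of all written register indices and contents by $O(T(n))$, making each linear scan of the memory tape affordable, and the converse direction correctly observes that head positions and operands stay bounded by $T(n)$ so each simulated TM step costs $O(\log T(n))$ on the RAM. The only details left implicit are routine (interleaving registers across the multiple tapes and folding two-way-infinite tapes into nonnegative indices, and absorbing the initial input load using $T(n)\ge n$), none of which affects correctness.
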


%While 
\Cref{thm:ram-tm-overhead} is stated only for deterministic machines. However, since coin tosses have $O(1)$ cost in both the TM and RAM model, the theorem extends to randomized computation as well.
Next, we use \cref{thm:ram-tm-overhead} to construct a hard language in the RAM model.
%prove the existence of hard functions.

\begin{theorem}
    \label{thm:hard-lang-ram}
    Given a function $t(n) > n$ that is computable in time $O(t(n))$, there exists a language $L$ that is a.e.\ nonempty, which cannot be decided by a randomized RAM within time $O(t(n))$, but can be decided by a deterministic RAM in time $O\Big(2^{t(n)^{2.01}}\Big)$. %$O\left(t(n) \cdot 2^{t(n)^{2.01}}\right)$.
    %\srnote{Addressed the following comment (and fixed other small things). \renato{TODO: can we simplify the upper bound to $2^{t(n)^{2.01}}$ by first repeating the argument with
%$2^{t(n)^{2.005}}$, and then using the bound $t(n) 2^{t(n)^{2.005}} < 2^{t(n)^{2.01}}$?}}
\end{theorem}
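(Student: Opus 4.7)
The plan is to build $L$ by diagonalization in the multitape Turing machine (TM) model and then transfer the conclusion to the RAM model via \cref{thm:ram-tm-overhead}. Set $\tau(n) := t(n)^2 \cdot \lceil \log n \rceil$, so that $\tau(n) = \omega(t(n)^2)$ while $\tau(n) = o(t(n)^{2.01})$; the hypothesis that $t$ is computable in time $O(t(n))$ lets us compute $\tau(n)$ within the same budget. Fix a standard computable enumeration $M_1, M_2, \dots$ of randomized multitape TMs, and declare $x \in \zo^n$ to be \emph{canonical} iff its last bit is $0$, in which case its \emph{index} $\sigma(x) \in \{0, 1, \dots, 2^{n-1}-1\}$ is $x[1{:}n{-}1]$ read as a binary number (allowing leading zeros); otherwise $\sigma(x) = \bot$. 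Every $M_i$ is the target of exactly one canonical string at each length $n \ge \lceil \log_2(i+1) \rceil + 1$, while non-canonical strings form half of $\zo^n$.

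Define the decider $D$: on input $x$ of length $n$, if $\sigma(x) = \bot$ then accept; otherwise, compute $\tau(n)$ and brute-force simulate $M_{\sigma(x)}$ on $x$ for $\tau(n)$ steps over each of the $2^{\tau(n)}$ possible coin-flip tapes (a TM running for $\tau(n)$ steps reads at most $\tau(n)$ coin flips, so this enumeration is exhaustive), compute the acceptance probability $p$ exactly, and output the complement of the majority vote (i.e., $1$ if $p \le 1/2$, and $0$ otherwise). Let $L$ be the language decided by $D$.

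For the upper bound, $D$ runs on a TM in time $O(2^{\tau(n)} \cdot \tau(n)^{O(1)})$ using a fixed efficient universal TM, and the second half of \cref{thm:ram-tm-overhead} turns this into a RAM bound of the same order up to a logarithmic factor, namely $2^{\tau(n) + O(\log \tau(n))} \le 2^{t(n)^{2.01}}$ for all sufficiently large $n$: since $t(n) > n$, we have $\log n \le t(n)^{0.01}$ eventually, so $\tau(n) = t(n)^2 \log n \le t(n)^{2.01}$. For the lower bound, suppose toward contradiction that some randomized RAM decides $L$ in time $O(t(n))$; by the first half of \cref{thm:ram-tm-overhead}, some randomized multitape TM $M = M_i$ decides $L$ in time $O(t(n)^2) \le \tau(n)$ for all large $n$. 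For each large enough $n$, the unique canonical $x \in \zo^n$ with $\sigma(x) = i$ yields a contradiction: $D$'s simulation is exact, so $L(x)$ equals the complement of the majority of $M_i(x)$, and $\Pr[M_i(x) = L(x)] \le 1/2 < 2/3$ contradicts $M$'s correctness. Finally, $L$ is a.e.\ nonempty because half of all length-$n$ strings are non-canonical and thus accepted by $D$.

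The main obstacle is bookkeeping: ensuring that the constants hidden by the Cook--Reckhow translations and by the universal TM simulation fit inside the small exponent gap between $\tau(n)$ and $t(n)^{2.01}$, which is guaranteed because $t$ is unbounded. A secondary technical point is the standard observation that, for \emph{randomized} time-$\tau$ TMs, the acceptance probability can be computed \emph{exactly} by exhaustive enumeration over the at most $2^{\tau(n)}$ possible coin-flip tapes; this circumvents the usual obstacles to a tight randomized time hierarchy, at the price of the exponential blow-up that appears in the theorem's upper bound.
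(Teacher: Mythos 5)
Your proof is correct and follows essentially the same strategy as the paper's: diagonalize against randomized multitape TMs by exhaustively simulating all $2^{\tau(n)}$ random tapes within a budget slightly above $t(n)^2$, and transfer both the lower and upper bounds to the RAM model via the Cook--Reckhow translation (extended to randomized machines). The only differences are cosmetic --- you index machines via an explicit canonical-string encoding and get a.e.\ nonemptiness from the accepted non-canonical half of each length, whereas the paper diagonalizes against machine descriptions directly and uses paddability of the empty-language machine; both devices also preserve the infinitely-often hardness the paper needs later.
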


\begin{proof}
    %Given a function $t(n) > n$ that is computable in time $O(t(n))$, we construct a language $L$ such that no $O(t(n)^2)$-time probabilistic TM decides $L$.
    %This is achieved by the following procedure.
    
    Let $L$ be the language decided by the following procedure: 
    
    {\em ``On input  $x$ of length $n$, where $x=\langle M\rangle$ for some probabilistic TM $M$,
    \begin{itemize}
        %\item On input  $\langle M\rangle$ of length $n$, where $M$ is a probabilistic TM, %$x$ of length $n$, consider the probabilistic Turing machine $M$ whose encoding is $x$.
        \item Simulate $M(x)$ for $t(n)^{2.005}$ steps 
        on all %possible 
        random %ized 
        branches.
        \item If $M$ halts on all branches and outputs bit $b$ on at least $\frac 23$ of them, output $\overline{b}$; else, output 0''\!\!.
    \end{itemize}}
    The language $L$ cannot be decided by an $O(t(n)^2)$-time probabilistic TM.
    Moreover, $L$ is a.e.\ nonempty, because beyond a certain length, the TM recognizing the empty language appears at every input length, and the negation of its output is a 1.
    By \cref{thm:ram-tm-overhead},  no $O(t(n))$-time randomized RAM decides $L$.
    However, $L$ is decidable in time $2^{t(n)^{2.005}}$ by a deterministic TM, and thus in time $O\left(t(n)^{2.005} \cdot 2^{t(n)^{2.005}}\right)=O\left(2^{t(n)^{2.01}}\right)$ by a deterministic RAM.
\end{proof}

\begin{remark}
    \label{rem:hard-lang-ram-io}
    The language $L$ is hard to decide even infinitely often
    %\rfnote{I feel like this expression is ambiguous, since it can also mean the exact opposite? \sr{If we use this remark, give it a number and reference it later.}}
    for RAMs running in time $O(t(n))$, because once the machine description $\langle M \rangle$ appears in our sequence of inputs, it appears at every input length hence (due to paddability of machine descriptions), and is thus diagonalized against at every subsequent input length.
    %This means that no RAM running in time $O(t(n))$ can decide instances of $L$ even infinitely often.
\end{remark}

\subsubsection{Hard language from SETH}
\label{sec:seth-language}

\begin{assumption}[SETH]
    \label{assumption:seth}
    For all $\gamma \in (0,1)$, there exists $k \in \bN$ such that no randomized
    algorithm decides \textsc{$k$-SAT} instances on $n$ variables in time
    $2^{(1-\gamma) n}$ with error probability at most $\frac 13$.
\end{assumption}

By the Sparsification Lemma \cite{ImpagliazzoPZ01}, we can assume that the \textsc{$k$-SAT}
instances above have only $m = O(n)$ clauses.

\begin{fact}[Application of the Sparsification Lemma {\cite{ImpagliazzoPZ01}}]
    \label{prop:sparse-seth}
    Under \cref{assumption:seth}, for all $\gamma > 0$,
    there exist $k_\gamma \in \bN$ and $C_\gamma > 0$ such that no randomized
    algorithm decides \textsc{$k_\gamma$-SAT} instances with $n$
    variables and at most
    $\floor{C_\gamma n}$ clauses in time $2^{(1-\gamma) n}$ with error probability at
    most $\frac 13$.
\end{fact}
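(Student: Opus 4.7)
The plan is to obtain \Cref{prop:sparse-seth} by a standard contrapositive argument: assuming a fast randomized algorithm for sparse $k$-SAT, combine the Sparsification Lemma with randomized error amplification to obtain a fast randomized algorithm for general $k$-SAT, contradicting SETH. The only subtlety is controlling the error incurred by running the hypothetical sparse-SAT solver on the (potentially exponentially many) sparse instances produced by the reduction.

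First, I would recall the Sparsification Lemma of \cite{ImpagliazzoPZ01} in the following convenient form: for every $k \in \bN$ and $\eps > 0$, there exist $C = C(k,\eps) > 0$ and an algorithm that, given a \textsc{$k$-SAT} instance $\varphi$ on $n$ variables, produces in deterministic time $2^{\eps n} \cdot \poly(n)$ a list of at most $2^{\eps n}$ \textsc{$k$-SAT} instances $\varphi_1,\dotsc,\varphi_N$, each on $n$ variables and with at most $\floor{C n}$ clauses, such that $\varphi$ is satisfiable if and only if some $\varphi_i$ is satisfiable. Second, I would set up the contrapositive: fix $\gamma > 0$, and suppose toward contradiction that for \emph{every} $k \in \bN$ and every $C > 0$ there is a randomized algorithm $\cA_{k,C}$ deciding sparse \textsc{$k$-SAT} (with $n$ variables and at most $\floor{C n}$ clauses) in time $2^{(1-\gamma)n}$ with error at most $\tfrac 13$.

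The next step is to apply SETH with a milder slack $\gamma' \in (0, \gamma)$: by \Cref{assumption:seth} there exists $k = k_{\gamma'}$ such that no randomized algorithm decides \textsc{$k$-SAT} in time $2^{(1-\gamma')n}$ with error at most $\tfrac 13$. Now pick $\eps > 0$ small enough that $1 - \gamma + \eps < 1 - \gamma'$ (e.g., $\eps < \tfrac{\gamma - \gamma'}{2}$), invoke the Sparsification Lemma with parameters $(k, \eps)$ to obtain $C = C(k,\eps)$, and consider the hypothesized algorithm $\cA_{k,C}$. To solve a general \textsc{$k$-SAT} instance $\varphi$, I would run the sparsifier to produce $\varphi_1,\dotsc,\varphi_N$ with $N \le 2^{\eps n}$, run an amplified version of $\cA_{k,C}$ on each $\varphi_i$ with failure probability at most $\tfrac{1}{3N}$ (achieved by $O(\log N) = O(\eps n)$ independent repetitions and majority vote), and output $1$ iff some $\varphi_i$ is accepted. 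By a union bound over the $N$ calls, the overall error is at most $\tfrac 13$, and the total running time is
\[
2^{\eps n} \cdot \poly(n) \;+\; N \cdot 2^{(1-\gamma)n} \cdot O(\eps n) \;\le\; 2^{(1-\gamma+\eps)n} \cdot \poly(n) \;\le\; 2^{(1-\gamma')n},
\]
for all sufficiently large $n$, contradicting SETH for the chosen $k$.

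Taking $k_\gamma \define k$ and $C_\gamma \define C(k,\eps)$ then yields the claim. The main—essentially only—obstacle is the interaction between randomization and the potentially exponential number of sparse instances: one must drive the per-call error down to $O(1/N) = 2^{-\Theta(\eps n)}$ before taking the union bound, and verify that this amplification costs only a polynomial-in-$n$ factor so that the slack $\eps$ absorbs it and still falls below $\gamma - \gamma'$. Once $\eps$ is chosen with this in mind, the rest of the argument is bookkeeping.
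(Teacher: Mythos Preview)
Your argument is correct and is exactly the standard derivation of this fact from the Sparsification Lemma; the paper itself states \Cref{prop:sparse-seth} as a known consequence and does not spell out a proof, so there is nothing to compare against beyond noting that your contrapositive-plus-amplification route is the intended one.
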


%Fix a natural binary encoding of \textsc{$k$-SAT} instances. For a fixed $k$, an
%instance with $n$ variables and $O(n)$ clauses can be described using $O(n \log
%n)$ bits. More precisely, there exists a constant $D > 0$ (depending on the
%encoding) such that each instance with $n$ variables and at most $m$ clauses can
%be encoded in at most $D k m \log n$ bits.

Next, we define a hard language that serves as a key ingredient in our reductions. 

%Sofya: I think the sentence below does not add anything to the first sentence of the following definition. (If anything, it is only confusing.)
%\renato{Fix a natural binary encoding of \textsc{$k$-SAT} instances, so that for a fixed $k$, an instance with $n$ variables and $O(n)$ clauses can be described using $O(n \log n)$ bits.}

\begin{definition}
    \label{def:hard-language}
    Fix a binary encoding of CNF formulas and a constant $D > 0$, such that for every $k$-CNF formula $\phi$ with $n$ variables
    and at most $m$ clauses, its encoding $\bin{\phi}$ has $D k m \log n$ bits.
   % % \renato{Let $D > 0$ be a constant such that $k$-SAT instances with $n$ variables
   %  and at most $m$ clauses can be encoded in $D k m \log n$ bits.}
    Fix $\gamma > 0$. Let $k_\gamma \in \bN$, $C_\gamma > 0$ be the
    corresponding constants from \cref{prop:sparse-seth}.
    For each $N \in
    \bN$, let $n \define \floor{\frac{N}{D C_{\gamma/2} k_{\gamma/2} \log N}}$ and
    define the language $L^{(\gamma)} \define
    \bigcup_{N \in \bN} L^{(\gamma)}_N$, where 
    $$L^{(\gamma)}_N \define \left\{\bin{\phi}: \phi \text{ is a satisfiable instance of \textsc{$k_{\gamma/2}$-SAT} on $n$ variables and $\floor{C_{\gamma/2} n}$ clauses}\right\}.$$
    This is well-defined because $N$ bits suffice for the encoding by the choice of $D$.
    %by our definition of $D$ above
    % \srnote{If we want to reference the definition of $D$, we should put it in
    % the definition environment. But maybe we can avoid it. UPDATE: Later, we use
    % $C_\eps$ for this constant.
    % \renato{Brought $D$ into the definition for now. The constant $C_\epsilon$ is related
    % but not identical, maybe we can clean it up further, leaving this comment in for now.}}
\end{definition}

Combining SETH with the na\"ive upper bound of $2^n \poly(n)$ for $k$-SAT gives the
following.\footnote{Although faster exponential-time randomized algorithms for $k$-SAT exist (e.g., \cite{PaturiPSZ05}), they do not help our argument, as our bounds are only tight up to constant factors in the exponent.
%While there exist improved exponential-time randomized algorithms for $k$-SAT,
%\eg \cite{PaturiPSZ05}, this would not help our argument because our bounds are only tight up to a
%constant multiplicative loss in the exponent.
}.

\begin{lemma}[Bounds for hard language]
    \label{lemma:bounds-hard-language}
    For all constant $\gamma > 0$, there exists a constant $A_\gamma > 0$ such that the
    language $L^{(\gamma)}$ is decidable in $O\Big(2^{\frac{N}{A_\gamma \log N}}\Big)$ time.
    Furthermore, under \cref{assumption:seth}, no randomized algorithm decides $L^{(\gamma)}$ in
    $2^{\frac{(1-\gamma) N}{A_\gamma \log N}}$ %$2^{\frac{(1-2\gamma) N}{A_\gamma \log N}}$
    time with error probability at most $\frac 13$.
\end{lemma}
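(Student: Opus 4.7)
The proof splits into a deterministic upper bound and a SETH-based randomized lower bound. For the upper bound, the plan is brute-force satisfiability: on an input $x$ of length $N$, compute $n := n(N) = \floor{N/(A \log N)}$ with $A := D C_{\gamma/2} k_{\gamma/2}$, decode the appropriate prefix of $x$ as a $k_{\gamma/2}$-CNF formula $\phi$ on $n$ variables with $\floor{C_{\gamma/2} n}$ clauses, and exhaustively test all $2^n$ truth assignments in time $2^n \cdot \poly(n) = 2^{N/(A \log N)} \cdot \poly(N)$. Choosing $A_\gamma$ strictly less than $A$ absorbs the polynomial factor into the exponent, since $N/(A_\gamma \log N) - N/(A \log N) = \Omega(N/\log N)$ dominates $O(\log N)$ for all sufficiently large $N$.

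For the lower bound, the plan is a padding reduction from sparsified $k_{\gamma/2}$-SAT, which is hard under SETH by \cref{prop:sparse-seth}. Suppose for contradiction that a randomized algorithm $\cA$ decides $L^{(\gamma)}$ in time $T(N) := 2^{(1-\gamma) N/(A_\gamma \log N)}$ with error at most $1/3$. Given an instance $\phi_0$ of $k_{\gamma/2}$-SAT on $n_0$ variables with at most $\floor{C_{\gamma/2} n_0}$ clauses, I would first find (by binary search in $\poly(n_0)$ time) the smallest $N$ with $n(N) \ge n_0$, then pad $\phi_0$ to have exactly $n(N)$ variables and $\floor{C_{\gamma/2} n(N)}$ clauses---using dummy variables that appear in no clause and tautological $k_{\gamma/2}$-clauses, both of which preserve satisfiability and $k_{\gamma/2}$-CNF form---encode the padded formula under the fixed encoding of \cref{def:hard-language}, pad the resulting string to length exactly $N$, and run $\cA$ on it. Surjectivity of the map $N \mapsto n(N)$ onto all sufficiently large integers (which lets me match $n_0$ up to an additive constant) follows from the fact that $N/(A \log N)$ has derivative $o(1)$, so $n(N)$ is incremented by at most $1$ per unit step of $N$.

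The critical calculation is that $T(N) + \poly(n_0) \le 2^{(1-\gamma/2) n_0}$, which would contradict \cref{prop:sparse-seth}. Since $n(N) \le n_0 + O(1)$, one has $N/(A \log N) \le n_0 + O(1)$, so $T(N) \le 2^{(1-\gamma)(A/A_\gamma)(n_0 + O(1))}$. The main technical point---and the only place where the choice of $A_\gamma$ matters---is picking a single constant that makes both the upper and lower bounds go through. This is resolved by the slack between the exponents $1-\gamma$ and $1-\gamma/2$ afforded by SETH: any $A_\gamma$ in the interval $[A(1-\gamma)/(1-\gamma/2),\ A)$ satisfies both requirements. Taking, for instance, $A_\gamma := A(1-\gamma)/(1-3\gamma/4)$, one computes $(1-\gamma)(A/A_\gamma) = 1-3\gamma/4$, so the reduction yields a randomized $k_{\gamma/2}$-SAT algorithm running in time $2^{(1-3\gamma/4)n_0 + O(1)} \ll 2^{(1-\gamma/2) n_0}$ with error at most $1/3$, contradicting \cref{prop:sparse-seth} and completing the proof.
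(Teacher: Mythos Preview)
Your proof is correct and follows the same overall plan as the paper: brute-force for the upper bound, SETH via the sparsification lemma for the lower bound, and a choice of $A_\gamma$ strictly smaller than $A \define D C_{\gamma/2} k_{\gamma/2}$ so that both bounds hold simultaneously. Your specific choice $A_\gamma = A(1-\gamma)/(1-3\gamma/4)$ works; the paper uses the slightly different $A_\gamma = (1-\gamma/4)A$, but any constant in the interval you identified suffices.

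The one genuine difference is in the lower-bound argument. You reduce \emph{from} sparse $k_{\gamma/2}$-SAT on $n_0$ variables \emph{to} $L^{(\gamma)}$: given $\phi_0$, you search for a suitable $N$, pad variables and clauses, re-encode, and run the hypothetical $L^{(\gamma)}$-decider. The paper instead observes the opposite direction directly: by definition, $L^{(\gamma)}_N$ \emph{is} the set of (encoded) satisfiable sparse $k_{\gamma/2}$-SAT instances on $n = n(N)$ variables, so any decider for $L^{(\gamma)}$ on inputs of length $N$ is already a decider for those instances, and hence must take time at least $2^{(1-\gamma/2)n}$ by \cref{prop:sparse-seth}. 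Converting $2^{(1-\gamma/2)n}$ to a bound in $N$ is then a two-line calculation. This avoids the binary search for $N$, the surjectivity argument for $n(N)$, and the variable/clause padding entirely. Your route is perfectly valid and arguably more self-contained (it makes explicit the clause-padding that is implicit in the paper's appeal to \cref{prop:sparse-seth}, whose hard instances have \emph{at most} $\floor{C_{\gamma/2} n}$ clauses), but the paper's direction is shorter.
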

\begin{proof}
    Let $A_{\gamma} \define (1 - \gamma/%2
    4) D C_{\gamma/2} k_{\gamma/2}$ for the constants $D, C_{\gamma/2},
    k_{\gamma/2}$ from \cref{def:hard-language}. The na\"ive brute force algorithm decides
    $L^{(\gamma)}$ in time $2^n \poly(n)$, where $n = \floor{\frac{N}{D C_{\gamma/2} k_{\gamma/2} \log N}}$
    is the number of variables in the \textsc{$k$-SAT} instance. For all sufficiently large $N$,
    this upper bound is
    \[
        2^n \poly(n)
        \le 2^{\frac{N}{D C_{\gamma/2} k_{\gamma/2} \log N}} \poly(N)
        = 2^{\frac{(1 - \gamma/4) N}{A_\gamma \log N} + O(\log N)}
        \le 2^{\frac{N}{A_{\gamma} \log N}} \,.
    \]
    On the other hand, under \cref{assumption:seth}, by \cref{prop:sparse-seth},
    any randomized algorithm  deciding $L^{(\gamma)}$ with error probability at most
    $\frac 13$  runs in time at least
    \[
        2^{(1-\gamma/2) n}
        = 2^{(1-\gamma/2) \floor{\frac{N}{D C_{\gamma/2} k_{\gamma/2} \log N}}}
        \ge 2^{(1-\gamma/2) \left(\frac{(1 - \gamma/%2
        /4) N}{A_\gamma \log N} - 1\right)}
        \ge 2^{(1-\gamma/2) \left(\frac{(1 - \gamma/2) N}{A_\gamma \log N}\right)}
        \ge 2^{\frac{(1 - \gamma) N}{A_{\gamma} \log N}} \,. \qedhere
    \]
\end{proof}

\subsection{Efficiently constructable, encodable, and decodable codes}
\label{section:codes}

Our reductions use error-correcting codes with constant rate and relative
distance in two ways:
\begin{enumerate}
    \item To transform inputs to a hard decision problem into well-separated
    codewords, which serve as inputs for a property that is
    hard to test.

    \item To transform a family of hard property testing problems with an
    efficient \emph{non-uniform} decision procedure into a problem with
    an efficient \emph{algorithm} by encoding advice into the input.
    (The code ensures that distance is preserved.)
\end{enumerate}

Since we care about both query and time complexity, we need codes with efficient
construction, encoding, and decoding. We get this by combining the linear-time
encodable and decodable codes of \cite{Spielman96} with efficient constructions
of expander graphs. Specifically, we use the Zig-Zag construction of
\cite{ReingoldVW00}, which, as noted in \cite{BrakerskiN13}, is linear-time in
the Word RAM model. In our log-cost RAM model, this yields near-linear time.

\begin{theorem}[Adapted from \cite{Spielman96}]
    \label{thm:code}
    There exist constants $r, \delta \in (0,1)$, where $\frac 1 r\in\bN$, and a family $\cE =
    \left\{\cE_n\right\}_{n \in \bN}$ of error-correcting codes such that %, %for
    each %$n \in \bN$, code 
    $\cE_n : \zo^n \to
    \zo^{n/r}$
    has relative
    distance at least $\delta$ (and rate $r$). Moreover, 
    $O(n \log n)$ time suffices for constructing $\cE_n$, encoding messages of length
    $n$, and decoding codewords of length
    $\frac nr$.
    %corrupted at an
    %$O(\delta)$\srnote{Is the use of asymptotic notation here clear? We mean there exists a function like that, rather than for all functions like that. I think it is ambiguous.} fraction of bits.
\end{theorem}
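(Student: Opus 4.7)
\medskip

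The plan is to instantiate Spielman's expander-based code family while carefully tracking the costs of producing the underlying expanders inside our logarithmic cost RAM model. Spielman's original construction gives, for some constants $r' \in (0,1)$ and $\delta' > 0$, a family of codes $\cE'_n : \zo^n \to \zo^{n/r'}$ with relative distance $\delta'$, linear-time encoding, and linear-time decoding in the Word RAM model, \emph{assuming} access to a description of the underlying bipartite expander graphs. My task therefore splits into (i) supplying an efficient construction of these expanders and a description of $\cE_n$, (ii) translating the resulting linear-time Word RAM bounds into the $O(n \log n)$ bounds required by our log-cost RAM model, and (iii) adjusting parameters so that $\frac 1 r \in \bN$.

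For step (i), I would use the Zig-Zag expander construction of Reingold, Vadhan, and Wigderson, which yields an explicit infinite family of constant-degree expanders whose neighbor function can be evaluated in time $\poly(\log n)$ on $O(\log n)$-bit registers. As observed by Brakerski and Naor, the entire adjacency structure of the relevant $n$-vertex expander can be written out in time linear in $n$ in the Word RAM model. Feeding these graphs into Spielman's recursive construction produces a full description of $\cE_n$ (together with the data structures used by the encoder and decoder) in linear Word RAM time, and both the encoding and decoding procedures themselves run in linear Word RAM time.

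For step (ii), I would simply account for the cost of each basic operation in \cref{table:ram}. Every arithmetic operation, indirection, or indexing step in Spielman's algorithms and in the Zig-Zag neighbor function is performed on a register of magnitude at most $\poly(n)$, and therefore incurs $O(\log n)$ cost per operation instead of the $O(1)$ cost assumed by the Word RAM. Since the total number of such operations in constructing, encoding, and decoding $\cE_n$ is $O(n)$, the total cost in the log-cost model is $O(n \log n)$, matching the bound in the theorem statement.

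Finally, for step (iii), I would convert $r'$ into the rate $r \define 1/\lceil 1/r' \rceil \le r'$ by padding each codeword of $\cE'_n$ with zeros out to length $n / r$; this preserves encoding and decoding time up to constant factors, and only reduces the relative distance by a multiplicative constant, so an appropriate constant $\delta$ can still be chosen. The main obstacle I anticipate is bookkeeping: Spielman's presentation is tailored to a uniform Word RAM, whereas our model charges for pointer arithmetic, so some care is needed to verify that every traversal of the expander's edge list, every indexing into the message, and every recursive call stays within the promised $O(\log n)$-per-operation budget. Provided one uses a representation in which vertex and edge indices fit into $O(\log n)$ bits—which is automatic for the Zig-Zag construction at the parameters above—this verification is routine, and the theorem follows.
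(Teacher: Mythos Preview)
Your proposal is correct and follows essentially the same approach as the paper: instantiate Spielman's codes with the Zig-Zag expanders of Reingold--Vadhan--Wigderson (using the Brakerski--Naor observation that the construction is linear-time in Word RAM), and then absorb the $O(\log n)$ per-operation overhead of the log-cost RAM model to obtain $O(n \log n)$ overall. The only cosmetic difference is how you enforce $1/r \in \bN$: you pad codewords with zeros, whereas the paper simply notes that Spielman's theorem holds for all sufficiently small constant rate and picks $r$ with $1/r$ an integer from the outset; both adjustments are immediate.
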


In \Cref{thm:code}, we choose $r$ so that $\frac 1 r$ is an integer, without loss of generality, as the theorem in \cite{Spielman96} holds for
%We choose $r$ above so that $\frac 1r$ is an integer, which is without loss of generality since the theorem applies to 
all sufficiently small constant $r$.
\Cref{thm:code} is obtained by replacing the Ramanujan graphs in the construction of \cite[Section~5]{Spielman96} with the 
Zig-Zag graphs from \cite{ReingoldVW00}.
%expanders from \Cref{thm:zig-zag}. 
Spielman's proof (in \cite[Theorem~19]{Spielman96}) %only 
requires % the 
base graphs  with the second eigenvalue arbitrarily small compared to the degree $d$,
a condition satisfied by the Zig-Zag graps, as stated next.
%The next theorem shows that the Zig-Zag graphs 
%from \cite{ReingoldVW00} 
%satisfy this requirement.

\begin{theorem}[Implicit in \cite{ReingoldVW00}]
    \label{thm:zig-zag}
    For all constant $\epsilon \in (0,1)$, there exist constant $d \in \bN$ and
    an $O(n \log n)$-time algorithm that, given $n \in \bN$, constructs a
    $d$-regular graph on $\Theta(n)$ vertices with second-largest eigenvalue at
    most $\epsilon d$.
\end{theorem}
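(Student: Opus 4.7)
The plan is to derive \Cref{thm:zig-zag} from the explicit expander construction of Reingold, Vadhan, and Wigderson, accounting carefully for the log-cost RAM overhead. Given $\epsilon$, fix a sufficiently small normalized eigenvalue target $\alpha > 0$ and a sufficiently large constant degree $d_0 \in \bN$ (with both depending on $\epsilon$). First, use brute force over all graphs on $d_0^4$ vertices to obtain a $d_0$-regular base graph $H$ whose second normalized eigenvalue is at most $\alpha$; existence is guaranteed by a standard probabilistic argument, and since $|V(H)|$ is a constant, the search runs in $O(1)$ time. Record $H$ (rotation map or adjacency table) in $O(1)$ words of memory.

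Next, iteratively apply the zig-zag product: set $G_1 \define H^2$, and for $i \ge 1$ define $G_{i+1}$ to be the zig-zag product of $G_i^2$ with $H$. By induction each $G_i$ is $d_0^2$-regular, has exactly $d_0^{4i}$ vertices, and its second normalized eigenvalue satisfies the recursion $\lambda(G_{i+1}) \le \lambda(G_i)^2 + \alpha + \alpha^2$ (from the squaring-plus-zig-zag eigenvalue bound in \cite{ReingoldVW00}). Choosing $\alpha$ small enough makes this recursion contracting, so inductively $\lambda(G_i) \le \epsilon$ (normalized), i.e.\ at most $\epsilon d$ for $d \define d_0^2$. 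Given input $n$, take the smallest $k$ with $d_0^{4k} \ge n$, so $|V(G_k)| \in [n, d_0^4 n] = \Theta(n)$; since $k = O(\log n)$, any needed parameters of the recursion are computable in constant time.

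The efficiency claim is the heart of the argument. Building up $G_1, G_2, \dotsc, G_k$ level by level, the rotation map of $G_{i+1}$ at any vertex $(v, a) \in V(G_i) \times V(H)$ and edge label $(j_1, j_2) \in [d_0]^2$ is obtained by $O(1)$ rotation-map queries in $H$ (which are $O(1)$-time table lookups) together with $O(1)$ rotation-map queries in $G_i^2$, each of which is two rotation-map queries in $G_i$. Materializing $G_{i+1}$ as a flat adjacency table therefore requires $O(|V(G_{i+1})|)$ such lookups plus $O(1)$ integer operations per lookup; in the word-RAM model this is $O(|V(G_{i+1})|)$ time, as observed in \cite{BrakerskiN13}. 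Telescoping, $\sum_{i=1}^{k} O(|V(G_i)|) = O(|V(G_k)|) = O(n)$ word-RAM operations in total. In our log-cost RAM model, each arithmetic or indirect-addressing step on integers bounded by $|V(G_k)| = O(n)$ incurs an $O(\log n)$ factor, so the total cost is $O(n \log n)$, as required.

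The main obstacle is bridging computational models: one must specify the data structure (a flat array of size $\Theta(n)$ storing, for each vertex, its $d$ neighbors as integer labels of $O(\log n)$ bits) and verify that the zig-zag rotation map can be evaluated using only a constant number of indirect fetches and integer operations per output edge, so that the linear-time word-RAM argument of \cite{BrakerskiN13} really yields the claimed $O(n \log n)$ bound in our model. The remaining spectral bookkeeping, namely choosing $\alpha$ and $d_0$ consistently so that $\lambda(G_i) \le \epsilon$ survives the $G_{i+1} = G_i^2 \mathbin{\text{\textcircled{z}}} H$ recursion, is standard but should be spelled out to make the dependence of $d$ on $\epsilon$ explicit.
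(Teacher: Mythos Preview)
The paper does not prove this theorem at all: it is stated as ``Implicit in \cite{ReingoldVW00}'' and used as a black box, with the only justification being the surrounding sentence noting that the zig-zag construction is linear-time in the word-RAM model (citing \cite{BrakerskiN13}) and hence $O(n\log n)$ in the log-cost RAM model. Your proposal is thus not a rederivation of the paper's argument but a fleshing-out of what the paper merely cites.

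That said, your derivation is correct and is the standard way to unpack the RVW construction: brute-force a constant-size base graph $H$, iterate squaring-plus-zig-zag to get $G_i$ on $d_0^{4i}$ vertices with degree $d = d_0^2$, bound $\lambda(G_i)$ via the RVW recursion, and materialize each $G_i$ as an adjacency table level by level so that the word-RAM cost telescopes to $O(n)$ and the log-cost overhead gives the extra $\log n$ factor. One small point to make explicit: to guarantee $\lambda(G_i) \le \epsilon$ for all $i$, it is not enough that the recursion be ``contracting''; you need the invariant $\lambda_i \le \epsilon$ to be preserved, i.e.\ $\epsilon^2 + \alpha + \alpha^2 \le \epsilon$, which dictates how small $\alpha$ (and hence how large $d_0$) must be as a function of $\epsilon$. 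This is what you allude to in your last paragraph and is the only place where care is needed.
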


\subsection{A property requiring $\Omega(n / \log n)$ queries and $O(n \log n)$ time to test}
\label{sec:3cnf-property}

In this subsection, we construct a property $\cQ$ such that testing $\cQ_n$ has nearly maximal query
complexity, but can be done in near-linear time. % $\Omega(n / \log n)$, yet $\cQ_n$ admits a near-linear time tester. 
Our construction is based on the result of \cite{Ben-SassonHR05} establishing the existence of 3CNF
properties that are hard to test. They %show that there exists 
exhibit a family $\{\varphi_m\}_m$ of 3CNF
formulas with $O(m)$ clauses on $m$ variables such that testing whether an assignment $x \in \zo^m$
satisfies $\varphi_m$ requires $\Omega(m)$ queries. It is easy to check whether an
assignment satisfies a \emph{fixed} $\varphi_m$ in $O(m \log m)$ time.

%Then \cite{Ben-SassonHR05} prove the following.

%\begin{fact}[\cite{Ben-SassonHR05}]
%    \label{fact:3cnf-hard}
%    For each sufficiently small %constant 
%    $\epsilon > 0$, there exist a constant $A_\epsilon > 0$
%    and a family $\{\varphi_m\}_{m \in \bN}$ of 3CNF formulas with at most $A_\epsilon m$ clauses on
%    $m$ variables such that $\epsilon$-testing $\cS_\varphi$ requires $\Omega(m)$ queries.
%\end{fact}

% RF: The dependence of $A_\epsilon$ on $\epsilon$ was unnecessary, because
% making $\epsilon$ smaller only
% makes the testing problem harder. So we can choose a single $\epsilon_0$ and $A$ once and for all and
% obtain the lower bound for all sufficiently small $\epsilon$.

\begin{fact}[\cite{Ben-SassonHR05}]
    \label{fact:3cnf-hard}
    Let $S(\varphi_m) \subset \zo^m$ be the set of satisfying assignments for a 3CNF formula $\varphi_m$ on $m$ variables and
$\cS_\varphi \define \{ S(\varphi_m) \}_m$ be the property obtained from a given family
$\{\varphi_m\}_m$ of 3CNF formulas.
    There exist constants $\epsilon_0 \in (0,1)$ and $A > 0$,
    and a family $\{\varphi_m\}_{m \in \bN}$ of 3CNF formulas with at most $A m$ clauses on
    $m$ variables such that $\epsilon_0$-testing $\cS_\varphi$ requires $\Omega(m)$ queries.
\end{fact}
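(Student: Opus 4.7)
The plan is to realize $\varphi_m$ as the 3-CNF encoding of an expander-based sparse linear code for which testing membership requires linearly many queries. First, I would fix an explicit family of bipartite graphs $G_m = (V_m, U_m, E_m)$ with $|V_m| = m$, $|U_m| \le A m$ for some constant $A > 0$, and each check vertex in $U_m$ having degree exactly $3$, satisfying the \emph{unique-neighbor expansion} property: every sufficiently small $S \subseteq U_m$ has at least $(3/2 + \alpha)|S|$ unique neighbors in $V_m$ for some fixed $\alpha > 0$. Such graphs can be obtained from the zig-zag construction of \Cref{thm:zig-zag} with appropriate parameters. Interpret each $u \in U_m$ with neighbors $\{v_1, v_2, v_3\}$ as a parity check $x_{v_1} \oplus x_{v_2} \oplus x_{v_3} = 0$, and let $C_m \subseteq \zo^m$ be the resulting linear code. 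Finally, replace each parity check with the four equivalent 3-CNF clauses (one per falsifying assignment), yielding a formula $\varphi_m$ with at most $4 A m$ clauses on $m$ variables and $S(\varphi_m) = C_m$.

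Second, I would prove the query lower bound via Yao's minimax principle with the YES-distribution uniform over $C_m$ and the NO-distribution uniform over $\zo^m$. A standard volume argument using the $\Omega(m)$ \emph{primal} distance of $C_m$ (itself an easy consequence of unique-neighbor expansion, since a nonzero codeword whose support is small would fail some check with a unique neighbor) shows that a uniformly random string is $\epsilon_0$-far from $C_m$ with probability $1 - 2^{-\Omega(m)}$ for some constant $\epsilon_0 > 0$. The key claim is then that no algorithm making $q = o(m)$ queries can statistically distinguish a uniform codeword from a uniform string. This reduces to showing that the \emph{dual} distance of $C_m$—the minimum Hamming weight of a nonzero vector in the row-span of the parity-check matrix—is $\Omega(m)$: when this holds, the marginal of a uniform codeword on any $q$ coordinates is exactly uniform over $\zo^q$, so the two distributions are identical from the algorithm's view and nonadaptivity can be assumed without loss of generality.

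The main obstacle is establishing the dual-distance lower bound, since unique-neighbor expansion directly yields large primal distance but not dual distance. One route is to design $G_m$ so that \emph{both} small sets of variable vertices and small sets of check vertices expand on the opposite side, which can be arranged via a composition/doubling trick on top of the zig-zag construction. An alternative, closer in spirit to the original approach, is to bypass the dual-distance analysis by building $\varphi_m$ through a PCP of proximity for a robustly testable linear code: the PCPP verifier is a constant-size predicate whose Cook--Levin expansion yields the desired 3-CNF directly, and the soundness of the PCPP translates into the query lower bound. Either path produces 3-CNF formulas with the claimed linear clause-to-variable ratio and linear testing hardness.
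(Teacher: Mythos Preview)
The skeleton of your argument is the right one and matches the original approach of Ben-Sasson, Harsha, and Raskhodnikova: take a linear code $C_m$ defined by degree-$3$ parity checks, rewrite each check as four $3$-CNF clauses, and argue that if the \emph{dual} distance of $C_m$ is $\Omega(m)$ then any $o(m)$ coordinates of a uniformly random codeword are perfectly uniform, so Yao's principle with \textsc{Yes} $=$ uniform codeword and \textsc{No} $=$ uniform string gives the lower bound. The primal-distance step (to make \textsc{No} instances far) is also correct in spirit.

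There is, however, a concrete gap in your instantiation. The zig-zag product of \Cref{thm:zig-zag} yields \emph{spectral} expanders; via the expander mixing lemma this gives vertex expansion of small sets up to roughly $d/2$ for degree $d$, which is exactly the threshold below which unique-neighbor expansion is \emph{not} guaranteed. Your claimed $(3/2+\alpha)$ unique-neighbor expansion for degree-$3$ checks therefore does not follow from zig-zag, and the ``composition/doubling trick'' you gesture at for two-sided lossless expansion is not a known construction---as the paper itself notes in the footnote following \cref{fact:3cnf-hard}, explicit two-sided unique-neighbor expanders of this strength remain open. (There is also a side-swap in your write-up: unique-neighbor expansion of small sets $S\subseteq U_m$ of \emph{checks} is what controls the dual distance, whereas the primal distance needs expansion of small sets of \emph{variables}.)

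The fix is simple and is exactly what the original paper does: the statement is purely existential, so you do not need an explicit graph. Take $G_m$ to be a \emph{random} bipartite graph with checks of degree $3$ and variables of bounded degree; a first-moment calculation shows that with positive probability small sets on \emph{both} sides have more than half their edges going to unique neighbors, yielding $\Omega(m)$ primal and dual distance simultaneously. Fix one such $G_m$ for each $m$ and run your argument verbatim. The PCPP route you mention at the end is a different (and heavier) tool than what is needed here.
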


At first glance, the property $\cS_\varphi$ appears to meet our needs.
However, the issue is that the procedure for checking satisfying
assignments is non-uniform: for each $m$, there exists a hard-to-test formula
$\varphi_m$ with a corresponding efficient verifier, yet no single algorithm
works for all~$m$.
%checks satisfying assignments for every $m$. 
The formulas in
\cite{Ben-SassonHR05} are derived from random codes using random expander graphs
with strong unique-neighbor expansion, 
%properties—graphs 
for which no explicit
(let alone 
near-linear time) constructions are known.\footnote{Explicit
constructions of related structures, like lossless expanders (see \eg
\cite{CapalboRVW02}), have been extensively studied. However,
\cite{Ben-SassonHR05} relies on two-sided expansion, which is generally harder
to achieve and remains an active area of research
\cite{ChattopadhyayGRZ24,HsiehMM024,HsiehLMOZ25,Chen25}.
}

To address this, we construct a new property $\cQ$ as follows. For input length
$n$, we define $\cQ_n$ to contain strings formed by concatenating: (1) an
encoding of a 3CNF formula $\psi_m$ with at most $A m$ clauses on $m$
variables, and (2) a satisfying assignment for $\psi_m$. Since encoding $\psi_m$
takes $\Theta(m \log m)$ bits, we choose $m = \Theta(n / \log n)$.
To ensure that the strings obtained via this construction are far from
$\cQ_n$ when the assignment is far from $S(\psi_m)$,
we encode $\psi_m$ as $\cE(\langle \psi_m \rangle)$, where $\cE$ is the code
from \cref{thm:code} and $\langle \psi_m \rangle$ is the binary representation
of $\psi_m$. Because the assignment is only $\Theta(n / \log n)$ bits, we apply
a repetition code to expand it to $\Theta(n)$ bits.
The resulting property $\cQ$ is hard to test (via
\cref{fact:3cnf-hard}) but admits a linear-time tester that simply reads and
decodes $\psi_m$, extracts the assignment, and checks whether it satisfies the
formula.

\begin{definition}\label{def:3cnf-aux}
    For all $m \in \bN$,
    %and $\epsilon \in (0,1)$,
    let $\Psi_m$
    be the set of 3CNF formulas on $m$ variables with at most $A m$
    clauses, where $A$ is as defined in \cref{fact:3cnf-hard}.
    Let $D > 0$ be a constant
    %depending only on $\epsilon$
    such that formulas in $\Psi_m$ can be encoded in $\ell_m \define
    \floor{D m \log m}$ bits.
\end{definition}

\begin{definition}[Property with near-linear query and time complexity]
    \label{def:3cnf-construction}
    %Let $\epsilon > 0$ be a sufficiently small constant,
    Let $r \in (0,1)$ be the rate parameter from \cref{thm:code}.
    Define the property $\cQ = \{\cQ_n\}_{n \in \bN}$ as follows.
    Let $n \in \bN$ be sufficiently large so that $m \define \floor{\frac{nr}{2 D \log n}} \ge 1$. For each
    3CNF formula $\psi_m \in \Psi_m$ with encoding $\langle \psi_m
    \rangle$ of length $\ell_m$, and  each satisfying assignment
    $x \in S(\psi_m)$,
    let $n' \define n - \floor{\ell_m/r}$,
    $t \define \ceil{n' / m}$, and  $y(x) \define x^t[1 : n']$.
    Finally, let
    \[
        \cQ_n \define \left\{ \cE\left(\langle \psi_m \rangle\right) \circ y(x)
            :\; \psi_m \in \Psi_m \text{ and } x \in S(\psi_m)
        \right\}\,,
    \]
    where $\cE$ is the code from \cref{thm:code}. For small $n \in \bN$ yielding $m = 0$, let
    $\cQ_n \define \emptyset$.
\end{definition}

We now show that $\cQ_n$ satisfies our requirements on query and time complexity.

\begin{lemma}[Query lower bound]
    \label{lemma:3cnf-query-bound}
    Let $\epsilon > 0$ be a sufficiently small constant. Then $\epsilon$-testing 
    property $\cQ$ from \cref{def:3cnf-construction} requires %at least 
    $\Omega(n / \log n)$ queries.
\end{lemma}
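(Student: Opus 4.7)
The plan is a reduction from testing the hard 3CNF property $\cS_\varphi$ of \Cref{fact:3cnf-hard} to testing $\cQ$. Fix the hard family $\{\varphi_m\}_{m \in \bN}$ from \Cref{fact:3cnf-hard}, and fix the encoding $\langle \varphi_m \rangle \in \zo^{\ell_m}$ for each $m$. Given $n$ large enough, set $m \define \floor{nr/(2D\log n)}$ so that $m = \Theta(n/\log n)$, and note that the ``formula part'' $\cE(\langle \varphi_m \rangle)$ has length $\ell_m/r = \Theta(n)$ while the ``assignment part'' has length $n' = n - \ell_m/r = \Theta(n)$. Given an input $x \in \zo^m$ to the $\cS_\varphi$ testing problem, define the string $w(x) \define \cE(\langle \varphi_m \rangle) \circ y(x) \in \zo^n$, where $y(x) = x^t[1:n']$ is as in \Cref{def:3cnf-construction}.

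Suppose for contradiction that $\cQ$ admits an $\epsilon$-tester $T$ using $q(n) = o(n/\log n)$ queries. I simulate $T$ on $w(x)$: queries to the first $\ell_m/r$ bits are answered from the fixed string $\cE(\langle\varphi_m\rangle)$ (no access to $x$ needed), and a query to position $\ell_m/r + j$ in the second part is answered by one query to position $((j-1) \bmod m)+1$ of $x$. This simulation gives an $\epsilon_0$-tester for $\cS_\varphi$ on $m$-bit inputs using at most $q(n) = o(n/\log n) = o(m)$ queries, contradicting \Cref{fact:3cnf-hard}, provided I verify completeness and soundness of the reduction for a suitable constant $\epsilon$.

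Completeness is immediate: if $x \in S(\varphi_m)$, then $w(x) \in \cQ_n$ by construction, so $T$ accepts with probability at least $2/3$. The heart of the argument is soundness: if $x$ is $\epsilon_0$-far from $S(\varphi_m)$, then $w(x)$ is $\epsilon$-far from $\cQ_n$ for some constant $\epsilon > 0$. Any $z \in \cQ_n$ has the form $\cE(\langle \psi_m \rangle) \circ y(x')$ for some $\psi_m \in \Psi_m$ and $x' \in S(\psi_m)$; I split into two cases. If $\psi_m \ne \varphi_m$, then $\cE(\langle \psi_m \rangle)$ and $\cE(\langle \varphi_m \rangle)$ differ in at least $\delta \ell_m/r$ bits by the relative distance of the code (\Cref{thm:code}), contributing relative distance at least $\delta \ell_m/(r n) = \Omega(\delta)$. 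If $\psi_m = \varphi_m$, then $x' \in S(\varphi_m)$ and, since $x$ is $\epsilon_0$-far from $S(\varphi_m)$, we have $\dist(x,x') \ge \epsilon_0$; a direct block-by-block counting in $y(x) = x^t[1:n']$ shows $\dist(y(x), y(x')) \ge \epsilon_0 (1 - O(m/n')) \ge \epsilon_0/2$ for large $n$, contributing relative distance at least $\epsilon_0 n'/(2n) = \Omega(\epsilon_0)$. Taking $\epsilon$ to be a sufficiently small constant less than both of these bounds completes the soundness argument.

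The main obstacle is the two-case soundness analysis, and in particular the need to control distance in the repetition block $y(x)$ with the slight truncation from $tm$ down to $n'$; the off-by-$O(m)$ slack is absorbed into the $n' = \Theta(n)$ normalization, so the constant $\epsilon_0/2$ is preserved up to a $1-o(1)$ factor. Everything else is a routine consequence of linearity of the code's distance and the fact that every element of $\cQ_n$ falls into one of the two cases above.
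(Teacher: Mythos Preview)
Your proposal is correct and follows essentially the same approach as the paper: reduce from testing $\cS_\varphi$ by simulating a $\cQ$-tester on $\cE(\langle\varphi_m\rangle)\circ y(x)$, with the same two-case soundness analysis (different formula via code distance, same formula via repetition distance). The only cosmetic difference is parameterization: the paper phrases it as ``an $\epsilon$-tester for $\cQ_n$ yields a $4\epsilon$-tester for $\cS_\varphi$'' and then requires $4\epsilon \le \epsilon_0$, whereas you fix $\epsilon_0$ from \Cref{fact:3cnf-hard} up front and choose $\epsilon$ small enough afterward.
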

\begin{proof}
    We reduce %from 
    testing $\cS_\varphi$ (on inputs of length $m$ obtained from $n$ as in
    \cref{def:3cnf-construction}) to testing $\cQ_n$. Suppose $\cT$ is an
    $\eps$-tester for $\cQ_n$. We give
    a $4\eps$-tester $\cA$ for $\cS_\varphi$ that works as follows. On input $x \in \zo^m$, tester $\cA$
    computes $\cE(\langle \varphi_m \rangle)$ for formula $\varphi_m$ from \cref{fact:3cnf-hard}
    using no queries, and then simulates $\cT$ on $z \define
    \cE(\langle \varphi_m \rangle) \circ y(x)$, using at most one query to $x$ per query of $\cT$ to
    $z$. The query complexity of $\cA$ is at most that
    of $\cT$. So, if $\cA$ is a valid tester for $\cS_\varphi$, then by
    \cref{fact:3cnf-hard}, $\cT$ must make $\Omega(m) = \Omega(n / \log n)$
    queries.

    First, suppose $x \in \cS_\varphi$. Then, by construction, $z \in \cQ_n$, so
    $\cA$ accepts with probability at least $\frac 23$.

    Now suppose $x$ is $4\epsilon$-far from $\cS_\varphi$. We claim that $z$ is
    $\epsilon$-far from $\cQ_n$. Indeed, let $z^*$ be a closest string to $z$ in $\cQ_n$. Then $z^*
    = \cE(\langle \psi^*_m \rangle) \circ y(x^*)$ for some
    $\psi^*_m \in \Psi_m$
    and some satisfying assignment $x^* \in S(\psi^*_m)$. We claim that
    $\dist(z, z^*) \ge \epsilon$,
    and consider two cases to prove the claim. First, suppose $\varphi_m \ne
    \psi^*_m$. Then $\dist(\cE(\langle \varphi_m \rangle), \cE(\langle \psi^*_m \rangle)) \ge \delta$,
    where $\delta$ is the distance of the code from \cref{thm:code}.
    %Since $\floor{M/r} = \Omega(n)$ by \cref{def:3cnf-construction},
    Since $\floor{\ell_m/r} = \Omega(n)$ by \cref{def:3cnf-aux},
    it follows that $\dist(z, z^*) \ge \epsilon$ as long
    as $\epsilon$ is smaller than $\delta$ by a sufficiently small constant factor. Second, suppose
    $\varphi_m = \psi^*_m$. Then $\dist(x, x^*) \ge 4\epsilon$ because $x$ is $4\epsilon$-far from
    $\cS_\varphi$ while $x^* \in \cS_\varphi$, and hence $\dist(y(x), y(x^*)) \ge 2\epsilon$ (the
    factor of $2$ accounts for a possibly partial last repetition of $x$ inside $y(x)$, and
    similarly for $x^*$). But $|y(x)| = |y(x^*)| = n - \floor{\ell_m/r} \ge n/2$, so $\dist(z, z^*) \ge \epsilon$. Hence $z$ is $\epsilon$-far from $\cQ_n$,
    so $\cA$ rejects with probability at least $\frac{2}{3}$.
\end{proof}

\begin{lemma}[Time upper bound]
    \label{lemma:3cnf-time-bound}
     There exists an $O(n \log n)$-time decider (\ie a 0-tester) for~$\cQ$.
    % RF: the construction of $\cQ$ is now defined once and for all,
    % so we can now state the result this way. 
    % Let $\epsilon \in (0,1)$ be a constant.
    % Then there exists an $O(n \log n)$-time $\epsilon$-tester for $\cQ_n$.
\end{lemma}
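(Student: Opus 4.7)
The plan is to build a deterministic decider $\cD$ for $\cQ$ that reads the input in full, parses it in the natural way suggested by \Cref{def:3cnf-construction}, and then verifies each ingredient. Specifically, on input $z \in \zo^n$, the algorithm will (i) recompute the bookkeeping quantities $m, \ell_m, n', t$; (ii) attempt to extract a formula $\psi_m \in \Psi_m$ from the first $\floor{\ell_m / r}$ bits of $z$ via decoding and re-encoding; (iii) extract a candidate assignment $x$ from the next $m$ bits of $z$; and (iv) verify both that the remaining suffix of $z$ equals $y(x) = x^t[1:n']$ and that $x$ satisfies $\psi_m$. The algorithm accepts iff all these checks pass.

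In more detail: first, $\cD$ reads $\bin{n}$ from the parameter tape in $O(\log n)$ time and computes $m = \floor{nr / (2D \log n)}$, $\ell_m = \floor{Dm \log m}$, $n' = n - \floor{\ell_m / r}$, and $t = \ceil{n'/m}$, each in $O(\log^{1+o(1)} n)$ time via the arithmetic operations of our model. Second, $\cD$ reads the first $\floor{\ell_m / r}$ bits of $z$ as a candidate codeword $c$; each query costs $O(\log n)$, so this takes $O(\ell_m \log n / r) = O(n \log n)$ time. Using \Cref{thm:code}, it constructs $\cE_{\ell_m}$ and runs the decoder on $c$ to obtain a candidate message $w \in \zo^{\ell_m}$ in $O(\ell_m \log \ell_m) = O(n \log n)$ time; since the decoder on an arbitrary string may return nonsense, $\cD$ re-encodes $w$ to produce $c' = \cE_{\ell_m}(w)$ and compares it bitwise to $c$ (rejecting on mismatch), which costs another $O(n \log n)$ time. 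Third, $\cD$ parses $w$ as the encoding of a 3CNF formula $\psi_m \in \Psi_m$, rejecting if the parse fails or yields more than $Am$ clauses; this is $O(\ell_m) = O(n)$ work. Fourth, $\cD$ reads the next $m$ bits of $z$ as the candidate assignment $x$. Fifth, for each index $j \in \{m+1, \dots, n'\}$, it queries the $(j+\floor{\ell_m/r})$-th bit of $z$ and compares it to the $((j-1) \bmod m + 1)$-th bit of $x$, rejecting on any mismatch; this costs $O(n \log n)$. Finally, $\cD$ evaluates each of the $O(m)$ clauses of $\psi_m$ on $x$, costing $O(m \log m) = O(n)$ time, and accepts iff all clauses are satisfied.

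Correctness follows by inspection from \Cref{def:3cnf-construction}: $z \in \cQ_n$ iff there exists $\psi_m \in \Psi_m$ and $x \in S(\psi_m)$ with $z = \cE(\bin{\psi_m}) \circ y(x)$, and the re-encode-and-compare step in (ii) ensures that the codeword prefix is a valid encoding of some $\psi_m$ (in which case the decoder recovers the unique such message). The total running time is dominated by reading the $n$ input bits and by the encoding/decoding in (ii), each of which is $O(n \log n)$. The only subtle point—and the mild obstacle I would address carefully—is ensuring the log-cost model is handled correctly: registers used as indices into $z$ hold values of magnitude up to $n$, so each $\textsc{QueryInp}$ costs $O(\log n)$, and the arithmetic on such indices (e.g., computing $(j-1) \bmod m + 1$ inside the repetition check) uses $O(\log^{1+o(1)} n)$ time per operation; both are consistent with the $O(n \log n)$ budget.
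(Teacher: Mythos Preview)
Your proposal is correct and takes essentially the same approach as the paper: read the input, compute the bookkeeping quantities, decode the prefix to recover $\psi_m$, verify the repetition structure to extract $x$, and check satisfiability. If anything, you are more careful than the paper in two places---the explicit re-encode-and-compare step to certify that the prefix is a genuine codeword, and the attention to per-operation costs in the log-cost RAM model. One tiny quibble: computing $(j-1)\bmod m$ at every index costs $O(\log^{1+o(1)} n)$ per step, so $n$ iterations would give $O(n\log^{1+o(1)} n)$ rather than $O(n\log n)$; this is easily avoided by maintaining a counter that wraps around instead of using division.
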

\begin{proof}
    Let $\cT$ be a tester that, on input $z$,  queries all of $z$, computes $m$ and
    $\ell_m$ from \cref{def:3cnf-aux,def:3cnf-construction}, and
    tries to use the decoder from \cref{thm:code} to compute a decomposition $z
    = \cE(\langle \psi_m \rangle) \circ y(x)$. Tester $\cT$ rejects if decoding
    the first $\floor{\ell_m/r}$ bits fails or does not yield a valid
    formula in $\Psi_m$, or if the remaining bits are
    not of the form $y(x)$ for some $x \in \zo^m$. Otherwise, $\cT$ accepts iff the
    assignment $x$ satisfies the formula $\psi_m$.
    
    Since querying the entire input, decoding the code from \cref{thm:code}, and
    checking whether the assignment $x$ satisfies $\psi_m$ all take $O(n \log
    n)$ time, tester $\cT$ runs in time $O(n \log n)$. 

    We claim that $\cT$ answers correctly with probability $1$.
    Clearly, $\cT$ accepts if $z \in \cQ_n$. Now, suppose $z \notin \cQ_n$.
    %$z$ is $\epsilon$-far from $\cQ_n$. 
    Suppose $z$ is of the form $\cE(\langle \psi_m
    \rangle) \circ y(x)$, since otherwise $\cT$ rejects. Then $x
    \not\in S(\psi_m)$ (since otherwise $z$ would be in $\cQ_n$), and hence
    $\cT$ rejects, as desired.
\end{proof}

\subsection{Construction}
\label{sec:hierarchies-proof-const}

\newcommand{\concat}{\mathsf{Concat}}

We start by defining operations on properties and summarizing their features.
%with a few auxiliary definitions that %will be required for 
%are used in our main construction.

\subsubsection{Concatenated properties}\label{sec:concatenated-properties}
To prove the time-query hierarchy theorems, we first construct properties achieving the desired query and time bounds separately. The following definitions and lemmas show how to combine them to obtain both guarantees simultaneously.
% To prove the time-query hierarchy theorems, we will separately construct properties that achieve the desired query and time complexity, respectively.
% The definitions and lemmas below outline how we can combine two such properties to get both the query and time complexity simultaneously.

\begin{definition}[Concatenation of properties]
    Let $n_1, n_2 \in \bN$. Let $\cP^{(1)}_{n_1}$ and $\cP^{(2)}_{n_2}$ be properties over inputs of length $n_1$ and $n_2$, respectively.
    Then their \emph{concatenation} $\concat(\cP^{(1)}_{n_1}, \cP^{(2)}_{n_2})$ is the property over inputs of length $n_1 + n_2$ given by
    $
        \concat(\cP^{(1)}_{n_1}, \cP^{(2)}_{n_2}) \define \left\{ xy : x \in \cP^{(1)}_{n_1}; y \in \cP^{(2)}_{n_2} \right\}.
    $
\end{definition}

\begin{definition}[YES query provider]
    \label{def:query-providers}
    Let $t : \bN \to \bN$ be a function, possibly sublinear in $n$. A
    $t(n)$-time {\em YES query provider} for a property $\cP$ is an algorithm
    $\cA$ that, given $\bin{n}$ for which $\cP_n \ne \emptyset$, runs in $t(n)$
    time to produce an implicit representation of $x^{(n)} \in \cP_n$, and then answers each query $i \in [n]$
    with $x^{(n)}_i$ in $O(\log^{1+o(1)} n)$ time.\footnote{Formally, we can
    define this via two algorithms, \textsc{Preprocess} and \textsc{Query}:
    \textsc{Preprocess}, given $\bin{n}$ on the parameter tape, runs in $t(n)$
    time and puts $\cO \in \zo^*$ to registers; \textsc{Query}, given $\langle i, \cO
    \rangle$, computes $x^{(n)}_i$ in $O(\log^{1+o(1)} n)$ time, where $x^{(n)} \in
    \cP_n$ is uniquely determined by $\cO$. For clarity, we use the
    ``interactive'' version instead.}
\end{definition}

We summarize features of properties obtained via concatenation using notation
from \Cref{def:query-time-classes}.

\begin{lemma}[Concatenation Lemma]\label{lem:concat}
    % RF: We were missing the nonempty requirement for the time upper bound (item 3),
    % implicitly in using the YES query provider.
    Given a.e.\ nonempty
    properties $\cP^{(1)}$ and $\cP^{(2)}$, define property $\cP$  by
    setting
    $\cP_{2n} \define \concat(\cP^{(1)}_{n}, \cP^{(2)}_{n})$ and
    $\cP_{2n-1}\define\emptyset$ for each $n\in\bN$.
    Let $\epsilon \in (0, 1)$. Then the following hold.
    %\srnote{Do we need a YES-query-provider for the query complexity lower bound? }
    \begin{enumerate}
        \item \label{lem:concat-1}
        $Q_{\eps,2n}(\cP)=O\left(Q_{\eps,n}(\cP^{(1)})+Q_{\eps,n}(\cP^{(2)})\right)$ and $Q_{\eps/2,2n}(\cP)=\Omega\left(Q_{\eps,n}(\cP^{(1)})+Q_{\eps,n}(\cP^{(2)})\right)$.

        \item \label{lem:concat-2}
        $T_{\eps,2n}(\cP)=O\left(\left(T_{\eps,n}(\cP^{(1)})+T_{\eps,n}(\cP^{(2)})\right) \log n\right).$

        \item \label{lem:concat-3}
            Suppose $T_{\eps/2,2n}(\cP) = O(t(n))$. For all $i\in[2]$,
            if $\cP^{(i)}_n$ has an
            $O(t(n) \log^{1+o(1)} n)$-time YES query provider, then
            $T_{\eps,n}(\cP^{(3-i)})=O(t(n) \log^{1+o(1)} n)$.
        % RF: we were missing the +o(1) before! Since that has to appear in the conclusion
        % by the definition of YES query provider, I've also added it to the hypothesis
        % (i.e. relaxed the hypothesis).
    \end{enumerate}
\end{lemma}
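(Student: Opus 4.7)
The backbone of the argument is the distance identity
\[
\dist(xy, \cP_{2n}) \;=\; \tfrac{1}{2}\bigl(\dist(x, \cP^{(1)}_n) + \dist(y, \cP^{(2)}_n)\bigr),
\]
valid whenever $\cP^{(1)}_n$ and $\cP^{(2)}_n$ are nonempty; for the finitely many exceptional lengths permitted by the a.e.\ nonempty assumption I would use trivial testers and absorb them into the asymptotic bounds. Odd input lengths $2n-1$ are rejected in $O(\log n)$ time after reading one bit of the parameter tape. All three parts then reduce to this identity combined with a simulation argument in the log-cost RAM model.

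For \cref{lem:concat-1}, the combined $\eps$-tester would amplify each component $\eps$-tester to error at most $1/6$ by constant-factor repetition, run them independently on the two halves of $xy$, and accept iff both accept. A union bound handles $xy \in \cP_{2n}$; the identity forces $\max(\dist(x,\cP^{(1)}_n),\dist(y,\cP^{(2)}_n)) \ge \eps$ whenever $xy$ is $\eps$-far, so the corresponding component tester rejects with probability at least $5/6$. For the matching lower bound, given an $\eps/2$-tester $\cT$ for $\cP_{2n}$ I would (non-uniformly) hardcode some $y^\star \in \cP^{(2)}_n$ and simulate $\cT$ on $xy^\star$. The identity turns an $\eps$-far $x$ into an $\eps/2$-far $xy^\star$, and since the simulation spends no queries on the $y^\star$ half we obtain $Q_{\eps/2,2n}(\cP) \ge Q_{\eps,n}(\cP^{(1)})$; symmetrically for $\cP^{(2)}$, giving the claimed $\Omega(\cdot)$.

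For \cref{lem:concat-2}, I would use the same combined tester but account carefully for the log-cost RAM model: each query of the second-half tester to position $i$ is forwarded to position $n+i$, requiring an addition of cost $O(\log n)$, and the length-conversion between $\bin{2n}$ and $n$ has the same cost. Summing the $O(\log n)$ overhead across the at most $T_{\eps,n}(\cP^{(1)}) + T_{\eps,n}(\cP^{(2)})$ simulated operations gives the stated blow-up.

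\cref{lem:concat-3} is the part I expect to be the main obstacle. Taking $i=2$ without loss of generality, I would first run the setup of the YES query provider $\cY$ for $\cP^{(2)}$, fixing an implicit $y \in \cP^{(2)}_n$; then simulate the $\eps/2$-tester $\cT$ for $\cP$ on the virtual input $xy$, forwarding queries with index $\le n$ to the real input $x$ and answering queries with index $>n$ via $\cY$. Correctness follows from the identity exactly as in \cref{lem:concat-1}. For the runtime, $\cT$ finishes in $O(t(n))$ steps and therefore makes $O(t(n))$ queries, each ``synthetic'' query costing $O(\log^{1+o(1)} n)$ via $\cY$; together with $\cY$'s $O(t(n) \log^{1+o(1)} n)$ setup cost this totals $O(t(n) \log^{1+o(1)} n)$. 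The delicate point, and the reason this is the main obstacle, is to ensure that the per-query overhead of $\cY$, the address translation, and the simulation of $\cT$'s log-cost operations do not compound beyond a polylogarithmic factor; the log-cost RAM model together with the definition of the YES query provider in \cref{def:query-providers} is precisely calibrated to make this accounting go through.
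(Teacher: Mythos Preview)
Your proposal is correct and follows essentially the same approach as the paper: amplify-and-combine the two component testers for the upper bounds, and simulate the $\cP$-tester on $xy^\star$ (with $y^\star$ hardcoded for the query lower bound, or supplied by the YES query provider for the time lower bound) for the lower bounds. The only cosmetic difference is that you organize everything through the exact distance identity $\dist(xy,\cP_{2n}) = \tfrac12(\dist(x,\cP^{(1)}_n)+\dist(y,\cP^{(2)}_n))$, whereas the paper just invokes an averaging argument and the inequality $\dist(xy,\cP_{2n}) \ge \tfrac12\dist(x,\cP^{(1)}_n)$ directly; the paper's $\log n$ accounting in \cref{lem:concat-2} is likewise slightly more localized (only queries, not all operations, pick up the overhead), but your coarser bound is of course still valid.
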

\begin{proof}
    {\bf \cref{lem:concat-1}.} We prove the upper and lower bounds separately.
    
    Let $\cA_1$ and $\cA_2$ be
    $\epsilon$-testers for $\cP^{(1)}_{n}$ and
    $\cP^{(2)}_{n}$ that have error probability at most
    $\frac 16$.
    % Let $\cA_1$ and $\cA_2$ be
    % $\epsilon$-testers for $\cP^{(1)}_{n}$ and
    % $\cP^{(2)}_{n}$ that make $O(q_1(n))$
    % and $O(q_2(n))$ queries, respectively, and have error probability at most
    % $\frac 16$, obtained by standard probability amplification.
    Construct the following $\epsilon$-tester $\cA$ for $\cP_{2n}$. On
    input $xy$, where $|x|=|y|=n$, algorithm $\cA$ runs $\cA_1$ on
    $x$ and $\cA_2$ on
    $y$, accepting iff both $\cA_1$ and $\cA_2$ accept.
    We show that this is a valid $\epsilon$-tester for $\cP_{2n}$.
    If input $xy \in \cP_{2n}$,
    then $\cA_1$ and $\cA_2$ accept $x$ and $y$, respectively,
    with probability at least $\frac{5}{6}$ each, so $\cA$ accepts with
    probability at least $\frac{2}{3}$ (by a union bound).
    If input $xy$ is $\epsilon$-far from $\cP_{2n}$, then, by an averaging argument,
    $x$ or $y$ must be $\epsilon$-far from $\cP^{(1)}_{n}$ or
    $\cP^{(2)}_{n}$,
    respectively. Thus, $\cA_1$ or $\cA_2$ rejects (and then so does $\cA$) with probability at
    least $\frac{5}{6}$. %, so $\cA$ also rejects. 
    %The query complexity of $\cA$ is $O(q_1(n) + q_2(n))$.
    By standard probability amplification, there exist algorithms $\cA_1$ and $\cA_2$, as specified above, that make $O(Q_{\eps,n}(\cP^{(1)}))$ and $O(Q_{\eps,n}(\cP^{(2)}))$ queries, respectively. The query complexity of $\cA$ constructed from such $\cA_1$ and $\cA_2$ is $O\left(Q_{\eps,n}(\cP^{(1)})+Q_{\eps,n}(\cP^{(2)})\right)$.

    Given an $\frac{\epsilon}{2}$-tester $\cT$ for $\cP_{2n}$, we construct an
    $\epsilon$-tester $\cT_1$ for $\cP^{(1)}_{n}$ as follows: on input $x \in
    \zo^{n}$, algorithm $\cT_1$ fixes any $y \in \cP^{(2)}_{n}$
    (which exists since $\cP^{(2)}$ is a.e.\ nonempty)
    and simulates $\cT$ on $xy$. If $x \in
    \cP^{(1)}_{n}$, then $xy \in \cP_{2n}$,
    so $\cT_1$ accepts with probability at
    least $\frac 23$. If $x$ is $\epsilon$-far from $\cP^{(1)}_{n}$,
    then $xy$ is
    $\frac{\epsilon}{2}$-far from $\cP_{2n}$, so $\cT_1$ rejects with
    probability at least $\frac 23$. Since the query complexity of $\cT_1$ is at most that of $\cT$,
 %   $\sr{\cT}_1$ has the same query complexity
   % as $\sr{\cT}$, 
   we conclude that $\cT$ makes at least
    %$q_1(n)$
    $Q_{\epsilon,n}(\cP^{(1)})$
    queries. Analogously, it also makes at least
    %$q_2(n)$.
    $Q_{\epsilon,n}(\cP^{(2)})$ queries.
    Thus, $\frac \eps 2$-testing $\cP_{2n}$
    has query complexity
   $\Omega\left(Q_{\epsilon,n}(\cP^{(1)}) + Q_{\epsilon,n}(\cP^{(2)})\right)$.
    %$\Omega(q_1(n) + q_2(n))$.

    \textbf{\cref{lem:concat-2}.} 
    %Let $\cA_1$ and $\cA_2$ be $\epsilon$-testers for
    % $\cP^{(1)}$ and $\cP^{(2)}$ with running time
    % $O(t_1(n))$ and $O(t_2(n)),$
    % respectively,
    % and error probability at most $\frac 16$. Then, as in the previous part, there
    % exists an $\epsilon$-tester for $\cP$ which, on input $xy$, simulates
    % $\cA_1$ on the first half $x$ and $\cA_2$ on the second half $y$ and accepts
    % if and only if both $\cA_1$ and $\cA_2$ accept. 
    By standard probability amplification, there exist algorithms $\cA_1$ and $\cA_2$, as specified in the previous part, with running time $O(T_{\eps,n}(\cP^{(1)}))$ and $O(T_{\eps,n}(\cP^{(2)}))$, respectively.
    The running time of $\cA$ constructed from such $\cA_1$ and $\cA_2$
    (allowing for simulation overhead, including a potential $\log n$
    blow-up in the cost of queries
    %to $y$
    by the combined tester\footnote{Specifically, tester $\cA_2$ may have low-indexed queries to $y$ which are inexpensive, but those queries still cost
    $O(\log n)$ for the tester simulating it.
    While this blow-up does not occur in queries to $x$ by $\cA_1$, we allow a
    global $\log n$ overhead in our upper bound for simplicity. We do not incur
    a $\log t_1$ or $\log t_2$
    overhead for simulating \emph{register} access by either algorithm, since we can
    interleave the registers used by $\cA_1$ and $\cA_2$ in our construction of $\cA$.})
    %On the other hand, by interleaving the registers used
    %by $\cA_1$ and $\cA_2$, we do not need to incur additional cost to register access for simulation.})
    is $O\left(\left(T_{\eps,n}(\cP^{(1)})+T_{\eps,n}(\cP^{(2)})\right) \log n\right).$
    %$O(t_1(n) + t_2(n) \log n)$, as claimed.

    \textbf{\cref{lem:concat-3}.} Let $\cT$ be an $\frac{\epsilon}{2}$-tester for
    $\cP_{2n}$ running in time $O(t(n))$
    and $\cB$ be an $O(t(n) \log^{1+o(1)} n)$-time YES query provider
    for $\cP^{(2)}_{n}$.
    We obtain an $\epsilon$-tester $\cT_1$ for $\cP^{(1)}_{n}$
    as follows. On input %$x$ of length $n$, $\sr{\cT}_1$ 
    $x \in \zo^{n}$, tester $\cT_1$ uses $\cB$ to fix some $y
    \in \cP^{(2)}_{n}$ (which exists since $\cP^{(2)}$
    is a.e.\ nonempty)
    and simulates $\cT$ on $xy$: when $\cT$ queries a bit
    from $xy$, tester $\cT_1$ either queries a bit from $x$ or uses $\cB$ to query
    a bit from $y$ in $O(\log^{1+o(1)} n)$ time.
    Then, as in Item~1, algorithm $\cT_1$ is
    an $\epsilon$-tester for $\cP^{(1)}$. By construction, the running time of
    $\cT_1$ is $O(t(n) \log^{1+o(1)} n)$. An identical argument works for $\cT_2$.
\end{proof}

\begin{remark}
    \label{rem:lb-type-preserved-concat}
    For \textbf{Item 3} of \cref{lem:concat}, if the tester for $\cP$ succeeds only on infinitely many input lengths (as opposed to the usual notion of almost every input length), then we get testers for $\cP^{(1)}$ and $\cP^{(2)}$ that also succeed on infinitely many input lengths.
\end{remark}

\subsubsection{Intermediate complexity regimes by repetition}\label{sec:repeated-insteances}
We first construct properties with maximal query or time complexity, and then obtain intermediate complexities using  instance repetition, as in \cite{GoldreichKNR12,KelmanLR25}.

\begin{definition}[Repeated instances \cite{GoldreichKNR12}]\label{def:repeated-instances}
    %For a string $x$ and $r \in \bN$, let $x^r$ denote the
    %the concatenation of $r$ copies of $x$. 
    Given a property $\cP$ and a function $k : \bN \to \bN$
    satisfying $k(n) \le n$, define the \emph{repeated instances} property $\cP^{(k)}$ by
        % RF: the inline definition was going way over the margin :(
        $$\textstyle
        \cP^{(k)}_n \define \left\{ x^r[1:n] : \  x \in \cP_{k(n)} \text{ and } r = \left\lceil \frac{n}{k(n)} \right\rceil \right\}. 
        $$
\end{definition}
We summarize features of properties obtained via repetition using notation from
\cref{def:query-time-classes}.

\begin{definition}
    Given a property $\cP$, we write $\Lengths(\cP) \define \{ n \in \bN : \cP_n \ne \emptyset \}$.
    A set $S \subseteq \bN$ is \emph{decidable} in time $t : \bN \to \bR$ if there is an algorithm
    which, on input $n \in \bN$ given as $\bin{n}$, halts in time $O(t(n))$ and accepts iff $n \in
    S$. Given a function $f$, we write $\image(f)$ for the image %range 
    of $f$.
\end{definition}

\begin{observation}
    If $f : \bN \to \bN$ is eventually surjective, then $\image(f)$ is decidable in time $O(1)$.
\end{observation}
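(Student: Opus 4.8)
The plan is to unpack the hypothesis: $f$ being eventually surjective means precisely that $E \define \bN \setminus \image(f)$ is a \emph{finite} set, so $\image(f)$ is co-finite, and the task reduces to showing that every co-finite subset of $\bN$ is decidable in $O(1)$ time. The only point requiring care is that, in our log-cost RAM model, reading all of $\bin n$ already costs $\Theta(\log n)$; the resolution is that a decider for a co-finite set need not read all of $n$ — it only needs to distinguish the finitely many ``small'' values from everything else.

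Concretely, I would hard-code into the algorithm the finite set $E$, together with $M \define \max(E \cup \{0\})$ and the constant $K \define \lceil \log_2(M+1)\rceil + 1$. On input $\bin n$ on the parameter tape, the algorithm reads symbols of the parameter tape one at a time, stopping as soon as it either encounters the terminator symbol or has read $K$ symbols, whichever happens first. If the terminator appears within the first $K$ symbols, then $\bin n$ has fewer than $K$ bits, so $n < 2^{K-1}$, and the algorithm now holds the full binary representation of $n$; it then compares $n$ against the hard-coded list $E$ and accepts iff $n \notin E$. Otherwise $\bin n$ has at least $K$ bits, so $n \ge 2^{K-1} \ge M+1 > M$, hence $n \notin E$ and the algorithm accepts. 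In either case it performs $O(K) = O(1)$ reads and $O(\abs{E}) = O(1)$ comparisons, each on $O(1)$-bit values, so its total time cost is $O(1)$; and it accepts exactly when $n \notin E$, i.e., when $n \in \image(f)$, as required.

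The main (and essentially only) obstacle is the bookkeeping about the tape encoding: one must verify that inspecting a constant-length prefix of $\bin n$ either reveals $n$ in full (when $n$ lies below the fixed threshold $2^{K-1}$) or certifies that $n$ exceeds every element of $E$ (when $n$ lies above it), which relies on the convention that $\bin n$ is the standard most-significant-bit-first base-$2$ encoding followed by terminator symbols on the parameter tape. Once that is pinned down the argument is routine. Note in particular that there is no non-uniformity concern: $E$, $M$, and $K$ depend only on the fixed function $f$, so they are legitimately baked into the fixed program.
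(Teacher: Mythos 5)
Your proof is correct, and in fact it is more careful than the paper itself, which states this as an Observation with no accompanying argument — the authors evidently regard it as immediate. You correctly identify the one non-obvious point in the log-cost RAM model: a decider cannot afford to read all of $\bin{n}$ (that would cost $\Theta(\log n)$), so it must decide membership from a constant-length prefix, which works precisely because the complement of $\image(f)$ is finite and hence bounded by a constant $M$ baked into the program. Your threshold argument (read at most $K = \lceil \log_2(M+1)\rceil + 1$ symbols; if the terminator shows up early you have $n$ in full and can check it against the finite exception set, and if not then $n \ge 2^{K-1} > M$ is automatically in the image) is a clean and complete way to make this precise, and the observation about uniformity (the constants depend only on the fixed $f$, not on $n$) is exactly the right thing to note. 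There is nothing to correct.
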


\begin{lemma}[Repeated instances lemma]\label{lem:rep-inst}
Given a property $\cP$ and
    %an eventually surjective
    a function $k : \bN \to \bN$ satisfying $k(n) \le n$
    with a right inverse $k^\dagger$ and
    such that $\Lengths(\cP) \setminus \image(k)$ is finite,
    consider the repeated instances property $\cP^{(k)}$.
    Let $\eps \in (0, 1)$ be a constant.
    The following are true.
    \begin{enumerate}
        \item \label{lem:rep-inst-1} \cite{GoldreichKNR12} If $Q_{\eps/2,n}\left(\cP\right)=q(n)$ then
        $Q_{\eps,n}\left(\cP^{(k)}\right) =O\left(q(k(n)\right)$ and
        $Q_{\eps/4,n}\left(\cP^{(k)}\right) =\Omega\left(q(k(n)\right)$.
        % \sr{$Q_{\eps/4,n}\left(\cP^{(k)}\right) =\Omega\left(Q_{\eps/2,k(n)}(\cP)\right)$ and $Q_{\eps,n}\left(\cP^{(k)}\right) =O\left(Q_{\eps/2,k(n)}(\cP)\right)$}
        
        \item \label{lem:rep-inst-2}
        If $T_{\eps/2,n}\left(\cP\right)=O(t(n))$
        and $k(n)$ is computable in $O(t(k(n)))$ time, then\\
        $T_{\eps,n}\left(\cP^{(k)}\right)=O(t(k(n)) + \log^{1+o(1)} n)$.

        \item \label{lem:rep-inst-3} Suppose $T_{\eps/2,n}\left(\cP^{(k)}\right)=O(t(k(n)))$.
        For all $\gamma > 0$, if $\image(k)$ is decidable and $k^\dagger(n)$ computable
        both in $O(t(n) \log^{1+\gamma} k^\dagger(n))$ time, then
        $T_{\eps,n}\left(\cP\right)=O(t(n) \log^{1+\gamma} k^\dagger (n))$.
    \end{enumerate}
\end{lemma}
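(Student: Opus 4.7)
The plan is to handle the three parts in sequence, with Part 1 cited directly from \cite{GoldreichKNR12}. Parts 2 and 3 are time-complexity analogues of the corresponding query reductions: in each direction, I would simulate a tester for one problem on a ``virtual'' input of the other, relying on the fact that relative Hamming distance under the repetition operation is preserved up to small constant factors (deviations arise only from the truncation at the boundary).

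For Part 2, I would construct an $\epsilon$-tester for $\cP^{(k)}_n$ as follows: read $\bin{n}$, compute $k(n)$ in time $O(t(k(n)))$, and then (i) sample $O(1/\epsilon)$ positions $i \in [n]$ and check $z_i = z_{((i-1) \bmod k(n)) + 1}$, and (ii) run the assumed $\tfrac{\epsilon}{2}$-tester for $\cP_{k(n)}$ on the first $k(n)$ bits of $z$, accepting iff both checks pass. Correctness in the NO case follows from a triangle inequality: if $z$ is $\epsilon$-far from $\cP^{(k)}_n$, then either the first-block repetition $w \define z[1:k(n)]^r[1:n]$ is $\tfrac{\epsilon}{2}$-far from $z$ (detected by sampling), or $z[1:k(n)]$ is $\tfrac{\epsilon}{2}$-far from $\cP_{k(n)}$ (detected by the inner tester). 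The key time observation is that RAM query cost scales with the \emph{value} of the position, not with $n$, so querying positions in $[k(n)]$ costs $O(\log k(n))$, matching the cost inside the inner tester; the simulation thus fits in $O(t(k(n)))$, with the residual $O(\log^{1+o(1)} n)$ absorbing the input-length read, random sampling in $[n]$, and the per-sample modular reduction.

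For Part 3, I would invert the construction. Given an $\tfrac{\epsilon}{2}$-tester $\cT$ for $\cP^{(k)}$, I would build an $\epsilon$-tester for $\cP_n$ that decides $n \in \image(k)$ within the allotted time by assumption, computes $N \define k^\dagger(n)$ (so that $k(N) = n$), and simulates $\cT$ on the virtual input $y^r[1:N]$ by answering each position-$i$ query with $y_{((i-1) \bmod n) + 1}$. If $y \in \cP_n$, the virtual input lies in $\cP^{(k)}_N$; if $y$ is $\epsilon$-far from $\cP_n$, truncation error leaves the virtual input at least $\tfrac{\epsilon}{2}$-far from $\cP^{(k)}_N$ (worst-case loss $1 - 1/r$ with $r = \lceil N/n \rceil \ge 2$; the case $r = 1$ collapses to direct testing of $\cP_n$). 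The finitely many $n \in \Lengths(\cP) \setminus \image(k)$ can be handled by a constant-size lookup, and $n \notin \Lengths(\cP)$ by unconditional rejection. The computation of $k^\dagger(n)$ dominates the overhead at $O(t(n) \log^{1+\gamma} k^\dagger(n))$; each simulated query incurs only an additional $O(\log^{1+o(1)} k^\dagger(n))$ for the modular arithmetic, so the total remains within the stated bound.

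The main obstacle I anticipate is tracking log-cost RAM overheads carefully enough to confirm that per-query costs in the simulations do not accumulate beyond the stated polylogarithmic slack---especially in Part 3, where the outer input has length $n$ but the inner tester reasons about positions in $[N]$ that can be substantially larger---and verifying distance preservation under truncation when $N$ is not an exact multiple of $n$. Once the RAM-cost accounting is pinned down and the triangle-inequality argument is in place for the NO case of Part 2, both proofs should follow cleanly.
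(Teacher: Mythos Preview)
Your proposal is correct and follows essentially the same approach as the paper: the same repetition-check-plus-inner-tester construction for Part~2 and the same virtual-input simulation via $k^\dagger$ for Part~3, with matching correctness arguments (triangle inequality for the NO case in Part~2, factor-of-two truncation loss in Part~3) and the same log-cost RAM accounting. The only minor deviation is your edge-case handling in Part~3 (lookup table for $\Lengths(\cP)\setminus\image(k)$ and separate treatment of $n\notin\Lengths(\cP)$), whereas the paper simply rejects whenever $n\notin\image(k)$ and absorbs the finitely many exceptions into the asymptotics; both are fine.
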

\begin{proof}
    \textbf{\cref{lem:rep-inst-1}.} We prove the upper and lower bounds separately.

    Let $\cA$ be an $\frac{\epsilon}{2}$-tester for $\cP$.
    Construct the following $\epsilon$-tester $\cA'$ for $\cP^{(k)}$.
    First, $\cA'$ repeats the following $O(1/\epsilon)$ times: uniformly select a $j \in [k(n)]$
    and $r \in [(n/k(n)) - 1]$ and check whether $x[r \cdot k(n) + j] = x[j]$.
    If the check succeeds, $\cA'$ simulates $\cA$ on $x[1:k(n)]$.
    If $x \in \cP^{(k)}_n$, then the first test always passes and $\cA$ accepts
    with probability at least $\frac 23$, so $\cA'$ accepts with probability
    at least $\frac 23$.
    Suppose $x$ is $\epsilon$-far from $\cP^{(k)}_n$.
    If $x[1:k(n)]$ is less than $\frac{\epsilon}{2}$-far from $\cP_{k(n)}$ and $x$ is less than $\frac{\epsilon}{2}$-far from being a repeated instance of $x[1:k(n)]$, then $x$ is less than $\epsilon$-far from $\cP_{k(n)}$.
    Thus, with probability at least $\frac 23$, at least one of the checks fails
    and $\cA'$ rejects.
    If $\cA$ makes $q(n)$ queries, then  $\cA'$ makes $q(k(n)) + O(1/\epsilon)$ queries, which is $O(q(k(n)))$
    since $\epsilon$ is a constant.

    Given an $\frac{\epsilon}{4}$-tester $\cT$ for $\cP_n^{(k)}$,
    we obtain an $\frac{\epsilon}{2}$-tester $\cT'$ for $\cP$ as follows.
    On input $x$ of sufficiently large length $n$, tester $\cT'$ first
    checks whether $n \in \image(k)$, and rejects if this is not the case.
    Since $\Lengths(\cP) \setminus \image(k)$ is finite, this only ignores finitely many nontrivial input lengths.
    Otherwise, $\cT'$ computes $n' = k^\dagger(n)$, which satisfies $k(n') = n$.
    %since $k$ is eventually surjective.}
    Tester $\cT'$ then creates the instance $y = x^{\ceil{n'/n}}[1:n']$ and simulates
    $\cT$ on $y$. If $x \in \cP_n$, then $y \in \cP^{(k)}_{n'}$, and $\cT'$ accepts. If $x$
    is $\frac{\epsilon}{2}$-far from $\cP_n$, then $y$ is $\frac{\epsilon}{4}$-far from
    $\cP^{(k)}_{n'}$ (the factor of $2$ accounts for a possibly partial last repetition of $x$
    inside $y$), and $\cT'$ rejects. If $\cT$ makes $q'(n)$ queries, then $\cT'$ makes $q'(n')$ queries, implying $q'(n') \geq
    q(n) = q(k(n'))$ for infinitely many $n' \in \mathbb{N}$.

    \textbf{\cref{lem:rep-inst-2}.}
    We use the same construction of $\cA'$ as in the first part of \cref{lem:rep-inst-1}.
    Correctness thus follows immediately, and we only have to prove the runtime.
    Computing $k(n)$ takes time $O(t(k(n))$, and then for each of the
    $O(1/\epsilon)$ queries to check repetition, we incur a cost of $O(\log n)$
    for sampling random bits, a cost of $O(\log^{1+o(1)} n)$ for computing
    the index modulo $k(n)$ (note that $k(n) \le n$ here), and a cost of $O(\log
    n)$ to make the actual queries.
    If $\cA$ runs in $O(t(n))$ time, then its simulation takes $O(t(k(n)))$ time.
    In total, the running time of $\cA'$ is $O(t(k(n)) + \log^{1+o(1)} n)$
    %The tester is identical to the one in the first part of \cref{lem:rep-inst-1}.
    %We start with an $\frac{\epsilon}{2}$-tester $\cA$ for $\cP$ running in time $O(t(n))$, and construct the same $\cA'$.
    %The correctness thus follows immediately, and we only have to prove the runtime.
    %We first compute $k(n)$, which takes time $O(t(k(n))$, and then for each of the
    %$O(1/\epsilon)$ queries to check repetition, we incur a cost of $O(\log n)$
    %for sampling random bits, a cost of $O(\log^{1+o(1)} n)$ for computing
    %the index modulo $k(n)$ (note that $k(n) \le n$ here), and a cost of $O(\log
    %n)$ to make the actual queries. 
    %Simulation of $\cA$ takes $O(t(k(n)))$ time.
    %In total, the running time of $\cA'$ is $O(t(k(n)) + \log^{1+o(1)} n)$.

    \textbf{\cref{lem:rep-inst-3}.}
    We use the same construction of $\cT'$ as in the second part of \cref{lem:rep-inst-1}.
    Correctness thus follows immediately, and we only have to prove the runtime.
    By assumption, checking whether $n \in \image(k)$ and subsequently computing $n' = k^\dagger(n)$ takes $O(t(n) \log^{1+\gamma k^\dagger(n)})$ time.
    For each query made by $\cT$, tester $\cT'$ requires $O(\log^{1+o(1)} n')$ time to convert it to the corresponding query in $x$ via integer division.
    If $\cT$ runs in time $O(t(k(n)))$, then $\cT'$ runs in time 
    $O(t(k(n')) \log^{1+o(1)} n' + t(n) \log^{1+\gamma} k^\dagger(n))
    = O(t(n) \log^{1+\gamma} k^\dagger(n))$ time, as desired.
    % \renato{We proceed as in the second part of \cref{lem:rep-inst-1}.}\srnote{Since all items use the same tester, maybe we can describe it before starting to talk about the items?}
    % Given an $\frac{\epsilon}{2}$-tester $\cA'$ for $\cP_n^{(k)}$ running in $O(t(k(n)))$ time, we
    % obtain an $\epsilon$-tester $\cA$ for $\cP$ as follows.
    % On input $x$ of \renato{sufficiently large} length $n$, tester $\cA$ first
    % \renato{checks whether $n \in \image(k)$, and rejects if this is not the case. Otherwise, $\cA$}
    % computes \renato{$n' = k^\dagger(n)$, which hence satisfies $k(n') = n$}.
    % This step takes $O(t(n) \renato{\log^{1+\gamma} k^\dagger(n)})$ time.
    % Then $\cA$ provides query access to $y = x^{\ceil{n'/n}}[1:n']$ and simulates
    % $\cA'$ on $y$. The correctness follows as before, and we only have to analyze the
    % runtime.
    % For each query that $\cA'$ makes, $\cA$ requires $O(\log^{1+o(1)} n')$ time to convert it to the corresponding query in $x$ via integer division.
    % Thus, $\cA$ runs in
    % $O(t(k(n')) \log^{1+o(1)} n' + t(n) \renato{\log^{1+\gamma} k^\dagger(n)})
    % %= O(t(n) \log^{1+o(1)} n')
    % = O(t(n) \log^{1+\renato{\gamma}} k^\dagger(n))$ time, as desired.
\end{proof}

\begin{remark}
    \label{rem:same-tester}
    For \textbf{Items 1} and \textbf{2} of
    \cref{lem:concat,lem:rep-inst}, if the same testers $\cA_1$, $\cA_2$, and $\cA$ achieve both the time
    and query upper bounds for $\cP^{(1)}$, $\cP^{(2)}$, and $\cP$, respectively, then the
    constructed testers $\cA$ and $\cA'$ also achieve the time and query upper bounds simultaneously.
\end{remark}

\begin{remark}
    \label{rem:lb-type-preserved-rep-inst}
    For \textbf{Item 3} of \cref{lem:rep-inst}, if the tester for $\cP^{(k)}$ succeeds only on infinitely many input lengths and $\image(k) \Delta \Lengths(\cP)$ is finite, then we get a tester for $\cP$ that succeeds on an infinite subsequence of $\Lengths(\cP)$.
\end{remark}

\subsubsection{Languages to properties}\label{sec:languages-to-properties}
To construct properties with a target time complexity, we first start with a language that is hard to \emph{decide}, and then map it through a suitable code so that \textsc{Yes} and \textsc{No} instances of the language are $\eps$-far, meaning \emph{testing} for this version of the language is as hard as \emph{deciding} the original hard language.
Below, we formalize this construction.

%The following definition uses the codes from
%\cref{section:codes} to transform languages that are hard to decide into
%properties that are hard to test.

\begin{definition}[Properties from languages]
    \label{def:property-from-language}
    Let $L \subset \zo^*$ be a language. Define the property %\srnote{$L$ and $\cL$ might be too similar. Consider using a different notation instead of $\cL$? Suggested fix: $\cE$ is for code and $\cC$ is for property-from-language}
    $\cC = \cE(L)$ by
    %$\cL_n = \left\{ \cC(x) \circ 0^{n -
    %%\floor{\frac{\floor{r \cdot n}}{r}}}
    %\renato{\floor{r \cdot n}/r}}
    %\mid x \in L \text{ and } |x| = \floor{r \cdot n} \right\}$
    %for each $n \in \bN$,
    $\cC \define \{ \cE(x) : x \in L \}$,
    where $\cE$ is the code from \cref{thm:code}.
    %and $\circ$ denotes concatenation.
\end{definition}

%\begin{observation}
%    \label{obs:nonempty}
%    If language $L$ is a.e.\ nonempty, then so is $\cL =
%    \cC(L)$. Moreover, for all distinct $x, y \in \cL_n$, the Hamming distance
%    between $x$ and $y$ is $|x-y| = \Omega_{\delta,r}(n)$, since the code $\cC$
%    has relative distance $\delta$ and the construction of $\cL_n$ pads
%    codewords with at most $O(1/r)$ zeroes.
%\end{observation}

The following lemma translates the computational complexity of deciding language $L$ to that
of testing property $\cE(L)$.

\begin{lemma}
\label{lemma:property-from-language}
    Let $L \subset \zo^*$ be a language, and code $\cE$ and rate $r$ be as in \cref{thm:code}. Let $\cC \define \cE(L)$. Then for
    all sufficiently small constant $\epsilon > 0$, the following statements hold:
    \begin{enumerate}
        \item \label{lemma:property-from-language-1}
        If $L$ is decidable with error probability at most $\frac 13$ in
        $O(t(n))$ time, then there exists an $\epsilon$-tester for $\cC$ using
        $n$ queries and 
        %$O(n \log n + t(r \cdot (n+1)))$
        $O(n \log n + t(\floor{r \cdot n}))$
        time.

        \item \label{lemma:property-from-language-2}
        If $\eps$-testing $\cC$ can be done in
        %$O(t(r \cdot (n-1)))$
        % RF: it is not crucial to have ceiling (floor would suffice for the proof),
        % but ceiling works and is a relaxation of the hypothesis.
        $O(t(\ceil{r \cdot n}))$
        time,
        then there exists a decider for $L$ with error probability at most $\frac 13$
        running in time $O(n \log n + t(n))$.
    \end{enumerate}
\end{lemma}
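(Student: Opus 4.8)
The plan is to prove both directions by the standard ``encode to amplify distance'' reduction through the code $\cE$ of \cref{thm:code}, keeping the RAM overhead to a constant factor per operation.

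\emph{Part 1 (decider for $L$ $\Rightarrow$ tester for $\cC$).} Let $\cB$ decide $L$ in time $O(t(n))$ with error $\le \tfrac13$. I would build the following tester $\cT$ for $\cC$: on input $z$ of length $n$, first check whether $n$ is a multiple of $1/r$ in $O(\log n)$ time, rejecting if not (since then $\cC_n=\emptyset$); otherwise query all $n$ bits of $z$ (cost $O(n\log n)$), run the decoder from \cref{thm:code} to obtain a candidate message $x\in\zo^{rn}$, re-encode to form $\cE(x)$, and compare $\cE(x)$ with $z$ bitwise, all in $O(n\log n)$ time; if $\cE(x)\ne z$, reject; otherwise run $\cB$ on $x$ (placing $x$ on the parameter tape, so each \textsc{ReadParam} has $O(1)$ overhead) and accept iff $\cB$ accepts, which costs $O(t(rn)) = O(t(\lfloor r\cdot n\rfloor))$. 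Correctness: if $z\in\cC_n$ then $z=\cE(x)$ for some $x\in L$; the decoder recovers this $x$, the re-encoding matches, and $\cB$ accepts with probability $\ge\tfrac23$. If $z\notin\cC_n$ — in particular if $z$ is $\epsilon$-far from $\cC_n$ — then either $z$ is not a codeword, in which case the decoded $x$ satisfies $\cE(x)\ne z$ and $\cT$ rejects deterministically (this needs only that the decoder is correct on genuine codewords, which is all \cref{thm:code} promises), or $z=\cE(x)$ with $x\notin L$, in which case $\cB$ rejects with probability $\ge\tfrac23$. Hence $\cT$ is a valid $\epsilon$-tester (in fact a decider, for any $\epsilon$) with the claimed resources and error $\le\tfrac13$.

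\emph{Part 2 (tester for $\cC$ $\Rightarrow$ decider for $L$).} Let $\cS$ be an $\epsilon$-tester for $\cC$ running in time $O(t(\lceil r\cdot N\rceil))$ on inputs of length $N$. I would build a decider $\cB$ for $L$: on input $x$ of length $n$, compute $\cE(x)\in\zo^{n/r}$ in $O(n\log n)$ time and store its bits in registers $X_1,\dots,X_{n/r}$ (interleaving the registers used by the simulation, as in the proofs of \cref{lem:concat,lem:rep-inst}), then simulate $\cS$ on $\cE(x)$, answering a query to bit $j$ by the indirect fetch $X_{X_j}$ so that the simulation costs $O(1)$ per operation and in particular $O(\log j)$ per query — the same order $\cS$ already pays — and accept iff $\cS$ accepts. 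The running time is $O(n\log n) + O(t(\lceil r\cdot(n/r)\rceil)) = O(n\log n + t(n))$. Correctness: if $x\in L$ then $\cE(x)\in\cC_{n/r}$ and $\cS$ accepts with probability $\ge\tfrac23$; if $x\notin L$, then $y\ne x$ for every $y\in L$ with $|y|=n$, so $\dist(\cE(x),\cE(y))\ge\delta$ by the relative distance of the code and thus $\cE(x)$ is $\delta$-far from $\cC_{n/r}$ (trivially so if $\cC_{n/r}=\emptyset$); taking $\epsilon\le\delta$, $\cS$ rejects with probability $\ge\tfrac23$. The error stays $\le\tfrac13$ with no amplification.

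\emph{Main obstacle.} This lemma is largely routine; the one point that needs care is the RAM time accounting. We must make sure that invoking the sub-decider (Part 1) and simulating the tester (Part 2) add only a constant-factor overhead per operation, so that the $t(\cdot)$ term is not inflated by a $\log n$ factor; this forces us to store $\cE(x)$ starting at register $X_1$ and interleave auxiliary registers, exactly as in the simulation arguments already used for \cref{lem:concat,lem:rep-inst}. A secondary bookkeeping point is the length arithmetic — a length-$N$ codeword decodes to a length-$rN$ message and a length-$n$ message encodes to length $n/r$ — together with the floors and ceilings handling lengths $n$ that are not multiples of $1/r$, where the relevant slice of $\cC$ is empty and the reductions still go through.
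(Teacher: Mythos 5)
Your proof is correct and takes essentially the same approach as the paper's: in Part~1, decode-and-decide (with rejection on non-codewords); in Part~2, encode the input and simulate the tester, using the code's relative distance to push \textsc{No} inputs $\epsilon$-far from $\cC$. The only differences are cosmetic—you make the non-codeword check explicit by re-encoding and comparing (the paper just says "if $w$ is not a valid codeword, reject"), and you spell out the register-interleaving for the RAM simulation, which the paper leaves implicit.
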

\begin{proof}
    \textbf{\cref{lemma:property-from-language-1}.} Let $\cA$ be an $O(t(n))$-time randomized algorithm
    for deciding $L$. We give a decider (\ie 0-tester) $\cT$ 
    %an (exact) tester for $\cL$ (\ie for $\epsilon=0$)
    which works as follows. On input $w$ of length $n$, tester $\cT$ immediately rejects if $n$ is not an integer multiple of $\frac 1 r$.
        Otherwise, $\cT$ runs the decoder from \cref{thm:code} on $w$. If $w$ is not a valid
        codeword, $\cT$ rejects. Otherwise, if there is a message $x$ (with $|x| =
        r \cdot n = \floor{r \cdot n}$) such that $\cE(x) = w$, the tester simulates
    the algorithm $\cA$ on input $x$ and accepts iff $\cA$ accepts.

    %queries
    %all of $w$. It then computes
    %%$m \define \floor{\frac{\floor{r \cdot n}}{r}}$
    %$m \define \renato{\floor{r \cdot n}/r}$
    %and runs the decoder from \cref{thm:code} on $w' \define w[1:m]$. If $w'$ is
    %a not valid codeword, or if $w[(m+1):n]$ is not the all-zeroes string, then
    %the tester rejects. Otherwise, if there is a message $x$ such that $\cC(x) =
    %w'$, the tester simulates the algorithm $\cA$ on input $x$ and accepts if
    %and only if $\cA$ accepts.

    Correctness and query complexity follow directly from the construction.
    For the time complexity,
   checking the divisibility of $n$ by $\frac 1 r$ takes
    time $O(\log^{1+o(1)} n)$;
    querying all of $w$ and decoding it
    both take time $O(n \log n)$;
    %Finally, note that
    %$|x| < r \cdot (n+1)$, since
    %otherwise we would have $|w'| = \floor{|x|/r} \ge n+1$, a contradiction.
    %Hence running algorithm $\cA$ on input $x$ takes time $O(t(r \cdot (n+1)))$.
    %\renato{$|x| = \floor{r \cdot n}$ by construction, so
    and running algorithm $\cA$ takes time $O(t(|x|)) = O(t(\floor{r \cdot n}))$.

    \textbf{\cref{lemma:property-from-language-2}.} Let $\cT$ be an $\epsilon$-tester with running time
    $O(t'(n))$. We give a randomized algorithm $\cA$ for deciding $L$ as
    follows. On input $x$ of length $m$, the algorithm first computes $n \define \frac m r$ and runs the encoder from \cref{thm:code} to
    obtain string $w \define \cE(x)$ of length $n$. Then,
    %the
    %smallest number $n \in \left\{ \floor{m/r}, \ceil{m/r} \right\}$ satisfying
    %$\floor{r \cdot n} = m$. Note that this implies that $n \le \ceil{m/r} \le
    %\frac{m}{r} + 1$, and hence $m \ge r \cdot (n-1)$. Algorithm $\cA$ then runs
    %the encoder from \cref{thm:code} to obtain $w' \define \cC(x)$, and computes
    %the string $w \define w' \circ 0^{n - \floor{m/r}}$ of length $n$.
    %Finally,
    $\cA$ simulates the tester $\cT$ on input $w$, and accepts iff
    $\cT$ accepts.

    We first show correctness. When $x \in L$, we have $w \in \cC_n$ by the
    construction of $\cC$, so $\cT$ accepts  with probability
    at least $\frac 23$, and so does $\cA$. When $x \not\in L$,
    the string $w$ is $\epsilon$-far from every other codeword for
        sufficiently small $\epsilon$ by \cref{thm:code}, so in particular
    $w$ is $\epsilon$-far from $\cC$.
    %the string $w' = \cC(x)$ is
    %$2\epsilon$-far from every other codeword for sufficiently small $\epsilon$,
    %by \cref{thm:code}. Since the length of the padding in the construction of
    %$\cL_n$ is $|w| - |w'| \le O(1/r)$, we conclude that $w$ is $\epsilon$-far
    %from $\cL$ as long as $n$ is sufficiently large.
    Hence $\cT$ rejects
    with probability at least $\frac 23$, and so does $\cA$.

    We now analyze the time complexity of $\cA$. Computing
    $w = \cE(x)$ takes time $O(m \log m)$ by \cref{thm:code}, and simulating $\cT$ takes times
    %$O(t'(n)) = O(t'(m/r + 1))$.
    $O(t'(n)) = O(t'(m/r))$.
    If
    %$t'(n) = O(t(r \cdot (n-1)))$,
   $t'(n) = O(t(\ceil{r \cdot n}))$,
    then this is
    at most $O(t(m))$. Hence the overall running time of $\cA$ is $O(m \log m +
    t(m))$.
\end{proof}

\begin{remark}
    \label{rem:lb-type-preserved-prop-from-lang}
    For \textbf{Item 2} of \cref{lemma:property-from-language}, if the tester for $\cC$ succeeds on infinitely many input lengths that are multiples of $\frac 1 r$, then we get a decider for $L$ that succeeds on infinitely many input lengths.
\end{remark}

\subsubsection{Combining the constructions}\label{sec:combining-contructions}
The next definition is used to produce problem instances with prescribed time complexity. Suppose a hard problem $\mathsf{P}$ has time complexity $T(k)$
%\srnote{Don't we need  big-$O$ notation for time complexity? \renato{I guess we're already being informal here anyway, since the complexity is really between $\Omega(T_\ell)$ and $O(T_u)$, but we could rephrase it somehow.}}
for inputs of length $k$, with $T_\ell(k) \le T(k) \le T_u(k)$. Given a target function $t(n)$, we aim to construct, for each $n$, an instance of $\mathsf{P}$ of length $k = k(n)$ requiring time $T(k) \approx t(n)$ to solve. If $T$ were invertible, we could take $k(n) = T^{-1}(t(n))$, but since this may not hold, we use the following definition.
% The next definition captures the following scenario. Suppose we have in our
% hands a ``hard problem'' $\mathsf{P}$ whose time complexity $T(k)$ on inputs of length
% $k$ satisfies $T_\ell(k) \le T(k) \le T_u(k)$. Now, given a target function $t =
% t(n)$, we would like to produce for each $n$ an instance of $\mathsf{P}$ of
% length $k = k(n)$ that requires roughly $T(k) \approx t(n)$ time to solve. If
% $T$ were invertible, we could set $k(n) = T^{-1}(t(n))$. Since this may not be
% the case in general, we use the following definition instead.

\begin{definition}[Attainment]
    \label{def:attains}
    Let $F, T_\ell, T_u, t : \bN \to \bR$ be nondecreasing and $k : \bN \to \bN$ be
    a function
    % \rfnote{Updated slightly, see next footnote. Also removed the condition
    % that $T_\ell \le T_u$, as it is never needed explicitly and is implied asymptotically
    % in our applications.}.
    %with $T_\ell \le T_u$.
    We say the pair $(T_\ell, T_u)$
    \emph{attains $t$ up to gap $F$ with length function $k$} if,
    for
    %all $a \in \bN$ and
    all large enough $n \in \bN$, we have
    %$T_\ell(k(n) - a) \ge t(n)$ and $T_u(k(n) + a) \le F(t(n))$.
    $T_\ell(k(n)) \ge t(n)$ and $T_u(k(n)) \le F(t(n))$.
\end{definition}

The next definition gives the ``recipe'' by which, given a property $\cQ$ with nearly maximal query
complexity and a language $L$ that is hard to decide, as well as target query and time complexity
functions, we construct a property attaining the desired query and time complexities. We then
obtain our hierarchy theorems by plugging in specific choices for $\cQ$ and $L$.

%\renato{TODO: revamp the naming of the input lenghts!! Maybe capital letters for
%the entire input, and lowercase for the mini-input? Could also consider $n$ vs
%$k$...}

\newcommand{\hp}[1]{\cref{def:hard-property}.\ref{#1}}

\begin{definition}[Hard property]
    \label{def:hard-property}
    Let $\cQ \subset \zo^*$ be an a.e.\ nonempty property satisfying the
    following conditions.
    \begin{enumerate}
        \item \label{hp-Q-complexity}
        $T_{\eps,n}(\cQ)=O(n \log n)$ for all constant $\eps\in(0,1)$ and $Q_{\eps,n}(\cQ)=\Omega\big(\frac n {\log n}\big)$ for some sufficiently small constant
        $\epsilon > 0$.
    
        \item \label{hp-Q-yes} $\cQ$ has an $O(n \log n)$-time
            YES query provider (\cref{def:query-providers}).
    \end{enumerate}
    Let $T_\ell, T_u : \bN \to \bR$ be $\Omega(n \log n)$ nondecreasing
    functions satisfying $T_\ell (n) \le T_u (n)$ for all large enough $n \in
    \bN$, and let $L \subset \zo^*$ be an a.e.\ nonempty language satisfying the
    following condition.
    \begin{enumerate}[resume]
        \item \label{hp-L-complexity}
        $L$ is decidable with error
        probability at most $\frac 13$ by an $O(T_u(n))$-time randomized algorithm, but no $O(T_\ell(n))$-time
         randomized algorithm.
    \end{enumerate}
    Let $q : \bN \to \bN$ and $t, F : \bN \to \bR$
    % \srnote{Why do $q$ and $t$ have different range? Also $k'$ does not work well with inversion, maybe use $k_0$ or $k_*$ instead.
    % \renato{Real-valued range helps with the technical steps of the hierarchy theorems,
    % whereas integer-valued helps with other steps (\eg eventual surjectivity). Thus relaxing
    % some of the ranges to $\bR$ internally to our proofs only.
    % I agree we should rename $k'$.}}
    be nondecreasing functions and $k_* : \bN \to \bN$ be a function
    %\rfnote{Updated slightly: in the strong
    % hierarchy, we do not have nondecreasing $k'$, but the proof still works given our updated
    % definitions.}
    such that the following conditions hold.
    \begin{enumerate}[resume]
        \item \label{hp-qt}
        %Both $q$ and $t$ are nondecreasing, and \renato{$q(n) \le n/2$} for all large
        %enough $n \in \bN$. Moreover, $q(n)$ is \renato{eventually surjective,}
        %computable in $O(q(n) \log
        %q(n))$ time, and satisfies $q(n) \log q(n) = O(t(n) / \log^{1.01} n)$.
        $q(n)$ is eventually surjective, and satisfies $q(n) \le n$ for all $n$
        and $q(n) \log q(n) = O\big(\frac{t(n)} {\log^{1.01} n}\big)$. Moreover,
        $q(n)$ is computable in $O(q(n) \log q(n))$ time.

        \item \label{hp-attainment}
        The pair $(T_\ell, T_u)$ attains $t$ up to gap $F : \bN \to \bR$ with
        eventually surjective
        length function $k_*$ (\cref{def:attains}).

        \item \label{hp-kprime-properties}
        $k_*(n) = O(q(n))$ and $k_*(n) \le r \cdot n$, where $r \in (0,1)$ is the
        constant from \cref{thm:code}. Moreover, $k_*(n)$ is computable in time
        $O(T_u(k_*(n)))$ and has a right inverse ${k_*}^\dagger$ such that
        ${k_*}^\dagger(n)$ is computable in time $O(T_\ell(n))$.
    \end{enumerate}
    %Let $k : \bN \to \bN$ be an eventually surjective function satisfying the following conditions.
    %\begin{enumerate}[resume]
    %    \item \label{hp-k-bound}
    %    For all large enough $n \in \bN$, we have $k(n) \le n$ and $\abs*{k(n) -
    %    \frac{k'(n)}{r}} = O(1)$ (where $r \in (0, 1)$ is the absolute constant
    %    from \cref{thm:code}).
    %    
    %    %\item \label{hp-k-complexity}
    %    %$k(n)$ and $k^{-1}(n)$ are computable in time
    %    %%$O(T_u(r \cdot (k(n)+1)))$
    %    %\renato{$O(T_u(\floor{r \cdot k(n)}))$}
    %    %and
    %    %%$O(T_\ell(r \cdot (n-1)))$,
    %    %\renato{$O(T_\ell(\ceil{r \cdot n}))$,}
    %    %respectively.
    %\end{enumerate}
    Let $\cC \define \cE(L)$ be the property from
    \cref{def:property-from-language},
    and define $k : \bN \to \bN$ by $k(n) \define \frac{k_*(n)}r$.
    Define the \emph{hard property} $\cP$ as follows:
    for each $n \in \bN$, let $\cP_{2n-1} \define \emptyset$ and
    $\cP_{2n} \define \concat\left( \cQ_n^{(q)}, \cC_n^{(k)} \right).$
\end{definition}

The following lemma bounds the complexity of testing the hard property
from \Cref{def:hard-property}.

\begin{lemma}[Complexity of testing the hard property]
    \label{lemma:complexity-hard-property}
    Let $\epsilon > 0$ be a sufficiently small constant. Let $\cQ$ be a property, $L$ be a
    language, and $q : \bN \to \bN$ and $t, F, T_\ell, T_u : \bN \to \bR$
    be functions satisfying \cref{def:hard-property} with theresulting property $\cP$. Then
    the following statements hold:
    \begin{enumerate}
        \item \label{lemma:complexity-hard-property-1}
        $\epsilon$-testing $\cP_{2n}$
        requires $\Omega\left(\frac{q(n)}{\log q(n)}\right)$ queries and $\Omega\left(\frac{t(n)}{\log^{2.01} (n)}\right)$ time.

        \item \label{lemma:complexity-hard-property-2}
        There exists an $\epsilon$-tester for
        $\cP_{2n}$
        using $O(q(n))$ queries and $O(q(n) \log^2 n + F(t(n)) \log n + \log^{2+o(1)} n)$ time.
    \end{enumerate}
\end{lemma}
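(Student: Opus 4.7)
}
The proof splits into the lower bounds (\cref{lemma:complexity-hard-property-1}) and upper bounds (\cref{lemma:complexity-hard-property-2}). For the query bounds, the plan is to chain the upper/lower-bound directions of \cref{lem:concat-1} and \cref{lem:rep-inst-1}. The query upper bound $O(q(n))$ follows by splitting the budget: \cref{lem:rep-inst-1} combined with \hp{hp-Q-complexity} and the trivial $Q_{\cdot, m}(\cQ) \le m$ gives $Q_{\cdot, n}(\cQ^{(q)}) = O(q(n))$, while $Q_{\cdot, n}(\cC^{(k)}) \le k(n) = O(q(n))$ by \hp{hp-kprime-properties}. For the lower bound, the $\Omega(q(n)/\log q(n))$ bound on $\cQ$ from \hp{hp-Q-complexity} lifts to $\cQ^{(q)}$ via repeated instances and then to $\cP$ via concatenation, at the cost of halving the constant $\epsilon$ twice.

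For the time lower bound, I plan a direct reduction: assume for contradiction an $\epsilon$-tester $\cT$ for $\cP$ in time $C \cdot t(n)/\log^{2.01} n$, and build a randomized decider $\cA$ for $L$ that contradicts \hp{hp-L-complexity}. On input $x \in \zo^m$ with $m \in \image(k_*)$ (which is cofinite), $\cA$ computes $n = k_*^\dagger(m)$ in time $O(T_\ell(m))$, forms $w = \cE(x) \in \zo^{m/r}$ in time $O(m \log m)$, initializes the YES query provider for $\cQ$ at length $q(n)$ in time $O(q(n) \log q(n))$, then simulates $\cT$ on the length-$2n$ implicit input $z \circ y$, answering queries to $z \in \cQ^{(q)}_n$ via the provider (at index mod $q(n)$) and to $y = w^{\text{rep}}[1:n]$ via bits of $w$ (at index mod $m/r$), each at $O(\log^{1+o(1)} n)$ overhead. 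Correctness follows the same pattern as in \cref{lem:rep-inst} and \cref{lemma:property-from-language}: if $x \in L$ then $y \in \cC^{(k)}_n$ and thus $z \circ y \in \cP_{2n}$, while if $x \notin L$ then $w$ is $\delta$-far from every codeword by \cref{thm:code}, which propagates to $z \circ y$ being $\Omega(\delta)$-far from $\cP_{2n}$ (losing at most a factor of $4$ to partial repetition and concatenation); picking $\epsilon$ a small fraction of $\delta$ ensures $\cT$ distinguishes the two cases. For the cost, \hp{hp-attainment} gives $t(n) \le T_\ell(m)$, \hp{hp-qt} gives $q(n) \log q(n) = O(t(n)/\log^{1.01} n) = O(T_\ell(m))$, and the simulation cost is $O(T_\cT(2n) \log^{1+o(1)} n) = O(t(n)/\log^{1-o(1)} n) = O(T_\ell(m))$. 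Summing up, $\cA$ decides $L$ in $O(T_\ell(m))$ time, contradicting \hp{hp-L-complexity}.

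For the time upper bound, I compose the upper-bound directions of the relevant lemmas. For the $\cQ^{(q)}$ component, \cref{lem:rep-inst-2} with \hp{hp-Q-complexity} gives $T_{\cdot, n}(\cQ^{(q)}) = O(q(n) \log q(n) + \log^{1+o(1)} n)$. For the $\cC^{(k)}$ component, \cref{lemma:property-from-language-1} yields $T_{\cdot, n}(\cC) = O(n \log n + T_u(r n))$, and then \cref{lem:rep-inst-2} combined with \hp{hp-attainment} and \hp{hp-kprime-properties} gives $T_{\cdot, n}(\cC^{(k)}) = O(k(n) \log k(n) + T_u(k_*(n)) + \log^{1+o(1)} n) = O(q(n) \log q(n) + F(t(n)) + \log^{1+o(1)} n)$. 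Combining via \cref{lem:concat-2} introduces a single $\log n$ multiplicative overhead, delivering the claimed $O(q(n) \log^2 n + F(t(n)) \log n + \log^{2+o(1)} n)$.

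The main obstacle is the time lower bound, where the challenge is threading together the YES query provider, the repetition structure, and the computation of $k_*^\dagger$ so that no individual cost exceeds $O(T_\ell(m))$. The slack granted by \hp{hp-qt}---specifically the $\log^{1.01} n$ denominator in $q(n) \log q(n) = O(t(n)/\log^{1.01} n)$---is precisely what absorbs the $\log^{1+o(1)} n$ per-query simulation overhead while keeping the total within the $T_\ell(m)$ budget.
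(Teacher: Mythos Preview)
Your proposal is correct. For the query bounds and the time upper bound, you follow essentially the same modular chain as the paper, composing \cref{lem:concat}, \cref{lem:rep-inst}, and \cref{lemma:property-from-language} in the expected directions.

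For the time lower bound, you take a genuinely different route. The paper argues by contrapositive through the lemma chain: from \cref{lemma:property-from-language}(\ref{lemma:property-from-language-2}) it deduces there is no fast tester for $\cC$; then \cref{lem:rep-inst}(\ref{lem:rep-inst-3}) (invoking the computability of ${k_*}^\dagger$ and decidability of $\image(k)$) lifts this to no fast tester for $\cC^{(k)}$; finally \cref{lem:concat}(\ref{lem:concat-3}) (invoking the YES query provider for $\cQ^{(q)}$) lifts this to no fast tester for $\cP$. You instead unwind all three steps into a single direct reduction, building a decider for $L$ straight from a hypothetical fast tester for $\cP$. Your approach is shorter and makes the cost accounting more transparent---in particular, it makes it visible that the $\log^{1.01}$ slack in \hp{hp-qt} is exactly what absorbs the per-query overhead. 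The paper's approach, on the other hand, isolates each transformation as a reusable black box, which is cleaner if one wants to vary the ingredients (as the paper does when instantiating the weak versus strong hierarchy, and in the infinitely-often variant of \cref{rem:attain-relax}). One small wording slip: when $x \notin L$, the string $w = \cE(x)$ \emph{is} a codeword, so say ``$\delta$-far from every element of $\cC$'' rather than ``from every codeword''.
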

\begin{proof}
    We prove the lower and upper bounds separately.
    %\rfnote{Minor TODO: we should replace the
    %hard-coded numerical references to lemmas with refs to their labels.}

    \textbf{\cref{lemma:complexity-hard-property-1}.}
    First, we show the query complexity lower bound. By \cref{lem:rep-inst},
    using the assumption that $q$ is eventually surjective
    and the assumed query
    complexity lower bound for testing $\cQ_n$ (\hp{hp-Q-complexity}), 
    $Q_{2\eps,n}\big(\cQ^{(q)}\big)=\Omega\big(\frac{q(n)}{ \log(q(n))}\big)$.
    Also, $\cQ^{(q)}$ is a.e.\ nonempty by \cref{def:repeated-instances}
    and the assumption that $\cQ$ is a.e.\ nonempty.
    %the query complexity of $\epsilon$-testing $\cQ_n^{(q)}$ is $\Omega(q(n) / \log(q(n)))$.
    Since $L$ is a.e.\ nonempty,
    %so is $\cL$ by \cref{obs:nonempty}, and thus
    so is $\cC^{(k)}$
    by \cref{def:repeated-instances,def:property-from-language}
    and because $k(n)$ is an integer multiple of $\frac 1 r$.
    %by the construction of repeated instances.
    Thus,
    $Q_{\eps,2n}(\cP)=\Omega\big(\frac{q(n)}{ \log(q(n))}\big)$ by
    \cref{lem:concat}.\ref{lem:concat-1}.

    % the query
    % complexity of $\epsilon$-testing $\cP_{2n}$ is $\Omega(q(n) / \log(q(n)))$.

    Next, we show the time complexity lower bound. Since by assumption (\hp{hp-L-complexity}), $L$
    cannot be decided by any randomized $O(T_\ell(n))$-time algorithm and
    $T_\ell(n) = \Omega(n \log n)$,
    \cref{lemma:property-from-language}.\ref{lemma:property-from-language-2} implies that
    there is no
    %$O(T_\ell(r \cdot (n-1)))$-time
    $O(T_\ell(\ceil{r \cdot n}))$-time
    tester for $\cC$.

    We claim this implies, via \cref{lem:rep-inst}.\ref{lem:rep-inst-3},
    that there is no $O(T_\ell(r \cdot k(n)) / \log^{1.005} n)$-time tester for
    $\cC_n^{(k)}$. First, note that $k(n) \le n$ since $k(n) = k_*(n)/r$ and
    $k_*(n) \le r \cdot n$ (\hp{hp-kprime-properties}).
    Also, $\Lengths(\cC_n) \setminus \image(k)$
    is finite because $\Lengths(\cC_n)$ only contains integer multiples
    of $1/r$ by \cref{def:repeated-instances,def:property-from-language},
    while $\image(k)$ contains a.e.\ integer multiple of $1/r$ because
    $k_*(n)$ is eventually surjective.
    Since ${k_*}^\dagger(n)$ is computable in time $O(T_\ell(n))$ by
    \hp{hp-kprime-properties},
    a right inverse $k^\dagger(n)$ of $k$ is computable (by first performing an
    integer division by $1/r$) in time
    $O(\log^{1+o(1)}(n) + T_\ell(\floor{r \cdot n})) = O(T_\ell(\floor{r \cdot n}))$,
    and similarly $\image(k)$ is also decidable in this time. Hence applying
    \cref{lem:rep-inst}.\ref{lem:rep-inst-3} with parameter $\gamma=0.005$ and
    function $t'(m) = T_\ell(\floor{r \cdot m}) / \log^{1.005} k^{-1}(m)$ yields
    that there is no $O(T_\ell(r \cdot k(n))/\log^{1.005} n)$-time
    tester for $\cC_{n}^{(k)}$.

    Since $T_\ell(r \cdot k(n)) = T_\ell(k_*(n)) \ge t(n)$ by attainment
    (\cref{def:attains}), we conclude that there is no
    $O(t(n) / \log^{1.005} n)$-time tester for $\cC_n^{(k)}$.
    %Since $k$ is eventually surjective and
    %%invertible in $O(T_\ell(r \cdot (n-1)))$ time
    %$k^{-1}(n)$ is computable in
    %%$O(T_\ell(r \cdot (n-1)))$
    %\renato{$O(T_\ell(\ceil{r \cdot n}))$}
    %time
    %by assumption (\hp{hp-k-complexity}), \renato{applying}
    %\cref{lem:rep-inst}.\ref{lem:rep-inst-3}
    %\renato{with parameter $\gamma=0.005$ and
    %function $t'(m) = T_\ell(\ceil{r \cdot m}) / \log^{1.005} k^{-1}(m)$ yields}
    %%implies
    %that there is no
    %%$O(T_\ell(r \cdot (k(n)-1)/\log^{\renato{1.005}} (n))$-time
    %\renato{$O(T_\ell(\ceil{r \cdot k(n)}/\log^{\renato{1.005}} n)$-time}
    %tester for $\cL_{n}^{(k)}$. By the assumptions on $k$ and $k'$
    %(\hp{hp-k-bound}),
    %%\[
    %%        r \cdot (k(n)-1)
    %%        \ge r \cdot \left(\frac{k'(n)}{r} - O(1)\right)
    %%        \ge k'(n) - O(1) \,.
    %%\]
    %\[
    %    \renato{
    %        \ceil{r \cdot k(n)}
    %        \ge r \cdot \left(\frac{k'(n)}{r} - O(1)\right)
    %        \ge k'(n) - O(1) \,.
    %    }
    %\]
    %Furthermore,
    %$T_\ell (k'(n) - O(1)) \geq t(n)$
    %(\hp{hp-attainment} and \cref{def:attains}), and so there is no
    %$O(t(n)/\log^{\renato{1.005}} n)$-time tester for $\cL_n^{(k)}$.
    By assumption (\hp{hp-Q-yes}), $\cQ$ has an $O(n \log n)$-time
    YES query provider, which we can use together with the
    computability of $q(n)$ in $O(q(n) \log q(n))$ time to derive an
    $O(q(n) \log q(n))$-time YES query provider for
    $\cQ_n^{(q)}$. Because $O(q(n) \log q(n)) = O(t(n) / \log^{1.01} n)$ by
    assumption (\hp{hp-qt}), an $O(t(n) / \log^{2.01}
    n)$-time tester for $\cP_{2n}$ would yield by
    \cref{lem:concat}.\ref{lem:concat-3} an
    $O(t(n) / \log^{1.005} n)$-time tester for $\cC_n^{(k)}$,
    which is a contradiction.

    \textbf{\cref{lemma:complexity-hard-property-2}.}
    First, we show the query complexity upper bound. Since $\cQ_n$ is
    trivially testable using $n$
    queries, \cref{lem:rep-inst}.\ref{lem:rep-inst-1} implies that
    %$\eps$-testing $\cQ_n^{(q)}$ has query complexity $O(q(n))$.
    $Q_{\epsilon,n}(\cQ^{(q)}) = O(q(n))$.
    By \cref{lemma:property-from-language},
    %$\cL$ has query complexity $O(n)$,
    $Q_{\epsilon/2,n}(\cC) = O(n)$,
    so
    applying \cref{lem:rep-inst}.\ref{lem:rep-inst-1},
    %$\eps$-testing $\cL_n^{(k)}$ has query complexity $O(k(n))$,
    $Q_{\epsilon,n}(\cC^{(k)}) = O(k(n))$,
    which is $O(q(n))$ by \hp{hp-kprime-properties}.
    %    Definitions~\ref{def:hard-property}.\ref{hp-attainment} and
    %    \ref{def:hard-property}.\ref{hp-k-bound}.
    %(by assumption, Definitions \ref{def:hard-property}.11 and \ref{def:hard-property}.16)).
    Applying \cref{lem:concat}.\ref{lem:concat-1}
    to $\cQ_n^{(q)}$ and $\cC_n^{(k)}$, we obtain
    that
    $Q_{\epsilon,2n}(\cP) = O(q(n))$.
    %$\eps$-testing $\cP_{2n}$ has query complexity $O(q(n))$.

    Next, we show the time complexity upper bound.
    By assumption (Definitions~\ref{def:hard-property}.\ref{hp-Q-complexity}
    and \ref{def:hard-property}.\ref{hp-qt}),
    %$\eps$-testing $\cQ$ has time complexity $O(n \log n)$
    $T_{\epsilon/2,n}(\cQ) = O(n \log n)$
    and $q(n)$ is computable in time $O(q(n) \log q(n))$, so applying
    \cref{lem:rep-inst}.\ref{lem:rep-inst-2},
    %$\eps$-testing $\cQ_n^{(q)}$ has time complexity $O(q(n) \log q(n) + \log^{1+o(1)} (n))$.
    $T_{\epsilon,n}(\cQ^{(q)}) = O(q(n) \log q(n) + \log^{1+o(1)} n)$.

    By \hp{hp-L-complexity} and
    \cref{lemma:property-from-language}.\ref{lemma:property-from-language-1},
    %$\eps$-testing $\cL$ has time complexity $O(n \log n + T_u(r(n+1))) = O(T_u(r(n+1)))$,
    $T_{\epsilon/2,n}(\cC) = O(n \log n + T_u(\floor{r \cdot n}))
    = O(T_u(\floor{r \cdot n}))$, recalling that $T_u(m) = \Omega(m \log m)$.
    By \hp{hp-kprime-properties}, $k_*(n)$ is computable in time
    $O(T_u(k_*(n)))$, and hence $k(n) = k_*(n)/r$ is computable in time
    $O(T_u(k_*(n)) + \log^{1+o(1)}(k_*(n))) = O(T_u(r \cdot k(n)))$.
    Hence \cref{lem:rep-inst}.\ref{lem:rep-inst-2} implies that
    $T_{\epsilon,n}(\cC^{(k)}) = O(T_u(r \cdot k(n)) + \log^{1+o(1)} n)
    = O(T_u(k_*(n)) + \log^{1+o(1)} n) = O(F(t(n)) + \log^{1+o(1)} n)$,
    where the last step is valid by attainment (\cref{def:attains}).

    %and by \hp{hp-k-complexity}, $k(n)$ is computable in time
    %%$O(T_u(r \cdot (k(n)+1)))$,
    %\renato{$O(T_u(\floor{r \cdot k(n)}))$,}
    %so applying \cref{lem:rep-inst}.2,
    %$\eps$-testing $\cL_n^{(k)}$ has time complexity
    %$O(T_u(r \cdot (k(n) + 1))) = O(T_u(k'(n) + 1 + 4r)) = O(F(t(n)))$
    %$O(T_u(r \cdot (k(n) + 1)) + \log^{1+o(1)} n)
    %= O(T_u(k'(n) + O(1)) + \log^{1+o(1)} n) = O(F(t(n)) + \log^{1+o(1)} n)$
    %\renato{$T_{\epsilon,n}(\cL^{(k)}) = O(T_u(\floor{r \cdot k(n)}) + \log^{1+o(1)} n)
    %= O(T_u(k'(n) + O(1)) + \log^{1+o(1)} n) = O(F(t(n)) + \log^{1+o(1)} n)$}
    %(by Definitions~\ref{def:hard-property}.\ref{hp-attainment} and
    %\ref{def:hard-property}.\ref{hp-k-bound}).
    Applying \cref{lem:concat}.\ref{lem:concat-2} to $\cQ_n^{(q)}$ and $\cC_n^{(k)}$,
    we get that
    %$\eps$-testing $\cP_{2n}$ has time complexity
    $T_{\epsilon,2n}(\cP) = O(q(n) \log q(n) \log n +
    F(t(n)) \log n + \log^{2+o(1)} n) = O(q(n) \log^2 n + F(t(n)) \log n + \log^{2+o(1)} n)$.

    By \cref{rem:same-tester}, the same tester achieves both the query and time complexity upper bound.
\end{proof}

\begin{remark}
    \label{rem:attain-relax}
    \cref{lemma:complexity-hard-property}.\ref{lemma:complexity-hard-property-1}
    also admits the following \textbf{infinitely often} version. If we
    \emph{strengthen} the hypothesis of \cref{def:hard-property} by requiring that the
    $T_\ell(n)$ lower bound for deciding language $L$ holds even against
    algorithms that succeed on infinitely many input lengths, and then
    \emph{weaken} the notion of attainment in \cref{def:attains} by only
    requiring the inequality $T_\ell(k(n)) \ge t(n)$ on infinitely many $n \in
    \bN$, then a straightforward modification of the proof of
    \cref{lemma:complexity-hard-property}.\ref{lemma:complexity-hard-property-1}
    (using
    \cref{rem:lb-type-preserved-concat,rem:lb-type-preserved-rep-inst,rem:lb-type-preserved-prop-from-lang})
    yields that any tester for $\cP_{2n}$ that succeeds in the usual sense (\ie
    at a.e.\ input length) must use $\Omega\left(\frac{t(n)}{\log^{2.01}
    n}\right)$ time (on infinitely many input lengths).
    
    %We can make the following relaxation to the notion of attainment (\cref{def:attains}) based on the nature of the lower bound we assume.
    %If the lower bound against the language $L$ in \cref{def:hard-property} asserts instead that no randomized $O(T_\ell(n))$-time algorithm decides $L$ {\bf infinitely often} with error probability at most $\frac 13$, then attainment for the lower bound ($T_\ell(k(n)) \geq t(n)$) only needs to hold infinitely often.
    %This follows from the proof of \cref{lemma:complexity-hard-property}.\ref{lemma:complexity-hard-property-1} and \cref{rem:lb-type-preserved-concat,rem:lb-type-preserved-rep-inst,rem:lb-type-preserved-prop-from-lang}.
    %\footnote{Should we add a modification to \cref{def:attains} defining ``weak attainment'' as above?}
\end{remark}

\subsection{Hierarchy theorems}\label{sec:hierarchies-proof}
\subsubsection{Unconditional weak hierarchy theorem (proof of \cref{thm:weak})}
\label{sec:hierarchies-proof-weak}

In this subsection, we combine the property $\cQ$ from \cref{sec:3cnf-property} with the hard
language $L$ obtained in \cref{sec:hard-lang-ram} to satisfy the requirements of
\cref{def:hard-property}, and then use \cref{lemma:complexity-hard-property} to establish
\cref{thm:weak}.

%To prove the unconditional weak hierarchy theorem, we start with its preconditions.
We start with the preconditions of \cref{thm:weak}.
Specifically, we are given a constant $\epsilon \in (0, 1)$, and nondecreasing, unbounded functions $q, t : \bN \to \bN$
%\sr{$q(n) \leq n$},
which satisfy the following conditions.
\begin{enumerate}
    \item The function $n \mapsto q(2n)$ is eventually surjective,
        $q(n)$ is computable in time $O(q(n) \log q(n))$, and
    %invertible in time $O(t \circ q^{-1} (n))$.
    $q^{-1}(n)$ is computable in time $O(t \circ q^{-1} (n))$.
    \item $t(n)$ is computable in time $O(t(n))$ and
    %$t(n)$ is invertible in time $O(n)$.
    $t^{-1}(n)$ is computable in time $O(n)$.
    %\item $t(n) = \Omega(\log^2 n)$, and $\log t(n) \log \log t(n)$ has slope $o(1)$.
    \item $q(n) \leq \min\left\{ \frac{r \cdot n}{2}, \frac{t(n)}{\log^{2.01} n} \right\}$,
        where $r$ is the constant from \cref{thm:code}.
\end{enumerate}

We make the following definitions.
\begin{enumerate}
    \item Let $\cQ$ be the property from \cref{def:3cnf-construction}.
    %with parameter $\epsilon$ as above.
    
    \item Let $q_* : \bN \to \bN$ and $t_* : \bN \to \bR$
            be given by $q_*(n) \define q(2n)$ and
        %$t' : \bN \to \bN$ be given by
        $t_*(n) \define t(2n) \log^{2.01} n$.
    
    \item Let $T_\ell, T_u : \bN \to \bR$ be given by
    %$T_\ell (n) \define t' \circ q^{-1} (2n/r)
    %\text{ and } T_u(n) \define T_\ell (n) \cdot 2^{T_\ell (n)^{2.01}}$,
    $T_\ell (n) \define t_* \circ q_*^{-1} (n)
    \text{ and } T_u(n) \define 2^{T_\ell (n)^{2.01}}$.
    %where $r$ is the universal constant from \cref{thm:code}.
    
    \item Let $L$ be the language from \cref{thm:hard-lang-ram} with
    parameter\footnote{Formally, we invoke \cref{thm:hard-lang-ram},
    but with a slightly better exponent in the upper bound (say, $2.009$) and a
    function $t_*'(n) \ge t_*(n)$ obtained by approximating $\log^{2.01} n$ by
    an integer (a constant-factor approximation suffices). By assumption,
    computing $q^{-1}(n)$ and hence $q_*^{-1}(n)$ takes time $O(t \circ q^{-1}
    (n)) = O(t_* \circ q_*^{-1}(n))$, and subsequently computing $t_*' \circ
    q_*^{-1} (n)$ takes time $O(t_*' \circ q_*^{-1} (n))$, meaning $T_\ell$ is
    efficiently constructible and the call to \cref{thm:hard-lang-ram} is
    legal.} $T_\ell$.
    
    \item Let $k_* : \bN \to \bN$ be given by
        %, k
    %$k'(n) = r \cdot q(n) \text{ and } k(n) = q(n).$    
        $k_*(n) \define q_*(n)$.
    %\text{ and } k(n) = q(n)$.}

    \item Let $F : \bN \to \bR$ be given by $F(n) \define 2^{n^{2.01}}$.
        %$F(n) = T_u (2r \cdot q(t'^{-1} (n)))$.
\end{enumerate}

\begin{lemma}
    \label{lemma:weak-choices}
    The choices above of
    %$\cQ, T_\ell, T_u, L, q, t', k', F, k$
    $\cQ, T_\ell, T_u, L, q_*, t_*, k_*, F$
    satisfy the conditions of \cref{def:hard-property}.
\end{lemma}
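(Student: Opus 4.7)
The plan is to verify each of the six conditions in \cref{def:hard-property} in turn (interpreting the attainment condition in the infinitely-often sense of \cref{rem:attain-relax}, as discussed below). Conditions hp-Q-complexity and hp-Q-yes concern $\cQ$; condition hp-L-complexity concerns $L$; condition hp-qt translates the hypotheses of \cref{thm:weak} through the substitutions $q_*(n) = q(2n)$ and $t_*(n) = t(2n) \log^{2.01} n$; and the remaining two conditions---hp-attainment and hp-kprime-properties---constitute the core of the verification.

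I would first dispense with the easy items. Condition hp-Q-complexity is immediate from \cref{def:3cnf-construction,lemma:3cnf-query-bound,lemma:3cnf-time-bound}, together with the observation that $\cQ_n \ne \emptyset$ for all sufficiently large $n$. For hp-Q-yes, I would exhibit an explicit provider: fix a trivially satisfiable 3CNF formula $\psi_m \in \Psi_m$ (say, one containing a tautological clause) with a canonical satisfying assignment (say, all zeros), then apply the $O(n \log n)$-time encoder from \cref{thm:code} to produce the corresponding element of $\cQ_n$. Condition hp-L-complexity follows directly from \cref{thm:hard-lang-ram} applied with parameter $T_\ell$, since $T_u(n) = 2^{T_\ell(n)^{2.01}}$ matches the guaranteed upper bound (modulo the integer-exponent approximation noted in the footnote of the setup). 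Condition hp-qt is routine bookkeeping: $q_*(n) = q(2n) \le r n$ yields $q_*(n) \le n$; the bound $q_*(n) \log q_*(n) = O(t_*(n)/\log^{1.01} n)$ is a short calculation using $t_*(n) = t(2n) \log^{2.01} n$ and the corresponding bound on $q$; eventual surjectivity of $q_*$ is hypothesized directly in the setup; and computability of $q_*$ in $O(q_*(n) \log q_*(n))$ time follows from the corresponding computability of $q$.

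The main obstacle is hp-attainment: checking that $(T_\ell, T_u) = (t_* \circ q_*^{-1},\ 2^{(t_* \circ q_*^{-1})^{2.01}})$ attains $t_*$ up to gap $F(\cdot) = 2^{(\cdot)^{2.01}}$ with length function $k_* = q_*$. Unrolling definitions, $T_\ell(k_*(n)) = t_*(q_*^{-1}(q_*(n)))$ and $T_u(k_*(n)) = 2^{T_\ell(k_*(n))^{2.01}}$. The upper half $T_u(k_*(n)) \le F(t_*(n))$ is immediate from $q_*^{-1}(q_*(n)) \le n$ and monotonicity of $t_*$, so it holds for all large $n$. The lower half $T_\ell(k_*(n)) \ge t_*(n)$ holds with equality precisely when $n$ is the leftmost element of a plateau of $q_*$ (\ie when $q_*^{-1}(q_*(n)) = n$); since $q_*$ is eventually surjective, there are infinitely many such $n$, so attainment holds in the infinitely-often sense of \cref{rem:attain-relax}. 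This is precisely what is needed downstream, because the hard language $L$ from \cref{thm:hard-lang-ram} is itself only infinitely-often hard (\cref{rem:hard-lang-ram-io}).

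For hp-kprime-properties, the size bounds $k_*(n) = O(q_*(n))$ and $k_*(n) \le r n$ are immediate from $k_* = q_*$. Computing $k_*(n)$ in $O(T_u(k_*(n)))$ time is easy for sufficiently large $n$: because $q_*$ is eventually surjective with $q_*(n) \le n$, the quantity $n_0 = q_*^{-1}(q_*(n))$ tends to infinity with $n$, so $T_u(k_*(n)) = 2^{t_*(n_0)^{2.01}}$ eventually dominates the $O(q_*(n) \log q_*(n))$ cost of computing $q_*(n) = q(2n)$. For the right inverse, I would take $k_*^\dagger(m) = \ceil{q^{-1}(m)/2}$, which satisfies $q_*(k_*^\dagger(m)) \ge m$ by monotonicity of $q$ and is computable in time $O(t \circ q^{-1}(m))$ by hypothesis; a short check then shows that $T_\ell(m) = t(2 q_*^{-1}(m)) \log^{2.01} q_*^{-1}(m)$ dominates this quantity. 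Once attainment is set up correctly, the rest is bookkeeping.
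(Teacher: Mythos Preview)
Your proposal follows essentially the same approach as the paper: verify each condition of \cref{def:hard-property} in turn, invoke the infinitely-often relaxation (\cref{rem:attain-relax}) for attainment, and observe that $k_* = q_*$ makes the remaining checks straightforward. Two small points the paper handles that you elide: first, \cref{def:hard-property} requires $T_\ell(n) = \Omega(n \log n)$, and \cref{thm:hard-lang-ram} requires its parameter to exceed $n$---the paper verifies this via $t_* \circ q_*^{-1}(n) \ge t(q^{-1}(n)) \ge q(q^{-1}(n)) \log^2(q^{-1}(n)) \ge n \log^2 n$ (using $q(n) \le n$, hence $q^{-1}(n) \ge n$); second, a right inverse must satisfy $q_*(k_*^\dagger(m)) = m$, not merely $\ge m$, so you should note that for $m$ in the image of $q_*$ your $\ceil{q^{-1}(m)/2}$ does give equality.
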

\begin{proof}
    {\bf Conditions on $\cQ$.}
    The first condition, that $\cQ$ be a.e.\ nonempty, holds by the
    construction in \cref{def:3cnf-construction}. \cref{def:hard-property} also requires
    $\epsilon$-testing property $\cQ_n$ to have query complexity $\Omega(\frac n  {\log n})$ and time
    complexity $O(n \log n)$, which is guaranteed by
    \cref{lemma:3cnf-query-bound,lemma:3cnf-time-bound}, respectively. Finally, $\cQ$
    must have an
    $O(n \log n)$-time YES query provider. This is easy to ensure for any reasonable scheme for
    encoding 3CNF formulas into binary strings (which we leave implicit for brevity); for example,
    we can use an empty formula with a trivial all-$0$s assignment, and the encoder from
    \cref{thm:code}.

    {\bf Conditions on $T_\ell, T_u, L$.}
    The first condition is for $T_\ell (n)$ and $T_u (n)$ to be nondecreasing and $\Omega(n \log n)$, with $T_\ell \leq T_u$.
    By assumption, $t(n) \geq q(n) \log^2 n$ and $q(n) \leq n$.
    This yields
    \[
        %t' \circ q^{-1} (2 n / r) \geq t \circ q^{-1} (2 n / r) \geq t \circ q^{-1} (n) \geq n \log^{2} q^{-1} (n) \geq n \log^2 n,
            t_* \circ q_*^{-1}(n)
            \ge t(2 \cdot q^{-1}(n)/2)
            \ge q(q^{-1}(n)) \log^2 n
            \ge n \log^2 n \,,
    \]
    giving us that $T_\ell (n) = \Omega(n \log n)$.
    \cref{def:hard-property} also requires $L$ to be a.e.\ nonempty; decidable by an $O(T_u(n))$-time randomized algorithm but not by any $O(T_\ell(n))$-time randomized algorithm, which are all true by \cref{thm:hard-lang-ram}.

    {\bf Conditions on $q_*, t_*$.}
    %$q, t'$.}
    The conditions that
    $q_*, t_*$ be nondecreasing, that $q_*(n)$ be eventually surjective and satisfy
        $q_*(n) \le n$ and $q_*(n) \log q_*(n) = O(t_*(n) / \log^{1.01} n)$,
        and that $q_*(n)$ be computable in $O(q_*(n) \log q_*(n))$ time
        all hold by the assumptions on $q, t$ and the construction of $q_*, t_*$.
    %$q, t'$ be nondecreasing, that $q(n) \leq n$, that $q(n) \log q(n) = O(\renato{t'}(n) /
    %\log^{1.01} n)$, and that $q(n)$ be computable in $O(q(n) \log q(n))$ time all hold by the
    %assumption.

    {\bf Attainment.}
    \cref{def:hard-property} requires the pair $(T_\ell, T_u)$ to attain $t_*$
    up to gap $F$ with length function $k_*$.
    By construction, $T_\ell(k_*(n)) = t_* \circ q_*^{-1}(q_*(n)) = t_* (n)$
    for infinitely many $n \in \bN$.
    Since the construction from \cref{thm:hard-lang-ram} rules out {\bf infinitely often} decidability, then by \cref{rem:attain-relax}, relaxing the definition of attainment to infinitely many input lengths is still valid.
    
    For the upper bound, we have
    \[
        T_u(k_*(n)) = 2^{T_\ell(k_*(n))^{2.01}} = F(T_\ell(k_*(n))) = F(t_* \circ q_*^{-1} \circ q_* (n)) \leq F(t_*(n)).
    \]
    
    {\bf Conditions on $k_*$.}
    % TODO $k'(n) = O(q_*(n))$
    The conditions that $k_*(n) = O(q_*(n))$ and $k_*(n) \leq r \cdot n$ follow from construction and the assumptions on $q$.
    For computability of $k_*(n)$, we use the assumption on the computability of $q(n)$ and the lower bound on $T_u(n)$ to get that $k_*(n)$ is computable in time $O(k_*(n) \log k_*(n)) = O(T_u (k_*(n)))$.
    Similarly, for the computability of $k_*^\dagger(n)$, we use the assumption on the computability of $q^{-1} (n)$ and the lower bound on $T_\ell(n)$ to get that
    $k_*^\dagger(n)$ is computable in time $O(t \circ q^{-1} (n) + \log^{1 + o(1)} q^{-1} (n)) = O(t_* \circ q_*^{-1} (n)) = O(T_\ell (n))$.
    % \cref{def:hard-property} requires that $k(n)$ be eventually surjective, efficiently computable and invertible, and satisfy $k(n) \leq n$ and $\abs*{k(n) - \frac{k'(n)}{r}} = O(1)$.
    % Plugging in $k(n) = q(n)$, we get the first and last conditions.\rfnote{
    % Did we accidentally drop the requirement that $q(n)$ be eventually surjective?
    % }
    % Now we need to prove that $k(n)$ is efficiently computable and invertible.
    % By assumption, $q(n)$ is computable in time $O(q(n) \log q(n))$, which yields the following.
    % \begin{align*}
    %     q(n) \log q(n) &= O(t(n)) = O(t'(n)) = O(t' \circ q^{-1} (q(n))) \\
    %     &= O(t' \circ q^{-1} (2/r \cdot (r \cdot (q(n) + 1)))) = O(T_\ell(r \cdot (k(n) + 1))) \,.
    % \end{align*}
    % By assumption,
    % %$q(n)$ is invertible in time $O(t' \circ q^{-1} (n))$,
    % $q^{-1}(n)$ is computable in time $O(t' \circ q^{-1} (n))$,
    % which yields the following.
    % \[
    %     t' \circ q^{-1} (n) = O(t' \circ q^{-1} (2(n-1))) = O\left(t' \circ q^{-1} \left( \frac{2}{r} \cdot r \cdot (n - 1) \right) \right) = O(T_\ell (r \cdot (n - 1))).
    % \]
    % This concludes our proof.
\end{proof}

Combining \cref{lemma:weak-choices,lemma:complexity-hard-property} yields the following corollary,
which implies \cref{thm:weak}.

\begin{corollary}
    \label{cor:weak-hierarchy-const}
    \sloppy
    Let $\cP$ be the hard property obtained by applying \cref{def:hard-property} with the choices above.
    Then $\epsilon$-testing $\cP_{2n}$ requires $\Omega\left(\frac{q_*(n)}{\log q_*(n)}\right) = \Omega\left(\frac{q(2n)}{\log q(2n)}\right)$ queries and $\Omega \left( \frac{t_*(n)}{\log^{2.01} n} \right) = \Omega \left( t(2n) \right)$ time.
    Moreover, there exists an $\epsilon$-tester for $\cP_{2n}$ using $O(q_*(n)) = O(q(2n))$ queries and $O(q_*(n) \log^2 n + F(t_*(n)) \log n + \log^{2+o(1)} n) = O(2^{t(2n)^{2.01} \log^{4.05} (2n)})$ time.
\end{corollary}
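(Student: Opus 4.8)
The plan is to chain the verification lemma with the two parts of the complexity lemma for the hard property, and then simplify the resulting expressions. First I would invoke \cref{lemma:weak-choices}, which certifies that the specific choices of $\cQ, T_\ell, T_u, L, q_*, t_*, k_*, F$ fixed above satisfy all hypotheses of \cref{def:hard-property}: concretely, $\cQ$ from \cref{def:3cnf-construction} is a.e.\ nonempty, has query complexity $\Omega(n/\log n)$ and time complexity $O(n \log n)$ (\cref{lemma:3cnf-query-bound,lemma:3cnf-time-bound}) and an $O(n\log n)$-time YES query provider; the language $L$ from \cref{thm:hard-lang-ram} exhibits the $(T_\ell, T_u)$ decision-complexity gap; and the remaining conditions ($q_*$ eventually surjective and efficiently computable, attainment of $t_*$ up to gap $F$ with length function $k_*$, and the size/computability constraints on $k_*$) hold by construction. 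Consequently \cref{lemma:complexity-hard-property} applies to the hard property $\cP$ built from these objects, with its parameters $q, t, F, k_*$ instantiated as $q_*, t_*, F, k_*$.

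Next I would read off the two directions. \cref{lemma:complexity-hard-property}.\ref{lemma:complexity-hard-property-1} gives that $\epsilon$-testing $\cP_{2n}$ requires $\Omega(q_*(n)/\log q_*(n))$ queries and $\Omega(t_*(n)/\log^{2.01} n)$ time; substituting $q_*(n) = q(2n)$ yields the query bound $\Omega(q(2n)/\log q(2n))$, and substituting $t_*(n) = t(2n)\log^{2.01} n$ collapses the time bound to $\Omega(t(2n))$. The one point requiring care is the quantifier over input lengths: the hard language of \cref{thm:hard-lang-ram} is hard even infinitely often (\cref{rem:hard-lang-ram-io}), so the ``Attainment'' paragraph of \cref{lemma:weak-choices} only verifies the relaxed notion of attainment from \cref{def:attains}; accordingly I would invoke the infinitely-often variant of \cref{lemma:complexity-hard-property}.\ref{lemma:complexity-hard-property-1} given in \cref{rem:attain-relax} to conclude that any tester succeeding at a.e.\ input length must use $\Omega(t(2n))$ time on infinitely many lengths, which is precisely the claimed lower bound.

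For the upper bound, \cref{lemma:complexity-hard-property}.\ref{lemma:complexity-hard-property-2} supplies an $\epsilon$-tester for $\cP_{2n}$ making $O(q_*(n)) = O(q(2n))$ queries and running in time $O(q_*(n)\log^2 n + F(t_*(n))\log n + \log^{2+o(1)} n)$, so it remains only to simplify. We have $F(t_*(n)) = 2^{t_*(n)^{2.01}} = 2^{(t(2n)\log^{2.01} n)^{2.01}} = 2^{t(2n)^{2.01}\log^{4.0401} n}$, and multiplying by $\log n$ and rounding the exponent up (using $\log^{4.0401} n + O(\log\log n) \le \log^{4.05}(2n)$) bounds this by $2^{t(2n)^{2.01}\log^{4.05}(2n)}$; since $t(n) = \Omega(\log^2 n)$, this term dominates both $q_*(n)\log^2 n = O(q(2n)\log^2 n)$ and $\log^{2+o(1)} n$, leaving total time $O(2^{t(2n)^{2.01}\log^{4.05}(2n)})$ as claimed. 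The entire argument is bookkeeping on top of the already-established lemmas — matching the $2n$-versus-$n$ and $\floor{r \cdot n}$-versus-$n$ index shifts and tracking polylogarithmic slack — and I expect no genuine obstacle; the only place I would slow down is the a.e.\ versus infinitely-often accounting in the time lower bound, which is handled exactly by \cref{rem:attain-relax}.
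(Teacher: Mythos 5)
Your proposal is correct and follows exactly the route the paper takes: Lemma~\ref{lemma:weak-choices} verifies the hypotheses of Definition~\ref{def:hard-property}, and then Lemma~\ref{lemma:complexity-hard-property} is instantiated with $q \leftarrow q_*$, $t \leftarrow t_*$, $F$, $k_*$ to read off the bounds, with Remark~\ref{rem:attain-relax} handling the fact that the ``Attainment'' paragraph of Lemma~\ref{lemma:weak-choices} only certifies the infinitely-often version. Your explicit simplification $F(t_*(n))\log n = 2^{t(2n)^{2.01}\log^{4.0401}n + \log\log n} \le 2^{t(2n)^{2.01}\log^{4.05}(2n)}$ and the observation that this term dominates the others (since $t(n) \ge \log^{2.01} n$ by admissibility) are exactly the bookkeeping the paper leaves implicit in stating the corollary.
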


% \begin{remark}
%     The $\eps$-tester for the property from \cref{cor:weak-hierarchy-const} runs in time $O(t(q^{-1} (4 q(n))) \cdot 2^{t(q^{-1} (4 q(n)))^{2.01} \log^{4.02} (q^{-1} (4 q(n)))} \cdot \log^{3.01} q^{-1} (4q(n)))$.
%     We did not aim to optimize this value since the time bound is in any case exponential in the lower bound.
% \end{remark}

\subsubsection{Strong hierarchy theorem from SETH (proof of \cref{thm:strong})}
\label{sec:hierarchies-proof-strong}

In this subsection, we combine the property $\cQ$ from \cref{sec:3cnf-property} with the hard
language $L$ obtained from SETH in \cref{sec:seth-language} to satisfy the requirements of
\cref{def:hard-property}, and then use \cref{lemma:complexity-hard-property} to establish
\cref{thm:strong}.
Toward this goal,  assume the hypotheses of \cref{thm:strong}. Specifically, let $\epsilon,
\gamma > 0$ be sufficiently small constants and $q, t : \bN \to \bN$ be nondecreasing unbounded
functions satisfying the following conditions.
\begin{enumerate}
    \item The function $n \mapsto q(2n)$ is eventually surjective and
        $q(n)$ is computable in time $O(q(n) \log q(n))$.
    \item $t(n)$ is computable in time $O(t(n))$, and
    $t^{-1}(n)$ is computable in time $O(n)$.
    \item $q(n) \leq \min\left\{ \frac{r \cdot n}{2}, \frac{t(n)}{\log^{2.01} n} \right\}$,
        where $r$ is the constant from \cref{thm:code}.
    \item %$t(n) = \Omega(\log^2 n)$, and
        $\log t(n) \log\log t(n)$ has slope $o(1)$
        and $t(n) \le 2^{O\left(\frac{q(n)}{\log q(n)}\right)}$.
\end{enumerate}

We then make the following definitions.

\begin{enumerate}
    \item Let $\cQ$ be the property from \cref{def:3cnf-construction}.
    %(for the $\epsilon$ chosen above).
    \item Let $L$ be the language $L^{(\gamma)}$ from \cref{def:hard-language}.
    \item Let $T_\ell, T_u : \bN \to \bR$ be given by 
        $T_\ell(n) \define 2^{\frac{(1-\gamma)n}{A_\gamma \log n}}$
        %$T_\ell(n) \define 2^{\frac{(1-2\gamma)
        %n}{A_\gamma \log n}}$
        and $T_u(n) \define 2^{\frac{n}{A_\gamma \log n}}$, where $A_\gamma >
        0$ is the constant from \cref{lemma:bounds-hard-language}.
    \item Let $q_*, t_* : \bN \to \bN$
            be given by $q_*(n) \define q(2n)$ and $t_*(n) \define t(2n)$.
    \item Define the auxiliary function $f : \bN \to \bR$ by $f(m)
        \define \frac{A_\gamma}{1-3\gamma} \log m \log\log m$.
        Then $k_* : \bN \to \bN$ is an explicit computable function,
        specified in \cref{lemma:strong-choices} below, and
        $\abs*{k_*(n) - \floor{f(t_*(n))}} \le 1$.
    \item Let $F : \bN \to \bR$ be given by $F(n) \define n^{\frac{1}{1-4\gamma}}$.
\end{enumerate}

\begin{lemma}
    \label{lemma:strong-choices}
    The choices above of $\cQ, L, T_\ell, T_u, k_*, q_*, t_*, F$ satisfy the conditions of
    \cref{def:hard-property}.
\end{lemma}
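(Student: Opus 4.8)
The plan is to check, group by group, that the eight objects $\cQ, L, T_\ell, T_u, k_*, q_*, t_*, F$ meet all conditions of \cref{def:hard-property}, reusing \cref{lemma:weak-choices} verbatim where the ingredient is unchanged and concentrating the work on the attainment condition (\hp{hp-attainment}) and the properties of $k_*$ (\hp{hp-kprime-properties}), which carry the SETH-specific content. The conditions on $\cQ$ (\hp{hp-Q-complexity}, \hp{hp-Q-yes}) are identical to the weak case: $\cQ$ is a.e.\ nonempty, has query complexity $\Omega(n/\log n)$ and time complexity $O(n\log n)$ by \cref{lemma:3cnf-query-bound,lemma:3cnf-time-bound}, and admits an $O(n\log n)$-time YES query provider via a trivial formula/assignment and the encoder of \cref{thm:code}. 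For \hp{hp-L-complexity}: $T_\ell(n)=2^{(1-\gamma)n/(A_\gamma\log n)}$ and $T_u(n)=2^{n/(A_\gamma\log n)}$ are eventually nondecreasing with $T_\ell\le T_u$, and are $\Omega(n\log n)$ since $n/(A_\gamma\log n)=\omega(\log n)$; $L=L^{(\gamma)}$ is a.e.\ nonempty by \cref{def:hard-language} (for every large $N$ there is a satisfiable $k_{\gamma/2}$-SAT instance of the prescribed shape whose padded encoding has $N$ bits), and the decidability/hardness dichotomy is exactly \cref{lemma:bounds-hard-language}. For \hp{hp-qt}: $q_*(n)=q(2n)$ and $t_*(n)=t(2n)$ are nondecreasing, $q_*$ is eventually surjective by hypothesis, $q_*(n)=q(2n)\le r\cdot n<n$ from $q(n)\le rn/2$, and $q_*(n)\log q_*(n)\le q(2n)\log(2n)=O\!\big(t(2n)/\log^{1.01}n\big)=O\!\big(t_*(n)/\log^{1.01}n\big)$ using $q(n)\le t(n)/\log^{2.01}n$; computability of $q_*$ in $O(q_*(n)\log q_*(n))$ time follows from the hypothesis on $q$.

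The heart of the proof is the choice of $k_*$ and the attainment condition. Fix a threshold $n_0$ past which $f$ is increasing on $[t_*(n),\infty)$ and set $k_*(n)\define\floor{f(t_*(n))}$ for $n\ge n_0$ (computed to additive precision $<\tfrac14$, enough to pin the floor) and $k_*(n)\define 1$ otherwise; then $\abs*{k_*(n)-\floor{f(t_*(n))}}\le 1$, with equality to $0$ for $n\ge n_0$. Writing $m\define t_*(n)$, $L_1\define\log_2 m$, $L_2\define\log_2 L_1$, we have $k_*(n)=\tfrac{A_\gamma}{1-3\gamma}L_1L_2\,(1+o(1))$ and $\log_2 k_*(n)=L_2\,(1+o(1))$ as $m\to\infty$, so
\[
    \frac{(1-\gamma)\,k_*(n)}{A_\gamma\log_2 k_*(n)}=\frac{1-\gamma}{1-3\gamma}\,L_1\,(1+o(1)),\qquad
    \frac{k_*(n)}{A_\gamma\log_2 k_*(n)}=\frac{1}{1-3\gamma}\,L_1\,(1+o(1)).
\]
Since $\tfrac{1-\gamma}{1-3\gamma}>1$ strictly, the left quantity is $\ge L_1$ for large $n$, giving $T_\ell(k_*(n))\ge 2^{L_1}=m=t_*(n)$; since $\tfrac{1}{1-3\gamma}<\tfrac{1}{1-4\gamma}$ strictly, the right quantity is $\le\tfrac{L_1}{1-4\gamma}$ for large $n$, giving $T_u(k_*(n))\le m^{1/(1-4\gamma)}=F(t_*(n))$. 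Thus $(T_\ell,T_u)$ attains $t_*$ up to gap $F$ with length function $k_*$. Moreover $k_*$ is eventually surjective: the slope hypothesis on $\log t(n)\log\log t(n)$ forces $f(t_*(n))-f(t_*(n-1))=o(1)$, hence eventually $<1$, so $\floor{f(t_*(n))}$ (and $k_*$) is nondecreasing with unit-bounded increments and tends to infinity, omitting only finitely many naturals. (In contrast to the weak hierarchy, the attainment lower bound holds for all large $n$, not merely infinitely often, so no relaxation of \cref{def:attains} is needed here.)

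It remains to verify \hp{hp-kprime-properties}. The hypothesis $t(n)\le 2^{O(q(n)/\log q(n))}$ gives $L_1=\log_2 t_*(n)=O(q_*(n)/\log q_*(n))$, hence $L_2=(1+o(1))\log_2 q_*(n)$ and $k_*(n)=\Theta(L_1L_2)=O(q_*(n))$ — the $\log\log$ factor cancels the $\log q_*$ in the denominator — and together with $q_*(n)\le r\cdot n$ (using the slack $q(n)\le rn/2$) this also yields $k_*(n)\le r\cdot n$ for large $n$. To compute $k_*(n)$ one reads $\bin{n}$, forms $2n$, computes $t(2n)$ in time $O(t(2n))$, then computes $L_1$ (a bit length) and $L_2$ and the product $\tfrac{A_\gamma}{1-3\gamma}L_1L_2$ to precision $<\tfrac14$; the overhead beyond computing $t(2n)$ is $\polylog\log t(2n)$, so the total is $O(t(2n))=O(T_u(k_*(n)))$ because $T_u(k_*(n))=m^{(1+o(1))/(1-3\gamma)}\ge m$. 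For the right inverse $k_*^\dagger$, given $n\in\image(k_*)$ one computes $V\define 2^{\ell}$ for the integer $\ell\in[1,n]$ making $\tfrac{A_\gamma}{1-3\gamma}\ell\log_2\ell$ closest to $n$ (so $\ell=(1+o(1))(1-3\gamma)n/(A_\gamma\log n)$), then $m'\define t^{-1}(V)$ in time $O(V)$ by hypothesis, then $n'\define\floor{m'/2}$ (fixing parity), and verifies $k_*(n')=n$, scanning an $o(\log n)$-wide window around $n'$ if needed (the slope hypothesis gives $t(t^{-1}(V))=(1+o(1))V$, so $k_*(n')$ lands within $O(1)$ of $n$); the running time is dominated by $O(V)=O\!\big(2^{(1-3\gamma)n/(A_\gamma\log n)}\big)=O(T_\ell(n))$ since $1-3\gamma<1-\gamma$. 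The main obstacle is the attainment calculation itself: the constant $\tfrac{A_\gamma}{1-3\gamma}$ in $f$ must sit strictly between the value forcing $T_\ell(k_*(n))\ge t_*(n)$ and the value keeping $T_u(k_*(n))\le F(t_*(n))$, and one must confirm that the $o(1)$ error from $\log k_*(n)=L_2(1+o(1))$ is genuinely absorbed by the strict gaps $1-\gamma>1-3\gamma>1-4\gamma$, all while $k_*$ stays simultaneously $O(q_*(n))$, at most $r\cdot n$, eventually surjective, and cheaply (inverse-)computable within the $T_\ell$/$T_u$ budgets — a delicate balance of constants against the slope and growth hypotheses on $t$ and $q$.
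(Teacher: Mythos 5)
Your overall structure matches the paper's: reduce to checking the conditions of \cref{def:hard-property} group by group, reuse \cref{lemma:weak-choices} for $\cQ$, invoke \cref{lemma:bounds-hard-language} for $L$, and concentrate on attainment and $k_*$. Your attainment calculation is essentially the paper's, written in cleaner $(1+o(1))$ notation, and it is correct. But there are two genuine gaps in your treatment of $k_*$.

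\textbf{First, the bound $k_*(n) \le r n$.} You argue $k_*(n) = O(q_*(n))$ and then claim this ``together with $q_*(n) \le r\cdot n$'' yields $k_*(n) \le r\cdot n$. This does not follow: $k_*(n) = O(q_*(n))$ means $k_*(n) \le C q_*(n)$ for some constant $C$ coming from the hidden constant in the hypothesis $t(n) \le 2^{O(q(n)/\log q(n))}$, and $C$ can easily exceed $1$. The correct argument, which the paper uses and which you already have the ingredients for, is that the slope-$o(1)$ hypothesis on $\log t(n)\log\log t(n)$ directly implies (by summing $o(1)$ increments, i.e., Ces\`aro) that $\log t_*(n)\log\log t_*(n) = o(n)$, hence $k_*(n) = o(n) \le r n$ for all large $n$. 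You invoke the slope condition for surjectivity but not here, where it is in fact the only workable route.

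\textbf{Second, and more seriously, computing $k_*^\dagger$.} You propose to pick the integer $\ell$ for which $\frac{A_\gamma}{1-3\gamma}\ell \log_2\ell$ is closest to $n$, set $V = 2^\ell$, take $m' = t^{-1}(V)$, $n' = \lfloor m'/2\rfloor$, and fix up by scanning an $o(\log n)$-wide window, claiming $k_*(n')$ lands within $O(1)$ of $n$ because $t(t^{-1}(V)) = (1+o(1))V$. This is too coarse. Since consecutive values $\frac{A_\gamma}{1-3\gamma}\ell\log_2\ell$ for integer $\ell$ are spaced $\Theta(\log\ell) = \Theta(\log n)$ apart, the best $\ell$ only guarantees $\abs{f(V) - n} = O(\log n)$, not $O(1)$; the $(1+o(1))$ factor from $t(t^{-1}(V))$ is a second-order correction that does not repair this. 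Because $f$ has slope $o(1)$ in $n'$, turning an $O(\log n)$ error in $k_*$-space into the correct value would require shifting $n'$ by $\omega(\log n)$, so your $o(\log n)$-wide window is too narrow, and widening it would blow the $O(T_\ell(n))$ time budget. The paper avoids this entirely by running the exponential search over the integer $m$ itself (not over $\log m$): since $f(m) - f(m-1) = O(\log\log m / m) = o(1)$, the integer grid in $m$ is fine enough to locate $f(m) \in [n + 1/4, n + 3/4]$, and only then is $t^{-1}$ applied. You should adopt this finer search.

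Everything else — the conditions on $\cQ$, $L$, $T_\ell$, $T_u$, $q_*$, $t_*$, the attainment inequalities, the surjectivity of $k_*$ from the slope condition, and the forward computation of $k_*(n)$ — is correct and matches the paper.
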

\begin{proof}
    \textbf{Conditions on $\cQ$}  have already been established in \cref{lemma:weak-choices}.

    \textbf{Conditions on $L$.}
    The first condition is for $T_\ell(n)$ and $T_u(n)$ to be nondecreasing and $\Omega(n \log n)$,
    with $T_\ell \le T_u$; this is true by construction. \cref{def:hard-property} also requires the
    language $L$ to be a.e.\ nonempty, which we satisfy
    by choosing an appropriate encoding scheme for formulas into binary strings in
    \cref{def:hard-language}. Finally, $L$ must be decidable by an $O(T_u(n))$-time randomized
    algorithm but not by any $O(T_\ell(n))$-time randomized algorithm, which is true by
    \cref{lemma:bounds-hard-language}.

    \textbf{Conditions on $q_*,t_*$.}
    The conditions that $q_*,t_*$ be nondecreasing, that $q_*$ be eventually
    surjective and satisfy $q_*(n) \le n$
    and $q_*(n) \log q_*(n) = O(t_*(n) / \log^{1.01} n)$, and that $q_*(n)$ be
    computable in $O(q_*(n) \log q_*(n))$ time hold by the assumptions on $q,t$ and
    the definition of $q_*,t_*$.

    \textbf{Attainment.}
    \cref{def:hard-property} requires that the pair $(T_\ell, T_u)$ attain $t_*$
    up to gap $F$ with length function $k_*$.
    %satisfying $k'(n) = O(q(n))$. This last bound
    %holds because $t(n) \le 2^{O\left(\frac{q(n)}{\log q(n)}\right)}$ by hypothesis, which
    %implies that $k'(n) = \Theta\left(\log t(n) \log\log t(n)\right) = O\left(\frac{q(n)}{\log q(n)}
    %\cdot \log\left(\frac{q(n)}{\log q(n)}\right)\right) = O(q(n))$.
    %We now show the attainment condition. Let $a \in \bN$.
    We first show that $T_\ell(k_*(n)) \ge t_*(n)$ for all sufficiently large $n$.
    Since $t_*$ is unbounded, for all sufficiently large $n$, we have
    \[
        k_*(n)
        \ge \floor{\frac{A_\gamma}{1-3\gamma} \log t_*(n) \log\log t_*(n)} - 1
        %\ge \frac{(1+0.1\gamma) A_\gamma}{1-2\gamma} \log t_*(n) \log\log t_*(n)
        \ge \frac{(1+0.1\gamma) A_\gamma}{1-\gamma} \log t_*(n) \log\log t_*(n)
    \]
    and hence, as desired,
    \[
        T_\ell(k_*(n))
        %= 2^{\frac{(1-2\gamma) k_*(n)}{A_\gamma \log k_*(n)}}
        = 2^{\frac{(1-\gamma) k_*(n)}{A_\gamma \log k_*(n)}}
        \ge 2^{\frac{(1+0.1\gamma) \log t_*(n) \log\log t_*(n)}
            {\log\left(\frac{A_\gamma}{1-3\gamma} \cdot \log t_*(n) \log\log t_*(n)\right)}}
        \ge 2^{\frac{(1+0.1\gamma) \log t_*(n) \log\log t_*(n)}
            {\log\left(\left(\log t_*(n)\right)^{1+0.1\gamma}\right)}}
        = 2^{\log t_*(n)}
        = t_*(n) \,.
    \]
    We now show that $T_u(k_*(n)) \le F(t_*(n))$ for all sufficiently large $n$.
    As above, for all sufficiently large $n$, we have
    \[
        k_*(n)
        \le \floor{\frac{A_\gamma}{1-3\gamma} \log t_*(n) \log\log t_*(n)} + 1
        \le \frac{A_\gamma}{1-4\gamma} \log t_*(n) \log\log t_*(n)
    \]
    and hence, as desired,
    \[
        T_u(k_*(n))
        = 2^{\frac{k_*(n)}{A_\gamma \log k_*(n)}}
        \le 2^{\frac{\log t_*(n) \log\log t_*(n)}{(1-4\gamma)
            \log\left(\frac{A_\gamma}{1-3\gamma} \log t_*(n) \log\log t_*(n)\right)}}
        \le \left(2^{\frac{\log t_*(n) \log\log t_*(n)}{\log\log t_*(n)}}\right)^{\frac{1}{1-4\gamma}}
        = F(t_*(n)) \,.
    \]

    \textbf{Conditions on $k_*$.}
    We now complete the definition of $k_*$ and
    show that it is eventually surjective with $k_*(n) =
    O(q_*(n))$ and $k_*(n) \le r \cdot n$, and moreover that $k_*(n)$ is
    computable in time $O(T_u(k_*(n)))$ and ${k_*}^\dagger(n)$ is computable in time
    $O(T_\ell(n))$.

    We use the following fact from numerical analysis, see \eg \cite{Brent76}: given a number
    $x$ with $\ell$ bits of precision, it is possible to compute $\log x$ up to $\ell$ bits of
    precision in time $O(\ell \polylog \ell)$. It follows that the function $f(m) =
    \frac{A_\gamma}{1-3\gamma} \log m \log\log m$ may be computed up to arbitrarily small constant
    additive error in time $O(\log m \poly\log\log m)$.

    We first complete the definition of $k_*(n)$ by specifying the algorithm computing it
    in time $O(T_u(k_*(n)))$. Note that the desired upper bound is
    \[
        O(T_u(k_*(n))) \ge O(T_\ell(k_*(n))) \ge O(t_*(n)) \,,
    \]
    where the last step is valid by attainment. Recall that $k_*(n)$ must satisfy
    $\abs*{k_*(n) - \floor{f(t_*(n))}} \le 1$. Now, $t_*(n)$ is
    computable in time $O(t_*(n))$ by the assumption on $t$,
    and given $t_*(n)$, using the observation above we may estimate
    $f(t_*(n))$ up to additive error at most $\frac 1 8$ -- call this
    estimate $\widetilde f$ -- in time $O(\log t_*(n) \poly\log\log
    t_*(n)) = O(t_*(n))$. We then define $k'(n) \define \floor{\widetilde f}$,
    and observe that it satisfies $\abs*{k'(n) - f(t_*(n))} \le 1$.
    Thus the computability claim holds.

    Second, we claim that $k_*$ is eventually surjective with
    $k'(n) = O(q_*(n))$ and $k_*(n) \le r \cdot n$. The first claim holds
    because the function $n \mapsto f(t_*(n))$ has slope $o(1)$ by
    the hypothesis that $\log t(n) \log\log t(n)$ has slope $o(1)$,
    along with the definition of $k_*(n)$ above. Since $t(n) \le
    2^{O\left(\frac{q(n)}{\log q(n)}\right)}$, which implies that $t_*(n) \le
    2^{O\left(\frac{q_*(n)}{\log q_*(n)}\right)}$, we get $k_*(n) =
    O\left(\log t_*(n) \log\log t_*(n)\right) = O(q_*(n))$, as desired. Finally,
    since $\log t_*(n) \log\log t_*(n)$ has slope $o(1)$, we get $k_*(n) =
    O\left(\log t_*(n) \log\log t_*(n)\right) = o(n)$, so $k_*(n) \le r \cdot n$
    for all sufficiently large $n$.

    Finally, we claim that ${k_*}^\dagger(n)$ is computable in
    $O(T_\ell(n))$ time. That is, there is an algorithm which,
    on input $n$, outputs some $n' \in \bN$ satisfying $k_*(n') = n$ in
    the announced time. The desired upper bound is
    \[
        O\left(T_\ell(n)\right)
        = O\left(T_\ell(k_*(n'))\right)
        \ge O(t_*(n')) \,,
    \]
    where the last step is valid by attainment. The algorithm works as follows: on input $n$, it
    first performs exponential search to find an integer $m$ satisfying
    $f(m) \in \left[ n + \frac{1}{4}, n + \frac{3}{4} \right],$
    which exists because $s \mapsto f(t_*(s))$
    has slope $o(1)$. Specifically, there exists
    $n' \in \bN$ satisfying $\abs*{f(t_*(n')) - \left(n + \frac 1 2\right)} \le \frac 1 {16}$, so by
    estimating $f(m_i)$ up to additive error $\frac 1 {16}$ in  time $O(\log m_i \poly\log\log m_i)$ in
    each step $i$ of the exponential search and stopping when this estimate is within distance $\frac 1 8$
    of $n + \frac 1 2$, we can find integer $m$ satisfying $\abs*{f(m) - \left(n + \frac 1 2\right)}
    \le \frac 1 4$, as desired. Since $m$ satisfies $m \le O(2^n)$ and $n \le O(\log^2 m)$,
    the exponential search takes  $O(n)$ steps, and hence $m$ can be found
    in time $O(\polylog m)$.

    Then, the algorithm uses the invertibility of $t$ to compute in $O(m)$ time
    the number $n'$ satisfying $t_*(n'-1) \le m \le t_*(n')$,
    and outputs $n'$. We claim that $k_*(n') = n$. Note that
    $f(t_*(n'-1)) \le f(m) \le n + \frac 3 4$, which implies that
    $f(t_*(n')) \le n + \frac 3 4 + \frac 1 {16} = n + \frac {13}{16}$ since $s
    \mapsto f(t_*(s))$ has slope $o(1)$. It follows that $k_*(n') =
    \floor{\widetilde f} \le \floor{n + \frac {13}{16} + \frac 1 8} = n$. Similarly, $f(t_*(n')) \ge f(m) \ge n + \frac 1 4$, and hence $k_*(n')
    = \floor{\widetilde f} \ge \floor{n + \frac 1 4 - \frac 1 8} = n$. Thus $k_*(n')
    = n$, as claimed. The total running time is $O(\polylog m) + O(m) = O(m)
    \le O(t_*(n'))$, as needed.
\end{proof}

Combining \cref{lemma:strong-choices,lemma:complexity-hard-property} yields the following corollary,
which implies \cref{thm:strong}.

\begin{corollary}
    Let $\cP$ be the hard property obtained by applying \cref{def:hard-property}
    with the choices above. Then under \cref{assumption:seth},
    $\epsilon$-testing $\cP_{2n}$ requires
    $\Omega\left(\frac{q_*(n)}{\log q_*(n)}\right) =
    \Omega\left(\frac{q(2n)}{\log q(2n)}\right)$ queries and
    $\Omega\left(\frac{t_*(n)}{\log^{2.01} n}\right) =
    \Omega\left(\frac{t(2n)}{\log^{2.01} (2n)}\right)$ time. Moreover, there
    exists an $\epsilon$-tester for $\cP_{2n}$ using $O(q_*(n))
    = O(q(2n))$ queries and $O\left(t_*(n)^{\frac{1}{1-4\gamma}} \log
    n\right) = O\left(t(2n)^{\frac{1}{1-4\gamma}} \log (2n)\right)$ time.
\end{corollary}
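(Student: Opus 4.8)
The plan is to obtain the corollary by feeding the concrete choices of this subsection into the two structural lemmas already established. First I would invoke \cref{lemma:strong-choices}, which certifies that the tuple $\cQ, L, T_\ell, T_u, k_*, q_*, t_*, F$ satisfies every hypothesis of \cref{def:hard-property}; this licenses an application of \cref{lemma:complexity-hard-property} with the parameter assignments $q \gets q_*$, $t \gets t_*$, $F(m) = m^{1/(1-4\gamma)}$, $T_\ell(m) = 2^{(1-\gamma)m/(A_\gamma \log m)}$, and $T_u(m) = 2^{m/(A_\gamma \log m)}$. The conditional nature of the statement (under \cref{assumption:seth}) is inherited solely from the use of \cref{lemma:bounds-hard-language} inside \cref{lemma:strong-choices} to control the hardness of $L = L^{(\gamma)}$; every other ingredient is unconditional.

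For the lower bounds, \cref{lemma:complexity-hard-property}.\ref{lemma:complexity-hard-property-1} directly gives that $\epsilon$-testing $\cP_{2n}$ requires $\Omega(q_*(n)/\log q_*(n))$ queries and $\Omega(t_*(n)/\log^{2.01} n)$ time. Substituting the definitions $q_*(n) = q(2n)$ and $t_*(n) = t(2n)$, and using $\log n = \Theta(\log(2n))$, converts these into the claimed $\Omega(q(2n)/\log q(2n))$ query bound and $\Omega(t(2n)/\log^{2.01}(2n))$ time bound.

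For the upper bounds, \cref{lemma:complexity-hard-property}.\ref{lemma:complexity-hard-property-2} supplies an $\epsilon$-tester for $\cP_{2n}$ making $O(q_*(n)) = O(q(2n))$ queries and running in time $O\!\left(q_*(n)\log^2 n + F(t_*(n))\log n + \log^{2+o(1)} n\right)$. It remains to collapse this to $O(t_*(n)^{1/(1-4\gamma)}\log n)$. The middle term is exactly $t_*(n)^{1/(1-4\gamma)}\log n$. For the first term, the admissibility bound $q(n) \le t(n)/\log^{2.01} n$ yields $q_*(n)\log^2 n \le t_*(n)\log^2 n/\log^{2.01}(2n) = o(t_*(n))$, which is dominated by the middle term since $1/(1-4\gamma) > 1$; for the last term, since $q$ is unbounded we have $t_*(n) = t(2n) \ge q(2n)\log^{2.01}(2n) = \Omega(\log^{2.01} n)$, so $\log^{2+o(1)} n$ is dominated as well. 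Hence the running time is $O(t_*(n)^{1/(1-4\gamma)}\log n) = O(t(2n)^{1/(1-4\gamma)}\log(2n))$.

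Finally I would explain how the corollary implies \cref{thm:strong}: reindexing by $2n \mapsto n$ (odd lengths carry no content since $\cP_{2n-1} = \emptyset$) rephrases the above as bounds for input length $n$, and, running the construction with an internal constant small enough that $t(n)^{1/(1-4\gamma)} \le t(n)^{1+\gamma'}$ for the target exponent parameter $\gamma'$ of \cref{thm:strong} (which only makes the running time, and hence the class $\QueryTime_\eps$, easier to land in), places $\cP$ in $\QueryTime_\eps(q(n), t(n)^{1+\gamma'}\log n)$ with the stated query and time lower bounds. I expect no genuine obstacle here; the only step demanding any care is confirming that the two additive error terms in the time upper bound are truly dominated by $F(t_*(n))\log n$, which the admissibility conditions together with the unboundedness of $q$ guarantee.
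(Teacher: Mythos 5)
Your proposal is correct and follows exactly the paper's intended route: the paper states this corollary as an immediate combination of \cref{lemma:strong-choices} (which certifies the hypotheses of \cref{def:hard-property}) and \cref{lemma:complexity-hard-property}, with no explicit proof given. The only nontrivial detail you supply — showing the terms $q_*(n)\log^2 n$ and $\log^{2+o(1)} n$ from \cref{lemma:complexity-hard-property}.\ref{lemma:complexity-hard-property-2} are dominated by $F(t_*(n))\log n = t_*(n)^{1/(1-4\gamma)}\log n$ via the admissibility bound $q(n) \le t(n)/\log^{2.01} n$ and unboundedness of $q$ — is correct and is what the paper leaves implicit.
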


\section{Distribution-free distance approximation for halfspaces}\label{sec:distribution-free-lb-halfspaces}

In this section, we prove a fine-grained hardness result for distribution-free
distance approximation to halfspaces in the low-dimensional setting. We first
formally define distribution-free distance approximation and the halfspace property over $\bZ^d$.

\begin{definition}[Distribution-free distance approximation]
    \label{def:distribution-free-distance-approximation}
    Let $\cX$ be a universe set and $\cP \subset \zo^\cX$ be a property of
    Boolean functions over $\cX$. 
    Define the \emph{distance from a function $f:\cX\to \zo$
    to property $\cP$ with respect to a distribution $\cD$ over $\cX$} as
    \[
        \dist_\cD(f, \cP)
        \define \inf_{g \in \cP} \dist_\cD(f, g) \,,
        \quad \text{where} \quad
        \dist_\cD(f, g) \define \Pru{x \sim \cD}{f(x) \ne g(x)} \,.
    \]
    A randomized algorithm $\cA$ is a \emph{distribution-free distance approximation}
    algorithm for property $\cP$ if,
    for each probability distribution $\cD$ over $\cX$
    and function $f : \cX \to \zo$,
    given input parameters $\epsilon,\delta\in (0,1)$,
    algorithm $\cA$ uses \emph{labeled samples} of the form $(x \sim
    \cD, f(x))$ and queries to $f$ to output a number $\widehat \alpha$
    which, with probability at least $1-\delta$, satisfies $\abs*{\dist_\cD(f, \cP) -
    \widehat \alpha} \le \epsilon$. 
     By default, $\delta=\frac 13$.
\end{definition}

\begin{definition}[Hyperplanes and halfspaces]
    \label{def:halfspaces}
    A $(d-1)$-dimensional \emph{hyperplane} $H$ in $\bR^d$ is the set of points $x \in
    \bR^d$ satisfying $\inp{w}{x} + \theta = 0$, where $w \in \bR^d$ and $\theta
    \in \bR$ are the parameters defining~$H$.
    A function
    $f : \bZ^d \to \zo$ is a \emph{halfspace} if there exists a hyperplane $H$
    with parameters $w, \theta$
    such that $f(x) = \ind{\inp{w}{x} + \theta \ge 0}$
    for all $x \in \bZ^d$. We denote the class of
    halfspaces over $\bZ^d$ by $\cH$.
\end{definition}

Our result is the following.

\begin{theorem}
    \label{thm:distribution-free-testing}
    Under the integer $k$-SUM conjecture, %the following holds. For 
    for all
    constants
    $d \in \bN$ and $\gamma > 0$, there is no distribution-free
    distance approximation algorithm for halfspaces over $\bZ^d$
    running in time $(1/\epsilon)^{\ceil{(d+1)/2} - \gamma}$ (as a function of
    $\epsilon$). This lower bounds holds\footnote{The significance of stating the lower bound for distributions supported on
    points with bounded coordinates is that, in our RAM model with logarithmic
    cost, one could obtain trivial lower bounds by constructing inputs that
    are arbitrarily expensive to read. The stated bound implies only a
    polylogarithmic overhead for reading the input, which keeps the
    bound meaningful.} even if the input
    distribution $\cD$ is promised to be supported on points with absolute
    coordinate values at most $(1/\epsilon)^{O_d(1)}$.
\end{theorem}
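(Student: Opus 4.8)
The plan is to reduce the integer $(d{+}1)$-SUM problem to distribution-free distance approximation for halfspaces over $\bZ^d$, taking $\epsilon = \Theta(1/n)$ where $n$ is the size of the $(d{+}1)$-SUM instance; the $(d{+}1)$-SUM hardness (a consequence of the $k$-SUM conjecture) then rules out running time $n^{\ceil{(d+1)/2}-o(1)} = (1/\epsilon)^{\ceil{(d+1)/2}-o(1)}$, which gives the theorem. For the geometric core, set $k=d+1$ and start from a $k$-SUM instance on $n$ \emph{distinct} integers $a_1,\dots,a_n$ in a $\poly(n)$ range (this restriction is standard and still $n^{\ceil{k/2}-o(1)}$-hard). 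I would use the classical moment-curve embedding $\phi(a) = (a, a^2, \dots, a^{k-2}, a^k) \in \bZ^d$: the generalized Vandermonde determinant of $\phi(a_{i_1}), \dots, \phi(a_{i_k})$ with a row of ones appended equals $\big(\prod_{j<\ell}(a_{i_\ell}-a_{i_j})\big)\cdot (a_{i_1}+\cdots+a_{i_k})$, so $d{+}1$ distinct such points are affinely dependent iff the corresponding $k$-tuple sums to $0$; moreover any $\le d$ of these points are affinely independent (Vandermonde minors), and the hyperplane spanned by any $d$ of them is non-vertical (their projections to the first $d{-}1$ coordinates again form a Vandermonde). Thus YES instances yield $n$ points $p_i=\phi(a_i)$ in $\bZ^d$, with coordinates of magnitude $M=\poly(n)$, some $d{+}1$ of which lie on a common non-vertical hyperplane, while NO instances yield $n$ such points in general position.

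Next comes the ``witness gadget.'' Fix a scale $L=\poly(n)$ (large enough, see below), replace each $p_i$ by the labeled pair $q_i^+ = Lp_i + e_d$ with label $1$ and $q_i^- = Lp_i - e_d$ with label $0$ (with $e_d$ the last coordinate direction), let $f:\bZ^d\to\zo$ match these labels on the $2n$ gadget points (and be $0$ elsewhere), and let $\cD$ be uniform over the $2n$ gadget points; all coordinates are at most $LM+1 = \poly(n) = (1/\epsilon)^{O_d(1)}$, and samples and queries to $f$ are answerable in $\polylog n$ time after sorting the list of $a_i$. The key is a formula for $\dist_\cD(f,\cH)$. A halfspace $\ind{\inp{w}{x}+\theta\ge 0}$ labels a pair $(q_i^+,q_i^-)$ correctly iff $\abs{\inp{w}{Lp_i}+\theta} < w_d$; this forces $w_d>0$, and rescaling shows the hyperplane $\inp{w}{y}+\theta/L=0$ then lies within distance $1/L$ of $p_i$ in the original coordinates, while every incorrectly handled pair contributes exactly one mislabeled point. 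Choosing $L$ larger than a fixed $\poly(M,d)$ threshold makes ``within $1/L$ of $d{+}1$ of the integer points $p_i$'' force those points to be exactly affinely dependent (the volume of the corresponding flat simplex is an integer multiple of $1/d!$ and must vanish). Consequently, in the NO case no halfspace labels more than $d$ pairs correctly, while a $w_d>0$-oriented hyperplane through any $d$ of the $p_i$ labels exactly $d$ correctly, so $\dist_\cD(f,\cH)=\frac{n-d}{2n}$; in the YES case the non-vertical hyperplane through the $d{+}1$ dependent points, scaled by $L$ and oriented, labels those $d{+}1$ pairs correctly, so $\dist_\cD(f,\cH)\le\frac{n-d-1}{2n}$. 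The two cases differ by $\frac{1}{2n}$.

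To finish: given a $(d{+}1)$-SUM instance, build $\cD$ and $f$ as above in $O(n\log n)$ time, run the hypothesized distance-approximation algorithm with $\epsilon=\frac1{5n}$ and $\delta=\frac13$, and answer YES iff the returned estimate is below the midpoint $\frac{2n-2d-1}{4n}$; since $2\epsilon<\frac1{2n}$, this decides $(d{+}1)$-SUM with probability $\ge\frac23$ in time $\polylog(n)\cdot(1/\epsilon)^{\ceil{(d+1)/2}-\gamma} = O\!\big(n^{\ceil{(d+1)/2}-\gamma/2}\big)$ for large $n$, contradicting the conjecture. (Query access to $f$ does not help: even given the entire instance, $(d{+}1)$-SUM requires $n^{\ceil{(d+1)/2}-o(1)}$ time.) I expect the main obstacle to be the quantitative ``close implies degenerate'' step: keeping the integer scale $L$ (hence the coordinate bound) polynomial in $n$ while rigorously ruling out \emph{every} halfspace that is almost-but-not-exactly incident to $d{+}1$ of the $p_i$ — including vertical and badly-oriented halfspaces — so that the distance formula above is exact on both sides. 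A secondary point requiring care is making the moment-curve construction's general-position guarantee fully clean (distinctness of values, sub-tuples, and vertical hyperplanes), and ensuring the RAM running time of the reduction is genuinely dominated by the $k$-SUM lower bound rather than by input encoding length.
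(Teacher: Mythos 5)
Your proposal is correct, and it takes a genuinely different geometric route from the paper while following the same high-level reduction strategy (reduce $(d{+}1)$-SUM to distance approximation via a ``close to $d{+}1$ input points'' gadget, with $\epsilon = \Theta(1/n)$).

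The paper's construction follows \cite{FinkHKS17}: starting from the $(d{+}1)$-\emph{list} version of $k$-SUM, it places the $(d{+}1)n$ points on $d$ vertical lines over the vertices of a fixed polygon plus one over its centroid, so that the target incidence is captured by the exact algebraic identity $\tfrac{1}{d}\sum_{i\le d} a_i^* = a_{d+1}^*$. This identity, together with a fixed factor of $4$ in the construction, makes the NO-case ``hyperplane within vertical distance $1$ of all $d{+}1$ points implies exact incidence'' step immediate from integrality and the triangle inequality. Your construction instead uses the single-list version and the Erickson-style moment-curve embedding $\phi(a) = (a,a^2,\dots,a^{k-2},a^k)$, exploiting the generalized Vandermonde identity that ties affine degeneracy of $d{+}1$ moment-curve points to $\sum_j a_{i_j} = 0$. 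Because the moment curve has no analogous fixed-scale separation, you introduce the integer scaling factor $L$, and the ``close implies degenerate'' step becomes a Lipschitz-vs-integrality argument on the $(d{+}1)\times(d{+}1)$ signed-volume determinant. This is the point you flag as the main obstacle, but it does in fact close cleanly: if a halfspace correctly splits $d{+}1$ gadget pairs, the (rescaled) hyperplane is within $1/L$ of $d{+}1$ integer points $p_{i_j}$; replacing each $p_{i_j}$ by its nearest point on that hyperplane makes the determinant exactly zero, and the perturbation changes the determinant by at most $\poly_d(M)/L$, where $M = n^{O_d(1)}$ bounds the coordinates. Choosing $L$ a fixed $\poly_d(n)$ larger than this Lipschitz bound forces the original integer determinant to vanish, contradicting the NO assumption. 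Vertical halfspaces ($w_d=0$) and $w_d<0$ halfspaces correctly split no pair at all, so they are automatically suboptimal and do not threaten the exact formula $\dist_\cD(f,\cH) = \tfrac{n-d}{2n}$ (NO) versus $\le \tfrac{n-d-1}{2n}$ (YES). The coordinates stay $(1/\epsilon)^{O_d(1)}$ and the simulation overhead stays polylogarithmic, as you note.

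Roughly, the paper's approach buys a shorter and more self-contained algebraic argument (no scaling step, tighter constants, and a ready-made reference to \cite{FinkHKS17}), while yours buys a more ``textbook'' geometric reduction (moment curve, general position for free, and $n$ rather than $(d{+}1)n$ points) at the cost of carrying the explicit $L$-scaling and determinant-Lipschitz calculation. Both rely essentially on input integrality in the same way to make ``almost incident'' imply ``incident,'' a point the paper also emphasizes in \cref{remark:real-values}.

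One minor cleanup you would need to make precise: the paper's stated \cref{problem:k-sum}/\cref{conjecture:ksum} is the $k$-\emph{list} formulation, whereas you reduce from the single-list variant with a distinct-indices $k$-tuple summing to zero. That variant is indeed equivalent in fine-grained hardness via the standard list-tagging reduction, but a final write-up should include (or at least explicitly cite) that reduction, since your geometric step requires the $k$ chosen values to be distinct for the Vandermonde to be nonzero.
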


We use the same RAM model as in \cref{sec:hierarchies}, with modifications to match the setting of
\cref{thm:distribution-free-testing}. The input tape is intially empty. The dimension $d \in \bN$ is
part of the problem description, and the error parameter $\epsilon$ is given on the parameter tape,
encoded in $O(\log(1/\epsilon))$ bits.\footnote{If $\bin{\eps}$ is too long, we can reduce
$\epsilon$ by at most a factor of 2 to get the desired length without affecting asymptotics.} To
model access to a distribution $\cD$ and a function $f : \bZ^d \to \zo$, we add two unit
time operations:%\rfnote{Previously, operation \textsc{QueryFunc} took $O(\log(|X_j|))$ time, but
% this does not really make sense because all inputs $X_j, \dotsc, X_{j+d-1}$ should count towards the
% cost. On the other hand, the RAM model already ``charges'' for this cost because these registers
% need to be written to before they can be used to make a query. Therefore, I think it's fine to keep
% things simple and make both operations unit time.}

\textsc{DrawSamp}:  Clears the input tape and resets its head, then samples $(x, f(x))$ with $x \sim
\cD$, and writes $x_1, \dots, x_d, f(x)$ on the input tape (recall that $x \in \bZ^d$ and $f(x)$ is
a bit).
%This takes unit~time.

\textsc{QueryFunc}$(X_j)$: Clears the input tape and resets its head, sets $x \define (X_j, \dots,
X_{j+d-1})$ using $d$ machine registers (see \cref{table:ram}), and writes $f(x)$ on the input tape.
%This takes $O(\log(|X_j|))$ time.

We prove \cref{thm:distribution-free-testing} via a reduction from the integer
$k$-SUM problem, stated next.

\begin{problem}[$k$-SUM]
    \label{problem:k-sum}
    Given $k$ lists $A_1, \dotsc, A_k$ of $n$ distinct integers each, where each
    integer lies in the range $[-n^{2k}, n^{2k}]$, is there set of choices $a_i
    \in A_i$ for each $i \in [k]$ such that $a_1 + \dotsm + a_{k-1} = a_k$?
\end{problem}

Our hardness result is based on the following standard conjecture on the time
complexity of the $k$-SUM problem (see the lecture
notes \cite{vassilevska2020lecture9} for further background on this
conjecture).

\begin{conjecture}[$k$-SUM conjecture \cite{AbboudL13}]
    \label{conjecture:ksum}
    For each constant $k \ge 2$ and $\gamma > 0$, no randomized algorithm can solve
    \cref{problem:k-sum}  in $O(n^{\ceil{k/2}-\gamma})$  time with error probability at most $\frac
    13$.
\end{conjecture}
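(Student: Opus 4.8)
The statement to be addressed, \cref{conjecture:ksum}, is a conjecture rather than a theorem, and no unconditional proof is known or expected: a genuine proof would entail a super-linear time lower bound for an explicit problem in a general model of computation, which is currently beyond reach. Accordingly, the plan below is not a derivation of \cref{conjecture:ksum} but a map of the evidence supporting it and of the obstacles any proof must overcome; in this paper the conjecture is invoked only as a hypothesis, exactly as SETH is in \cref{sec:hierarchies}.

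First I would pin down what the conjectured exponent records. The meet-in-the-middle algorithm enumerates the $n^{\ceil{k/2}}$ partial sums over (roughly) half the input lists, sorts them, and for each of the $n^{\floor{k/2}}$ partial sums over the remaining lists performs a lookup, running in $\widetilde O(n^{\ceil{k/2}})$ time; for $k=3$ this is the textbook $\widetilde O(n^2)$ bound. Despite decades of attention, every improvement shaves only sub-polynomial factors (and, for $3$-SUM, up to polylogarithmic factors, as in the Gr\o{}nlund--Pettie $O(n^2/\polylog n)$ algorithm), and none touches the exponent, even allowing randomization. This persistent failure is the primary empirical basis for the conjecture.

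Next I would survey the rigorous partial lower bounds and isolate the main obstacle. In the $k$-linear-decision-tree model one can prove a tight $\Omega(n^{\ceil{k/2}})$ bound (Erickson, with later refinements). However --- and this is precisely the obstruction --- Kane, Lovett, and Moran constructed linear decision trees for $k$-SUM of near-linear depth, so the decision-tree route \emph{provably cannot} establish \cref{conjecture:ksum}: a real proof must charge for the indexing and data-movement work, not merely the arithmetic comparisons, and hence must operate in a RAM-style model where no techniques for super-linear lower bounds are known. This is the same wall that blocks unconditional proofs of SETH, the Orthogonal Vectors hypothesis, and the rest of the fine-grained hardness catalogue, and I do not expect to circumvent it.

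What one \emph{can} do rigorously --- and where I would actually put the effort --- is establish robustness and centrality. The integer formulation of \cref{problem:k-sum} reduces to and from the real-valued formulation via linearization and hashing; the hardness threshold $\ceil{k/2}$ is stable under the standard self-reductions; and $k$-SUM hardness is intertwined with that of a large family of problems in computational geometry (including the affine-degeneracy problems exploited in \cref{sec:distribution-free-lb-halfspaces}), string processing, and dynamic data structures. So the realistic ``proof proposal'' is the ongoing fine-grained program of tightening these equivalences, so that a refutation of \cref{conjecture:ksum} would refute many statements at once, rather than any standalone argument. For the present paper this is moot: \cref{thm:distribution-free-testing} is stated conditionally on \cref{conjecture:ksum}, just as the strong hierarchy is stated conditionally on SETH.
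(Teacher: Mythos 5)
You are correct: \cref{conjecture:ksum} is a standing conjecture from the fine-grained complexity literature (cited to \cite{AbboudL13}), and the paper offers no proof of it—it is invoked purely as a hypothesis for \cref{thm:distribution-free-testing}, exactly as SETH is for the strong hierarchy. Your survey of the supporting evidence and of the decision-tree obstruction is accurate and consistent with the paper's treatment, so there is nothing to reconcile.
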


\begin{remark}
    To align with our RAM model in \cref{sec:hierarchies} and avoid complications with real-number
    representation, we study distance approximation over $\bZ^d$ rather than $\bR^d$. Our arguments
    extend to the real setting under a suitable real RAM model and the $k$-SUM conjecture; see
    \cref{remark:real-values}.
\end{remark}

\subsection{Proof of \cref{thm:distribution-free-testing}}

Distance approximation for halfspaces is closely related to geometric decision
problems %which are 
known to be $k$-SUM hard. For example, the \textsc{Geombase}
problem\footnote{Given $n$ points $(x_1, y_1), \dotsc, (x_n, y_n)$ on the plane,
with $y_i \in \{0,1,2\}$ for all $i \in [n]$, do any three of these points lie
on a non-horizontal line?} is a classic 3-SUM hard problem \cite{GajentaanO95},
and deciding whether $d+1$ points in $\bR^d$ lie on a hyperplane is $(d+1)$-SUM
hard \cite{Erickson96}. We build upon this theme by reducing from $(d+1)$-SUM to
distance approximation for halfspaces.% in $\bR^d$.

The starting point for our proof is
a construction from \cite{FinkHKS17}
for the following high-dimensional version of the \textsc{Geombase} problem.
%the high-dimensional version of the 3SUM-hardness of the
%\textsc{Geombase} problem: given $n$ points $(x_1, y_1), \dotsc, (x_n, y_n)$ on the plane, with $y_i
%\in \{0,1,2\}$ for all $i \in [n]$, do any three of these points lie on a non-horizontal line?
%
%An appropriate high-dimensional version is as follows.
A $(d-1)$-dimensional hyperplane in $\bR^d$ is \emph{vertical} if it contains a
vertical line, \ie two distinct points $p, q \in \bR^d$ satisfying $p_i = q_i$
for all $i \in [d-1]$; otherwise, it is \emph{non-vertical}.
The problem is: given $n$ points in $\bR^d$, do any $d+1$ of
them lie on a non-vertical $(d-1)$-dimensional hyperplane? This problem is $(d+1)$-SUM hard
and was used by \cite{FinkHKS17} to prove
lower bounds for
%, and this fact was previously used by \cite{FinkHKS17} to prove
%lower bounds for the related problem of 
linear separability of probabilistic
point sets.
We state the $(d+1)$-SUM hardness result in the real-valued setting, which is the
norm in computational geometry.

% \renato{(The $(d+1)$-SUM hardness of affine degeneracy problems in
% $\bR^d$ has been known at least since \cite{Erickson96}, but the construction of
% \cite{FinkHKS17} is a more convenient starting point for our proof.)}

\begin{theorem}[Implicit in {\cite[Theorem~5.1]{FinkHKS17}}]
    Consider %the problem of 
    determining whether any $d+1$ of the given $n$ points in $\bR^d$
    lie on a non-vertical $(d-1)$-dimensional hyperplane.
    Under the (real-valued) $k$-SUM conjecture, no randomized algorithm 
    solves this problem in time $n^{\ceil{(d+1)/2}-\gamma}$ for any $\gamma > 0$.
\end{theorem}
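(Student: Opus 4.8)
The plan is to reduce the (real-valued) $(d+1)$-SUM problem to the problem of detecting $d+1$ affinely dependent points in $\bR^d$, following the ``affine degeneracy is $k$-SUM hard'' paradigm and making explicit the point-configuration construction of \cite{FinkHKS17}. Given lists $A_1, \dots, A_{d+1}$ of $n$ real numbers each, we must decide whether there exist $a_i \in A_i$ with $a_1 + \dots + a_d = a_{d+1}$. Fix $d$ affinely independent points $v_1, \dots, v_d \in \bR^{d-1}$ (say $0, e_1, \dots, e_{d-1}$) and let $v_{d+1} \define \frac 1 d (v_1 + \dots + v_d)$ be their centroid, so that $v_1 + \dots + v_d - d\, v_{d+1} = 0$ is the unique (up to scaling) affine dependence among $v_1, \dots, v_{d+1}$, with coefficient vector $(1, \dots, 1, -d)$. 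The reduction outputs the point set $\{(v_i, a) : i \in [d],\ a \in A_i\} \cup \{(v_{d+1}, a/d) : a \in A_{d+1}\} \subset \bR^d$, which has $(d+1)n = O(n)$ points and is computable in linear time.

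The heart of the argument is a correspondence between solutions of the $(d+1)$-SUM instance and $(d+1)$-subsets of these points lying on a common non-vertical hyperplane. First, any two points built from the same list differ only in the last coordinate, so a non-vertical hyperplane (one containing no vertical line) contains at most one point per list; hence any $d+1$ of our points on a common non-vertical hyperplane consist of exactly one point from each list, say $(v_1, a_1), \dots, (v_d, a_d), (v_{d+1}, a_{d+1}/d)$ with $a_i \in A_i$. Second, $d+1$ points in $\bR^d$ lie on a common hyperplane iff they are affinely dependent; and since the only nontrivial affine dependence among $v_1, \dots, v_{d+1}$ has coefficients $(1, \dots, 1, -d)$, the lifted points $(v_i, a_i)_{i \le d}, (v_{d+1}, a_{d+1}/d)$ are affinely dependent iff $\sum_{i=1}^{d} a_i + (-d)\cdot (a_{d+1}/d) = 0$, i.e.\ iff $a_1 + \dots + a_d = a_{d+1}$. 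Finally, whenever this holds the containing hyperplane is automatically non-vertical: its projection to the first $d-1$ coordinates contains the affinely independent points $v_1, \dots, v_d$, hence equals all of $\bR^{d-1}$, so it contains no vertical line. Thus the point set has $d+1$ points on a common non-vertical hyperplane iff the $(d+1)$-SUM instance is a YES instance.

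Given the reduction, a randomized algorithm solving the hyperplane-detection problem on $N$ points in time $N^{\lceil (d+1)/2 \rceil - \gamma}$ would, applied to the $N = (d+1)n = O(n)$ points above, decide $(d+1)$-SUM in time $O(n^{\lceil (d+1)/2 \rceil - \gamma})$ for constant $d$, and with error probability at most $\frac13$, contradicting the (real-valued) $k$-SUM conjecture with $k = d+1$.

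The main obstacle — and the reason the construction is more delicate than one might first expect — is that one cannot simply place the value $a_i$ in the last coordinate over a vertex $v_i$ and hope that affine dependence directly encodes $a_1 + \dots + a_d = a_{d+1}$: the affine-dependence (circuit) coefficient vector of any point configuration necessarily sums to zero, whereas $(1, \dots, 1, -1)$ sums to $d-1 \ne 0$. This forces the choice of $v_{d+1}$ as the centroid (equivalently, a harmless $1/d$ rescaling of the last list, legitimate in the real-valued setting), so that the relevant circuit is $(1, \dots, 1, -d)$, which does sum to zero. Checking that this rescaling causes no issue and ruling out degenerate sub-configurations (points from fewer than $d+1$ lists, or lifted points spanning fewer than $d-1$ dimensions) is the remaining routine bookkeeping.
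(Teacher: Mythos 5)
Your proposal reconstructs exactly the reduction the paper attributes to \cite{FinkHKS17}: you place $d$ affinely independent base points in a coordinate hyperplane, take $v_{d+1}$ to be their centroid, lift $a_i\in A_i$ into the last coordinate (scaling the last list by $1/d$), and observe that the unique affine circuit $(1,\dots,1,-d)$ of the base configuration turns affine dependence of a $(d+1)$-tuple into the equation $a_1+\dots+a_d=a_{d+1}$. This is the same construction (the paper's $p^{(i)}$ with $x_d=0$ is just your $v_i\in\bR^{d-1}$ embedded in $\bR^d$, and the $\frac{a_{d+1}}{d}e_d$ lift matches your $a/d$), and your correctness argument — one point per list forced by non-verticality, uniqueness of the circuit, and automatic non-verticality of the containing hyperplane since it projects onto a full-dimensional simplex — is exactly the verification the paper leaves to the cited reference. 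Correct, and essentially the same approach.
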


Their reduction is as follows. Let $P = (p^{(1)}, \dotsc, p^{(d)})$ be the
vertices of a convex polygon embedded in the hyperplane $\{x_d=0\}
\subset \bR^d$. Let $c = \frac{1}{d}\sum_{i \in [d]} p^{(i)}$ be the center
of mass of the vertices. For $j\in[d]$,
let $e_j$ denote the the
$j^{\text{th}}$ standard basis vector. 
Given a $(d+1)$-SUM instance $A_1, \dotsc,
A_{d+1}$, map each value $a_i \in A_i$ for $i \in [d]$ to the point $a_i^*
\define p^{(i)} + a_i \cdot e_d \in \bR^d$. Then map each value $a_{d+1} \in
A_{d+1}$ of the last input set to the point $a_{d+1}^* \define c +
\frac{a_{d+1}}{d} \cdot e_d \in \bR^d$. Let $S$ be the set of $(d+1)n$ points
given by this reduction. It follows that the $(d+1)$-SUM instance is a YES
instance iff there exist $d+1$ points in $S$ that lie on a
non-vertical $(d-1)$-dimensional hyperplane.

To prove our distribution approximation lower bound, we modify the above construction as follows:

\begin{enumerate}
    \item
    Multiply all points by $d$ to ensure integer coordinates.

\item Replace each point $a_i^*$ with two nearby points: one slightly above, labeled 1 by $f$; one slightly below, labeled 0.

\item Let $\cD$ be the uniform distribution over these points; we provide
    access to labeled %\srnote{If we can query, perhaps it is not important that the samples are labeled?} 
    samples from $\cD$ and $f$ and query access to $f$.
    %both labeled sample and query access to $\cD$ and $f$.
\end{enumerate}

We will show that if we start with a YES instance of $(d{+}1)$-SUM, then some halfspace correctly
labels $(d{+}1)(n{+}1)$ points. If we start with a NO instance, then every halfspace labels at most
$\frac{2(d+1)n}{2} + d=  (d+1)(n+1) -1$ points correctly. Thus, approximating $\dist_\cD(f, \cH)$ to
additive error $O_d(1/n)$ solves the $(d{+}1)$-SUM problem. Setting $\eps = \Theta(\frac 1 n)$
yields \cref{thm:distribution-free-testing}.

We now formalize this argument.

\vspace{-5mm}
\begin{figure}[h] 
    \begin{minipage}[b]{0.4\textwidth}
    \centering
    \tdplotsetmaincoords{0}{90}
        \begin{tikzpicture}[tdplot_main_coords, scale=1.2]
            % X-axis ticks
            \foreach \x in {1,2,3} {
              \draw[tdplot_main_coords] (\x,0.1,0) -- (\x,-0.1,0);
            }
            
            % Y-axis ticks
            \foreach \y in {1,2,3} {
              \draw[tdplot_main_coords] (0.1,\y,0) -- (-0.1,\y,0);
            }
            
            % Axes
            \draw[->, thick] (0,0,0) -- (4,0,0) node[anchor=north east]{};
            \draw[->, thick] (0,0,0) -- (0,4,0) node[anchor=north west]{};
            %\draw[->, thick] (0,0,-4) -- (0,0,4) node[anchor=south]{$z$};
            
            % Base coordinates
            \coordinate (Z) at (0,0,0);
            \coordinate (X) at (3,0,0);
            \coordinate (Y) at (0,3,0);
            \coordinate (C) at (1,1,0);
            
            % Translucent triangle
            \filldraw[fill=blue!30, opacity=0.5, draw=black] (Z) -- (X) -- (Y) -- cycle;
            
            % Base points with coordinate labels
            \filldraw (Z) circle (1.2pt) node[anchor=south] {$p^{(1)} = (0,0,0)$};
            \filldraw (X) circle (1.2pt) node[anchor=north west] {$p^{(2)} = (3,0,0)$};
            \filldraw (Y) circle (1.2pt) node[anchor=south] {$p^{(3)} = (0,3,0)$};
            \filldraw (C) circle (1.2pt) node[anchor=south] {$c = (1,1,0)$};
        \end{tikzpicture}
        \caption{The polygon in the reduction for $d=3$.}
        \label{fig:ksum-reduc-1}
    \end{minipage}%
    \hspace{0.1\textwidth}
    \begin{minipage}[b]{0.48\textwidth}
    \centering
    %\hspace{-.5cm}
    %\includegraphics[width=.652\linewidth]{Figures/AgnosticPAClearning-figure-island-with-convex-hull.pdf}
    \tdplotsetmaincoords{60}{110}
        \begin{tikzpicture}[tdplot_main_coords, scale=1.2]
            % X-axis ticks
            \foreach \x in {1,2,3} {
              \draw[tdplot_main_coords] (\x,0.1,0) -- (\x,-0.1,0);
            }
            
            % Y-axis ticks
            \foreach \y in {1,2,3} {
              \draw[tdplot_main_coords] (0.1,\y,0) -- (-0.1,\y,0);
            }
            
            % Z-axis ticks
            % \foreach \z in {1,2,3} {
            %   \draw[tdplot_main_coords] (0.1,0,\z) -- (-0.1,0,\z);
            % }
        
            % Axes
            \draw[->, thick] (0,0,0) -- (4,0,0) node[anchor=north east]{};
            \draw[->, thick] (0,0,0) -- (0,4,0) node[anchor=north west]{};
            %\draw[->, thick] (0,0,-3) -- (0,0,3) node[anchor=south]{};
            
            % Base coordinates
            \coordinate (Z) at (0,0,0);
            \coordinate (X) at (3,0,0);
            \coordinate (Y) at (0,3,0);
            \coordinate (C) at (1,1,0);
            
            % Top and bottom extensions
            \coordinate (Ztop) at (0,0,2.5);
            \coordinate (Zbot) at (0,0,-2.5);
            \coordinate (Xtop) at (3,0,2.5);
            \coordinate (Xbot) at (3,0,-2.5);
            \coordinate (Ytop) at (0,3,2.5);
            \coordinate (Ybot) at (0,3,-2.5);
            \coordinate (Ctop) at (1,1,2.5);
            \coordinate (Cbot) at (1,1,-2.5);
            
            % Translucent triangle
            \filldraw[fill=blue!30, opacity=0.5, draw=black] (Z) -- (X) -- (Y) -- cycle;
            
            % Vertical lines in both directions
            \draw[red][dashed] (Zbot) -- (Ztop);
            \draw[red][dashed] (Xbot) -- (Xtop);
            \draw[red][dashed] (Ybot) -- (Ytop);
            \draw[red][dashed] (Cbot) -- (Ctop);
            
            % Base points with coordinate labels
            \filldraw (Z) circle (1.2pt) node[anchor=east] {$p^{(1)}$};
            \filldraw (X) circle (1.2pt) node[anchor=south east] {$p^{(2)}$};
            \filldraw (Y) circle (1.2pt) node[anchor=south west] {$p^{(3)}$};
            \filldraw (C) circle (1.2pt) node[anchor=south east] {$c$};
            
            % P1 points
            \filldraw[green!70!black] (0,0,2.2) circle (1.2pt) node[anchor=east] {$\beta(a_1^*)$};
            \filldraw[blue] (0,0,1.8) circle (1.2pt) node[anchor=east] {$\alpha(a_1^*)$};

            \filldraw[green!70!black] (0,0,1.2) circle (1.2pt) node[anchor=east] {$\beta(b_1^*)$};
            \filldraw[blue] (0,0,0.8) circle (1.2pt) node[anchor=east] {$\alpha(b_1^*)$};
            
            % P2 points
            \filldraw[green!70!black] (3,0,-0.8) circle (1.2pt) node[anchor=east] {$\beta(a_2^*)$};
            \filldraw[blue] (3,0,-1.2) circle (1.2pt) node[anchor=east] {$\alpha(a_2^*)$};
            
            \filldraw[green!70!black] (3,0,2.2) circle (1.2pt) node[anchor=east] {$\beta(b_2^*)$};
            \filldraw[blue] (3,0,1.8) circle (1.2pt) node[anchor=east] {$\alpha(b_2^*)$};
            
            % P3 points
            \filldraw[green!70!black] (0,3,2.2) circle (1.2pt) node[anchor=west] {$\beta(a_3^*)$};
            \filldraw[blue] (0,3,1.8) circle (1.2pt) node[anchor=west] {$\alpha(a_3^*)$};
            
            \filldraw[green!70!black] (0,3,-0.8) circle (1.2pt) node[anchor=west] {$\beta(b_3^*)$};
            \filldraw[blue] (0,3,-1.2) circle (1.2pt) node[anchor=west] {$\alpha(b_3^*)$};
            
            % c points
            \filldraw[green!70!black] (1,1,-0.8) circle (1.2pt) node[anchor=west] {$\beta(a_4^*)$};
            \filldraw[blue] (1,1,-1.2) circle (1.2pt) node[anchor=west] {$\alpha(a_4^*)$};

            \filldraw[green!70!black] (1,1,-1.8) circle (1.2pt) node[anchor=west] {$\beta(b_4^*)$};
            \filldraw[blue] (1,1,-2.2) circle (1.2pt) node[anchor=west] {$\alpha(b_4^*)$};

            % % P1 points
            % \filldraw[red!70!black] (0,0,2) circle (1.2pt) node[anchor=east] {$\gamma(a_1^*)$};
            
            % \filldraw[red!70!black] (0,0,1) circle (1.2pt) node[anchor=east] {$\gamma(b_1^*)$};
            
            % % P2 points
            % \filldraw[red!70!black] (3,0,-1) circle (1.2pt) node[anchor=east] {$\gamma(a_2^*)$};
            
            % \filldraw[red!70!black] (3,0,2) circle (1.2pt) node[anchor=east] {$\gamma(b_2^*)$};
            
            % % P3 points
            % \filldraw[red!70!black] (0,3,2) circle (1.2pt) node[anchor=west] {$\gamma(a_3^*)$};
            
            % \filldraw[red!70!black] (0,3,-1) circle (1.2pt) node[anchor=west] {$\gamma(b_3^*)$};
            
            % % c points
            % \filldraw[red!70!black] (1,1,-1) circle (1.2pt) node[anchor=west] {$\gamma(a_4^*)$};
            
            % \filldraw[red!70!black] (1,1,-2) circle (1.2pt) node[anchor=west] {$\gamma(b_4^*)$};
        \end{tikzpicture}
        \caption{An illustration of our reduction for $d = 3$ and $n = 2$.}
    \label{fig:ksum-reduc-2}
  \end{minipage}
\end{figure}

\begin{proof}[Proof of \cref{thm:distribution-free-testing}]
    Given a $(d+1)$-SUM instance with input lists $A_1, \dots, A_{d+1}$, each of length~$n$, we
    reduce to distribution-free distance approximation
    for halfspaces over $\bZ^d$ with parameter $\epsilon = \Theta(\frac 1 n)$
    using the following construction.

    Let $P = (p^{(1)}, \dotsc, p^{(d)})$ be the polygon in the hyperplane $\{x_d = 0\} \subset
    \bR^d$, where $p^{(1)}$ is the zero vector and $p^{(i)} \define d \cdot e_{i-1}$  for all $i =
    2, \dotsc, d$. Let $c$ be the center of mass of the vertices of $P$, \ie $c = \frac{1}{d}
    \sum_{i=1}^d p^{(i)} = (1,\dots,1,0)$. Initialize a point set $S = \emptyset$ and a function $f
    : \bZ^d \to \zo$ to be zero on the entire domain. For each $i \in [d]$ and $a_i \in A_i$, define
    $a_i^* \define p^{(i)} + 4d a_i \cdot e_d$; for the last set $A_{d+1}$, for each $a_{d+1} \in
    A_{d+1}$, define $a_{d+1}^* \define c + 4a_{d+1} \cdot e_d$. For each $i \in [d+1]$ and $a_i \in
    A_i$, add two points to the set $S$: $$\alpha(a_i^*) \define a_i^* - e_d, \text{ \ and \  }
    \beta(a_i^*) \define a_i^* + e_d,$$ and set the labels of these points to $f(\alpha(a_i^*))
    \define 0$ and $f(\beta(a_i^*)) \define 1$. This completes the construction of the set $S$ of
    $2(d+1)n$ points and of the Boolean function $f : \bZ^d \to \zo$.

    To finish the reduction, let $\cD$ be the uniform distribution over $S$  and set
    $\eps= \frac 1{5(d+1)n}$. If the distance approximation algorithm executed on the constructed
    instance returns an estimate at most $\frac{|S|-(n+1)(d+1) +1/2}{|S|}$, accept; otherwise,
    reject.
    
    It remains to prove the correctness of the reduction and analyze the efficiency of the
    simulation.

    \textbf{Correctness.}
    For each list $A_i$ in the $k$-SUM instance, all points created from the numbers in $A_i$
    using $\alpha$ and $\beta$ functions are on the same line (see \Cref{fig:ksum-reduc-2});
    specifically, they only differ in the coordinate $d$. For all $i\in[d+1]$ and all $a_i,b_i\in
    A_i$ such that $a_i<b_i$, we have $a_i\leq b_i-1$, since $a_i$ and $b_i$ are integers.
    Consequently, the points $a_i^*$ and $b_i^*$ differ by at least $4$ in coordinate $d$, which
    implies $\alpha(a_i^*)\prec\beta(a_i^*)\prec\alpha(b_i^*)\prec\beta(b_i^*)$.

    Now suppose that $A_1, \dotsc, A_{d+1}$ is a YES instance of the
    $(d+1)$-SUM problem. Then there exist $a_1 \in A_1,\dots,a_{d+1} \in A_{d+1}$  such that
    $\sum_{i=1}^d a_i = a_{d+1}$, which implies
    $\frac{1}{d} \sum_{i=1}^d a_i^* = a_{d+1}^*$. Hence, the unique
    hyperplane $H$ passing through points $a_1^*, \dotsc, a_d^*$ also passes
    through point $a_{d+1}^*$. Therefore, there exists a halfspace $h$
    which agrees with $f$ on at least all of the points $\alpha(a_i^*),
    \beta(a_i^*)$ for $i \in [d+1]$, plus at least half of the remaining
    $\alpha(\cdot), \beta(\cdot)$ points in $S$. Hence, $f$ agrees with $h$
    on at least
    $\frac{2(d+1)n}{2} + (d+1) = (d+1)(n+1)$ points of $S$, and thus
    $\dist_\cD(f, \cH) \le \frac{|S|-(n+1)(d+1)}{|S|}$.

    Next, suppose that $A_1, \dotsc, A_{d+1}$ is a NO instance
    of the $(d+1)$-SUM problem. We claim that
    \[
        \dist_\cD(f, \cH) \ge \frac{|S|-(n+1)(d+1) +1}{|S|} \,.
    \]
    The claim follows if we
    show that, for every halfspace $h$, the function
    $f$ agrees with $h$ on at most $\frac{2(d+1)n}{2} + d = (d+1)%n+ d
    (n+1)-1$ points of $S$. Suppose for contradiction that $f$ and $h$ agree on at
    least $(d+1)(n+1)$ points of $S$. 
    Then for each $i \in [d+1]$, there must be some $a_i \in A_i$ such that $h$ correctly labels
    both $\alpha(a_i^*)$ and $\beta(a_i^*)$. This implies that the separating hyperplane $H$ passes
    between these two points for each $i\in[d+1]$.

    Since $A_1, \dotsc, A_{d+1}$ are sets of integers and form a NO instance of the
    $(d+1)$-SUM problem, 
    \[
        \abs*{\sum_{i=1}^d a_i - a_{d+1}} \ge 1 \, , 
  \text{\ \  and thus \ \  }
        \abs*{\frac{1}{d} \sum_{i=1}^d 4d a_i - 4a_{d+1}} \ge 4 \,.
    \]
    Therefore, for every choice of $\lambda_i \in [-1,1]$ for each $i \in [d+1]$,
    the triangle inequality implies that
    \[
        \abs*{\frac{1}{d} \sum_{i=1}^d (4d a_i + \lambda_i) - (4a_{d+1} + \lambda_{d+1})}
        \ge 2 > 0 \,.
    \]
    Consequently, no hyperplane can pass within vertical distance %at most
    $1$ of every point $a_i^*$ for all $i \in [d+1]$, which contradicts the
    previous conclusion that $H$ passes between  points $\alpha(a_i^*)$ and
    $\beta(a_i^*)$ for each $i \in [d+1]$, completing the proof of the claim.

    We conclude that, if the distance approximation algorithm is run with parameter $\eps= \frac
    1{5(d+1)n}<\frac 1 {2|S|}$, the reduction algorithm correctly answers for both YES and NO
    instance with probability at least~$\frac 23$.

    \textbf{Efficiency.} To simulate a distance approximation algorithm $\cA$,
    we first read the entire input $A_1, \dotsc A_{d+1}$ and sort each list
    $A_i$ in increasing order. This takes $O(n \polylog(n))$ time.
   To answer a query $f(x)$ from $\cA$, we spend $O(\polylog(n))$ time to
    perform binary search and check whether $x = \alpha(a_i^*)$ or $x =
    \beta(a_i^*)$ for some $a_i \in A_i$; if so, we return the corresponding label, and otherwise return $0$.  When $\cA$ requests a labeled sample $(x \sim
    \cD, f(x))$, we use random bits to take a sample from the point set $S$ and
    compute its label as discussed. This also takes $O(\polylog(n))$ time.

    Suppose $\cA$ runs in time $O\left((1/\epsilon)^{\ceil{(d+1)/2} -
    \gamma}\right)$. Then we can solve our $(d+1)$-SUM instance in time
    $O\left((1/\epsilon)^{\ceil{(d+1)/2} - \gamma}
    \polylog(1/\epsilon)\right)$, a contradiction. This completes the proof.
\end{proof}

\begin{remark}[On the use of integrality]
    \label{remark:real-values}
    Aside from our choice of RAM model, the proof of \cref{thm:distribution-free-testing} also
    relies on the integrality of the input to ensure that a NO instance of $(d{+}1)$-SUM maps to a
    point set that cannot be well-approximated by any halfspace. Specifically, any hyperplane that
    does not pass through the exact point $a_i^*$ cannot lie sufficiently close it to still separate
    $\alpha(a_i^*)$ and $\beta(a_i^*)$, preventing “almost-satisfiable” instances.

    If we replicated our construction in the real-valued setting, with an
    appropriately adapted real RAM model, the only caveat is that, under
    arbitrary real-valued inputs, we no longer obtain this robust separation
    property for free; although we could place the points
    $\alpha(a_i^*)$ and $\beta(a_i^*)$ arbitrarily close to $a_i^*$, it seems
    difficult to choose this distance a priori without solving a $k$-SUM-like
    problem.

    However, this issue disappears if we assume that the real-valued
    $k$-SUM problem, in the real RAM model, is still difficult even if the input
    is promised to be integer. Indeed, not only does this seem natural
    enough (the real RAM model is not ``supposed'' to distinguish between
    integers and non-integers), but the original work of \cite{GajentaanO95},
    who systematized the notion of 3SUM-hardness, made precisely the
    assumption that 3SUM is difficult on integer inputs in the real RAM model.
    We therefore view the real-valued analogue of \cref{thm:distribution-free-testing} as equally
    plausible.
\end{remark}

\section{Statistical query lower bounds for the Gaussian distribution}
\label{sec:sq-lb}

In this section, we present \emph{evidence} for a time complexity lower bound
for distance approximation of halfspaces under the standard Gaussian
distribution over $\bR^d$, in the low-dimensional regime where $d$ is a
constant, in the form of a sample complexity lower bound against Statistical
Query (SQ) algorithms.

% Commenting this out as it's moved to the introduction for now.
% The SQ model, introduced by \cite{Kearns98} as a natural
% restriction of the PAC learning model of \cite{Valiant84}, allows access to the
% input $f$ only via approximations to the expectations $\Ex{q(x, f(x))}$ of
% bounded query functions $q$ (see \Cref{section:sq-model} for details). A sample
% complexity lower bound against SQ algorithms serves as evidence of a time
% complexity lower bound against general algorithms in the sense that any faster
% algorithm must go beyond estimating expectations $\Ex{q(x, f(x))}$ from
% independent samples. SQ is attractive because unconditional lower bounds can be
% proved for this model, yet it still captures a wide variety of learning
% algorithms (see \cite{Reyzin20} for a survey).

For a fixed $d \in \bN$, let $\cH$ denote the class of halfspaces over $\bR^d$,
each of which is an $\bR^d \to \pmset$ function, and $\cN(0,I)$ denote the standard
multivariate Gaussian distribution over $\bR^d$.

\begin{definition}[Distribution-specific distance approximation in the SQ model]
    Let $d \in \bN$. A randomized SQ algorithm $\cA$ is a \emph{distance
    approximation} algorithm for halfspaces under $\cN(0, I)$ if, given
    parameters $\epsilon, \delta \in (0,1)$ and access to input function $f :
    \cX \to \pmset$ via a $\Stat$ oracle (see \cref{def:stat-sq}), algorithm
    $\cA$ outputs a number $\widehat{\alpha}$ which, with probability at least
    $1-\delta$, satisfies $\abs{\dist_{\cN(0,I)}(f, \cH) - \widehat{\alpha}} \le
    \epsilon$.
\end{definition}

The main result of this section is the following lower bound, which
qualitatively matches the fine-grained, distribution-free lower bound in
\cref{thm:distribution-free-testing}.

\begin{restatable}{theorem}{thmsqlb}
    \label{thm:sq-lb}
    Let $d \geq 2$ be a constant in $\bN$. Then every randomized SQ distance
    approximation algorithm for halfspaces under $\cN(0, I)$ with success
    probability at least 0.51 requires\footnote{By $\Omega(d)$, we mean $cd$,
    where $c$ is an absolute constant.} $(1/\eps)^{\Omega(d)}$ queries to
    $\Stat(\eps^{\Omega(d)})$.
\end{restatable}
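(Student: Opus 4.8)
The plan is to adapt the SQ lower bound framework of \cite{DKZ20} for agnostically learning halfspaces, but in the low-dimensional regime $d = O(1)$ and with $\epsilon$ rather than $d$ as the scaling parameter. The high-level structure is the standard one for SQ lower bounds: construct a large family $\cF$ of Boolean functions $f : \bR^d \to \pmset$, all at roughly the same distance from $\cH$ (so that a distance approximation algorithm must output a value that distinguishes them from a ``null'' instance), such that the functions in $\cF$ are pairwise nearly uncorrelated under $\cN(0,I)$. Then a dimension argument (via the SQ dimension, or directly via the bound relating the number of low-correlation functions to the number of queries needed, as in \cite{Feldman17} or \cite{DKZ20}) shows that any SQ algorithm using queries of tolerance $\tau$ needs $\Omega(|\cF| \cdot \tau^2)$ queries; choosing $|\cF| = (1/\epsilon)^{\Omega(d)}$ and $\tau = \epsilon^{\Omega(d)}$ yields the theorem.

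First I would set up the hard instances. Following \cite{DKZ20}, the natural construction uses a collection of halfspaces whose normal vectors $v_1, \dots, v_M \in S^{d-1}$ form a packing: any two of them are at angular distance $\ge \theta$ for a suitable $\theta = \theta(\epsilon, d)$. For each $v_j$ we build a function $f_j$ that is mildly correlated with the halfspace $\mathrm{sign}(\langle v_j, x\rangle - \beta)$ (for an appropriate threshold $\beta$ making the distance to $\cH$ bounded away from the trivial value), but is otherwise ``pseudorandom.'' Specifically, as the proof overview indicates, the NO instance is an adversarially chosen Boolean function uncorrelated with the queries of a given SQ algorithm; the YES instances are the $f_j$'s. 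The key correlation estimate to establish is that for $i \ne j$, $\bigl|\Ex{f_i(x) f_j(x)}\bigr|$ is exponentially small in the relevant parameter, which reduces (after peeling off the one-dimensional ``signal'' part carried by $v_i, v_j$) to a Gaussian correlation bound between halfspaces with nearly orthogonal normals — this is where the angular separation $\theta$ of the packing enters. The first main component, then, is a \textbf{packing number lower bound for $S^{d-1}$ in the precise regime} $\theta \to 0$ with $d$ fixed: one needs $M = \theta^{-\Omega(d)}$ points on $S^{d-1}$ pairwise $\theta$-separated, with quantitative control of the implied constant. I would obtain this by analyzing the limiting distribution of pairwise angles in a random packing, as studied in \cite{CFJ13}, specialized to our asymptotic regime; a volumetric/greedy argument gives $M \gtrsim (c/\theta)^{d-1}$, and the content is making the asymptotics match what the correlation bounds demand.

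The second main component is the \textbf{adversarial ``pseudorandom'' NO instance} on $\bR^d$: given an SQ algorithm making queries $q_1, q_2, \dots$ to $\Stat(\tau)$, I would construct a Boolean function $g$ that is $\tau$-uncorrelated with each $q_\ell$ relative to the algorithm's view, so that the oracle can answer every query of the algorithm consistently with $g$ while $g$ is far from every halfspace — forcing the algorithm to output a ``large distance'' value. Combined with the near-orthogonality of the $f_j$'s, a standard SQ-dimension / counting argument (the oracle can also answer consistently with all but a $\tau$-fraction of the $f_j$'s on any fixed query, by a variance/Markov argument over the near-uncorrelated family) shows that distinguishing the YES family from $g$ requires $\Omega(M\tau^2)$ queries. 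Plugging $M = \theta^{-\Omega(d)}$ and choosing $\theta$ and the gadget parameters so that $\tau = \epsilon^{\Omega(d)}$ and $\theta^{-1} = (1/\epsilon)^{\Theta(1)}$ gives the claimed $(1/\epsilon)^{\Omega(d)}$ query lower bound against $\Stat(\epsilon^{\Omega(d)})$.

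I expect the \textbf{main obstacle} to be the low-dimensional packing bound with the correct constants: \cite{DKZ20} works in the high-dimensional regime $d \gg \mathrm{poly}(1/\epsilon)$ where packings on the sphere are easy and plentiful, whereas here $d$ is constant and the packing must be dense at scale $\theta = \mathrm{poly}(\epsilon)$, so the number of hard instances grows only polynomially in $1/\epsilon$ with exponent linear in $d$ — extracting exactly this rate, and verifying that the pairwise-angle concentration from \cite{CFJ13} survives in our regime, is the delicate part. A secondary technical point is ensuring the vertical-gadget/threshold parameters keep all $f_j$ at essentially the \emph{same} distance to $\cH$ (so the distance-approximation output genuinely certifies membership in the family), and that this common distance is bounded away from the NO-instance distance by more than $2\epsilon$; this requires a careful Gaussian anticoncentration estimate near the decision boundary but is otherwise routine.
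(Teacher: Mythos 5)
Your proposal matches the paper's approach in all of its main components: the low-dimensional packing bound obtained from the limit distribution of pairwise angles in random sphere packings (\cite{CFJ13}), the \cite{DKZ20}-style construction mapping nearly orthogonal unit vectors to Boolean functions correlated with halfspaces, the adversarially constructed ``pseudorandom'' NO instance, and the SQ-dimension counting argument (yielding the $\Omega(s\tau^2)$ query bound) following the simplified argument of \cite{Szorenyi09}.

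There is, however, one genuine gap. Your construction of the NO instance $g$ is against ``an SQ algorithm making queries $q_1, q_2, \dots$,'' i.e.\ a fixed sequence of queries --- which only makes sense for a \emph{deterministic} algorithm. The theorem claims a lower bound for \emph{randomized} SQ algorithms, whose query sequence depends on the random seed, so you cannot fix one adversarial $g$ uncorrelated with all queries of all realizations in advance (a union bound over seeds ranges over an unbounded query set and breaks the argument). The paper closes this by a nonuniform derandomization: boost the success probability to $1 - \eps^{\Omega(d)}$ by repetition and median, then union-bound over the \emph{finitely many} hard inputs $\cF \cup \{f_0\}$ to extract a single good random seed $r$, which turns the randomized algorithm into a deterministic algorithm with bounded advice --- and \cref{lemma:f0} is stated for deterministic algorithms with advice precisely to accommodate this step (a similar observation appears in \cite{BshoutyF02} for weak learning). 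Your proposal would need this step, or a substitute for it, to cover randomized algorithms. A secondary technical point you gesture at but do not resolve: to make $g$ a genuine measurable Boolean function rather than a random labeling, the paper forces the deterministic-with-advice algorithm's queries to lie in a \emph{finite} set by working with a $\tau$-rounding oracle, then partitions $\bR^d$ into cells defined by the rounded query values together with a finite $\eps$-cover of $\cH$ (via VC-dimension/covering-number bounds), and balances $g$ within each cell; some such discretization is needed to realize your intuition that $g$ can be made ``uncorrelated with the queries and far from every halfspace'' simultaneously.
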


We prove \cref{thm:sq-lb} by extending an argument of \cite{DKZ20}, who showed
an SQ lower bound of $d^{\poly(1/\epsilon)}$ for agnostic learning of halfspaces
under the Gaussian distribution in the high-dimensional regime where $d \gg
\poly(1/\epsilon)$. Our argument combines a packing number result for the unit
sphere in the low-dimensional regime, the construction of a ``pseudorandom''
function that serves as the \textsc{No} instance for the distance approximation
task, and a nonuniform derandomization argument.

The rest of this section is organized as follows. \cref{section:sq-model}
defines the SQ model and introduces the key notion of SQ dimension.
\cref{sec:prior-work-sq} extracts the result we use from \cite{DKZ20}.
\cref{sec:packing} establishes the packing result. \cref{sec:pseudorandom}
constructs the pseudorandom function for SQ algorithms. \cref{sq:sq-det}
combines these ingredients to give a set of functions with high SQ dimension,
which implies a lower bound against deterministic SQ algorithms. Finally,
\cref{sq:sq-rand} proves \cref{thm:sq-lb}.

\subsection{SQ algorithms and SQ dimension}
\label{section:sq-model}

We start by formally defining the SQ model.

\begin{definition}[$\Stat$ oracle, SQ algorithm \cite{Kearns98}]
    \label{def:stat-sq}
    Let $\cX$ be a domain, $f : \cX
    \to [-1,1]$ be a function, $\cD$ be a probability distribution over $\cX$,  and $\tau > 0$ be a \emph{tolerance parameter}.
    Given a \emph{statistical query} %, \ie a function 
    $q : \cX \times [-1,1] \to
    [-1,1]$, the oracle $\Stat(\tau)$ outputs $v \in \bR$ satisfying $\Big|\Exu{x \sim
    \cD}{q(x, f(x))} - v\Big| \le \tau$.
    A \emph{Statistical Query (SQ) algorithm}  accesses its input $f$ only via
calls to a $\Stat$ oracle.
\end{definition}

Given a domain $\cX$, distribution $\cD$ over $\cX$, and
functions\footnote{For simplicity, we omit from our statements tedious remarks
about measurability of functions.} $f, g : \cX \to \bR$, we call $\Exu{x \sim
\cD}{f(x) g(x)}$ the \emph{correlation} between $f$ and $g$. In our setting,
$\cD$ is  the standard multivariate Gaussian.

\begin{definition}[SQ dimension \cite{BlumFJKMR94}]
    Let $\cX$ be a domain, $\cF$ be a
    class of $\cX \to [-1,1]$ functions, and $\cD$ be a probability distribution over $\cX$. The \emph{SQ
    %Statistical Query (SQ) 
    dimension of $\cF$ under
    $\cD$}, denoted $\SQDim(\cF, \cD)$, is the largest $s\in\bN$ such that there exist distinct functions
    $f_1, \dotsc, f_s \in \cF$ satisfying $\Big|\Exu{x \in \cD}{f_i(x) f_j(x)}\Big| \le \frac 1 s$ for all 
    distinct $i,j\in[s]$.
\end{definition}

Blum et al.~\cite{BlumFJKMR94} showed that a lower bound on $\SQDim(\cF, \cD)$
implies a lower bound for SQ algorithms that  weakly learn $\cF$.
Concretely, we have the following result, whose simplified proof by
\cite{Szorenyi09} we  adapt to obtain a lower bound for distance
approximation.

\begin{restatable}[SQ dimension bound for weak learning]
{theorem}{thmsqweaklearning}
    \label{thm:sq-weak-learning}
    Let $\cX$ be a domain,  $\cF$ be a class of $\cX \to [-1,1]$ functions, and
    $\cD$ be a probability distribution over $\cX$ with $\SQDim(\cF, \cD) = s$.
    Then every (deterministic) SQ algorithm that, on input function $f \in \cF$,
    outputs a function $g : \cX \to [-1,1]$ whose correlation with $f$ is
    $\Exu{x \sim \cD}{f(x) g(x)} \ge s^{-1/3}$ requires at least
    $\frac{s^{1/3}}2 - 1$ queries to $\Stat(s^{-1/3})$.
\end{restatable}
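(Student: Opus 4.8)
The plan is to run the classical statistical-query adversary argument in the streamlined form of \cite{Szorenyi09}. Fix functions $f_1,\dots,f_s \in \cF$ witnessing $\SQDim(\cF,\cD)=s$, so that, writing $\langle u,v\rangle := \Exu{x\sim\cD}{u(x)v(x)}$, we have $|\langle f_i,f_j\rangle|\le 1/s$ for all distinct $i,j\in[s]$, and $\|f_i\|^2 = \langle f_i,f_i\rangle \le 1$ since each $f_i$ is $[-1,1]$-valued.

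The combinatorial core is a near-orthogonality lemma: for any $h:\cX\to[-1,1]$ and any threshold $\theta$ with $\theta^2 > 1/s$, the set $B_\theta(h) := \{\, i\in[s] : |\langle f_i,h\rangle|\ge\theta \,\}$ has size at most $1/(\theta^2-1/s)$. To prove it, put $t=|B_\theta(h)|$, let $\sigma_i=\sign\langle f_i,h\rangle$, and set $w=\sum_{i\in B_\theta(h)}\sigma_i f_i$. Then $\langle w,h\rangle = \sum_{i\in B_\theta(h)}|\langle f_i,h\rangle| \ge t\theta$, so $\|w\| \ge \langle w,h\rangle/\|h\| \ge t\theta$; on the other hand, expanding the square and using the pairwise bound gives $\|w\|^2 \le t + t(t-1)/s \le t + t^2/s$. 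Combining yields $t(\theta^2-1/s)\le 1$. Taking $\theta=s^{-1/3}$ gives the bound $t_{\max} := 1/(s^{-2/3}-s^{-1}) = s/(s^{1/3}-1)$, which is $\Theta(s^{2/3})$.

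Now the adversary maintains a candidate set $S\subseteq[s]$, initialized to $[s]$. When the algorithm issues a query $(q,\tau)$ with $\tau=s^{-1/3}$, split it into a label-independent and a correlational part: with $g_q(x)=\tfrac12(q(x,1)+q(x,-1))$ and $\psi_q(x)=\tfrac12(q(x,1)-q(x,-1))$, both $[-1,1]$-valued, one has $q(x,y)=g_q(x)+y\,\psi_q(x)$ for $y\in\{\pm1\}$. (This identity is exact whenever the hidden function is $\pm1$-valued, which is the case in our application; in full generality one first reduces arbitrary SQ queries to correlational ones, losing only a constant factor in tolerance, which we absorb into constants.) Thus the true answer for target $f_i$ is $A_i = \Exu{x\sim\cD}{g_q(x)} + \langle f_i,\psi_q\rangle$, and the adversary replies with $v := \Exu{x\sim\cD}{g_q(x)}$, which is a legal $\Stat(\tau)$ answer for precisely those $i$ with $|\langle f_i,\psi_q\rangle|\le\tau$; accordingly it deletes $B_\tau(\psi_q)$ from $S$, removing at most $t_{\max}$ indices by the lemma. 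After $Q$ queries $|S|\ge s - Q\,t_{\max}$, and every surviving index remains consistent with the entire transcript.

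Finally, since for every $i\in S$ the adversary's answers form a legal run of $\Stat(\tau)$ on target $f_i$, the algorithm (being deterministic) produces the same transcript and hence the same output $g:\cX\to[-1,1]$ on input $f_i$, so the success guarantee gives $\langle f_i,g\rangle \ge s^{-1/3}$ for every $i\in S$. Applying the lemma with $h=g$ and $\theta=s^{-1/3}$ forces $|S|\le t_{\max}$, hence $s - Q\,t_{\max}\le t_{\max}$ and $Q \ge s/t_{\max} - 1 = (s^{1/3}-1)-1 = s^{1/3}-2 \ge \tfrac{s^{1/3}}{2}-1$ for $s\ge 8$ (the claimed bound being vacuous for $s<8$). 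The routine part is the lemma and the adversary bookkeeping; the point requiring care is calibrating the exponents so that the single tolerance $s^{-1/3}$ makes each query cost only $\approx s^{2/3}$ candidates \emph{and} forces $|S|\lesssim s^{2/3}$ at the end, which is exactly what lets the two bounds compose to the advertised $\approx s^{1/3}$ — and, secondarily, pinning down the general-query-to-correlational reduction (or simply noting that the witnesses relevant to the application are $\pm1$-valued, which makes it unnecessary).
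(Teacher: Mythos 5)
Your proof is correct and follows the same Sz\"or\'enyi-style adversary argument that the paper credits and adapts: your near-orthogonality lemma (yielding $|B_\theta(h)| \le 1/(\theta^2 - 1/s)$, hence $t_{\max} = s/(s^{1/3}-1)$ at $\theta = s^{-1/3}$) is exactly the $|B| \le s/(s^{1/3}-1)$ bound the paper re-derives inside the proof of \cref{thm:sq-lb-det}, and your final counting step ($|S| \ge s - Q\,t_{\max}$ combined with $|S| \le t_{\max}$ via the output-consistency argument) is the standard conclusion. The only cosmetic difference is that you explicitly decompose a general statistical query into label-independent and correlational parts, while the paper invokes the Bshouty--Feldman observation to restrict attention to correlational queries from the outset; the two routes are equivalent for the $\pm 1$-valued targets to which the theorem is applied, and you correctly flag that the exact decomposition identity relies on $\pm 1$-valued labels.
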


\paragraph*{Correlation queries.}
As observed by \cite{BshoutyF02}, in the distribution-specific setting, it
suffices to consider statistical queries of the form $q(x, f(x)) = g(x) f(x)$
with $g : \cX \to [-1,1]$. Given $g$, the oracle returns a value $v$ satisfying
$\big|\Exu{x \sim \cD}{g(x) f(x)} - v\big| \le \tau$. Any general query can be
simulated with two such queries, so we restrict attention to this form.

\subsection{Prior work: high SQ dimension from packing numbers on the sphere}
\label{sec:prior-work-sq}

Diakonikolas, Kane and Zarifis \cite{DKZ20} 
proved an SQ lower bound of $d^{\poly(1/\epsilon)}$ for weakly learning halfspaces over $\bR^d$ under the Gaussian distribution in high dimensions. Their proof constructs a set %$\cF$ 
of Boolean functions with large SQ dimension, each correlated with a halfspace. This is done by selecting a large set of nearly orthogonal unit vectors in $\bR^d$ and mapping each vector $u$ to a Boolean function
by projecting along $u$ a one-dimensional $k$-alternating function
satisfying a certain moment-matching condition.

Our proof builds upon the construction of \cite{DKZ20}, but we use a different
packing bound suited to the low-dimensional setting. We start by extracting the
following result from \cite{DKZ20}.

\begin{theorem}[Implicit in \cite{DKZ20}]
    \label{thm:dkz20-implicit}
    Let $d, n, k \in \bN$ and $\rho \in (0,1)$. Suppose there exist vectors
    $u_1, \dotsc, u_n \in \bS^{d-1}$ satisfying $\abs*{\inp{u_i}{u_j}} \le \rho$ for 
    all distinct $i,j\in[n]$.
    Then there exist functions $f_1, \dotsc, f_n : \bR^d
    \to \pmset$ satisfying the following conditions.
    \begin{enumerate}
        \item For each $i \in [n]$, there exists a halfspace $h_i : \bR^d \to
        \pmset$ such that $\Exu{x \sim \cN(0,I)}{f_i(x) h_i(x)} \ge \frac{1}{2k}$.
        \item %For all $1 \le i < j \le n$, it holds that 
        $\Big|\Exu{x \sim
        \cN(0,I)}{f_i(x) f_j(x)}\Big| \le 2\rho^{k+1}$ for all distinct $i,j\in[n]$.
    \end{enumerate}
\end{theorem}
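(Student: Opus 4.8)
The plan is to follow the template-and-lift strategy of \cite{DKZ20}, whose only genuinely one-dimensional ingredient I will quote: for every $k \in \bN$ there is a $\pmset$-valued \emph{$k$-alternating} function $g = g_k : \bR \to \pmset$ (piecewise constant with $O(k)$ sign changes) satisfying a moment-matching condition under the standard Gaussian, namely that the Hermite coefficients $\widehat{g}(0), \widehat{g}(1), \dots, \widehat{g}(k)$ all vanish, while $g$ nonetheless retains nontrivial correlation with a one-dimensional threshold: there is $\theta \in \bR$ with $\Exu{z \sim \cN(0,1)}{g(z)\,\sign(z - \theta)} \ge \frac{1}{2k}$. (In \cite{DKZ20} one may take $\theta = 0$ with $g$ odd, so that all even coefficients vanish for free.) Everything downstream is coordinate-free linear algebra over the Gaussian measure and does not interact with the construction of $g$.

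Given $g$, I would set $f_i : \bR^d \to \pmset$ by $f_i(x) \define g(\inp{u_i}{x})$ for each $i \in [n]$. Since $\lVert u_i \rVert = 1$, the scalar $\inp{u_i}{x}$ is distributed as $\cN(0,1)$ when $x \sim \cN(0, I)$, so each $f_i$ is well defined (and the $f_i$ are distinct because $g$ is nonconstant and the $u_i$ are pairwise non-parallel). For \textbf{Item 1}, take $h_i(x) \define \sign(\inp{u_i}{x} - \theta)$, which is a halfspace over $\bR^d$; pushing the Gaussian forward along $u_i$ gives
\[
    \Exu{x \sim \cN(0,I)}{f_i(x) h_i(x)}
    = \Exu{z \sim \cN(0,1)}{g(z)\,\sign(z - \theta)}
    \ge \frac{1}{2k} \,.
\]

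For \textbf{Item 2}, fix distinct $i, j$ and write $\rho_{ij} \define \inp{u_i}{u_j}$, so that $\abs{\rho_{ij}} \le \rho < 1$. For $x \sim \cN(0,I)$, the pair $(\inp{u_i}{x}, \inp{u_j}{x})$ is a centered bivariate Gaussian with unit marginals and correlation $\rho_{ij}$; consequently the Hermite polynomials $\{H_m\}$ satisfy $\Exu{x \sim \cN(0,I)}{H_m(\inp{u_i}{x}) H_{m'}(\inp{u_j}{x})} = \rho_{ij}^{\,m}\,\ind{m = m'}$ (Mehler's formula, equivalently the standard fact that $H_{m'}$ is an eigenfunction of the Gaussian noise operator with eigenvalue $\rho_{ij}^{\,m'}$). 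Expanding $g$ in the Hermite basis and using the vanishing of its coefficients through degree $k$, we get $g = \sum_{m \ge k+1} \widehat{g}(m) H_m$ with $\sum_{m \ge k+1} \widehat{g}(m)^2 = \lVert g \rVert_2^2 = 1$, hence
\[
    \Exu{x \sim \cN(0,I)}{f_i(x) f_j(x)}
    = \sum_{m \ge k+1} \widehat{g}(m)^2\, \rho_{ij}^{\,m}\,,
    \qquad\text{so}\qquad
    \Bigl\lvert \Exu{x \sim \cN(0,I)}{f_i(x) f_j(x)} \Bigr\rvert
    \le \abs{\rho_{ij}}^{\,k+1} \sum_{m \ge k+1} \widehat{g}(m)^2
    \le \rho^{\,k+1}
    \le 2\rho^{\,k+1} \,,
\]
which is the claimed bound (the slack factor of $2$ simply absorbs any constant loss in the exact moment-matching degree of the template from \cite{DKZ20}).

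The main obstacle — and the one step I would not reprove but cite — is the univariate template lemma: producing a genuinely $\pmset$-valued $k$-alternating function whose low-order Hermite mass (degrees $0$ through $k$) vanishes yet which still has $\Omega(1/k)$ correlation with a one-dimensional halfspace. This is the technical core of \cite{DKZ20} (it is where both the $1/(2k)$ correlation and the bound on the number of sign changes originate, via a Chebyshev-system / moment-matching argument). By contrast, the lift $x \mapsto g(\inp{u_i}{x})$, the choice of $h_i$ along the direction $u_i$, and the Hermite/Mehler evaluation of the pairwise correlations are all routine once the near-orthogonality hypothesis $\abs{\inp{u_i}{u_j}} \le \rho$ is available.
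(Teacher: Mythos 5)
Your proposal is correct and matches the approach the paper gestures at: the paper does not reprove this theorem but states it as ``implicit in \cite{DKZ20}'' and describes the construction in one sentence (nearly orthogonal directions, each mapped to a Boolean function by composing a moment-matching univariate $k$-alternating function with the linear projection along that direction). Your lift $f_i(x) = g(\inp{u_i}{x})$, the choice of $h_i$ as a threshold along $u_i$, and the Mehler/Hermite evaluation of the pairwise correlation are exactly the routine layer the paper leaves implicit, and your identification of the univariate template lemma as the one ingredient that genuinely requires \cite{DKZ20} is the right call. One small caution about your aside that ``the slack factor of $2$ simply absorbs any constant loss in the exact moment-matching degree'': a constant multiplicative slack cannot compensate for an off-by-one in the exponent of $\rho$ (since $\rho^k \le 2\rho^{k+1}$ fails for small $\rho$); the reconciliation is a relabeling of $k$, not the factor of $2$, but this does not affect the substance of your argument.
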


\subsection{Packing slightly uncorrelated vectors on the low-dimensional sphere}
\label{sec:packing}

We begin with a result from~\cite{CFJ13} on the asymptotic distribution of the
minimum and maximum angles among $n$ vectors drawn independently and uniformly
from the unit sphere $\bS^{d-1}$, for constant dimension $d$. In contrast,
\cite{DKZ20} (via a lemma from \cite{DKS17}) relies on a different result
from~\cite{CFJ13} applicable to the high-dimensional setting.

\begin{theorem}[Theorem 2 of \cite{CFJ13}]
    \label{thm:theta-min-max}
    Let $d \ge 2$ be an integer. Let $u_1 \dotsc, u_n$ be sampled independently and
    uniformly at random from $\bS^{d-1}$, and let $\theta_{\min}, \theta_{\max}$ denote
    the minimum and maximum angles, respectively, between any pairs of vectors $u_i, u_j$
    for distinct $i,j\in[n]$ 
    Then as $n \to \infty$, the random variables $n^{2/(d-1)}
    \theta_{\min}$ and $n^{2/(d-1)} (\pi - \theta_{\max})$ converge weakly to the
    distribution with cumulative distribution function (CDF)
    \[
        F(x) = \begin{cases}
            1 - e^{-K_d x^{d-1}} & \text{if } x \ge 0 \,, \\
            0                    & \text{if } x < 0 \,,
        \end{cases}
    \]
    where $K_d > 0$ is a constant that depends only on $d$.
\end{theorem}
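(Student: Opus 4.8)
The plan is to treat this as a classical Poisson-approximation statement about the $\binom n2$ pairwise angles, in the spirit that rare, weakly dependent events accumulate into a Poisson count. First I would record the small-angle asymptotics of a single angle: if $U,V$ are independent and uniform on $\bS^{d-1}$, then by rotational invariance the angle $\Theta\in[0,\pi]$ between them has density proportional to $\sin^{d-2}\theta$, so there is a constant $\kappa_d>0$ with $\Pr{\Theta\le t}=\kappa_d\,t^{d-1}(1+o(1))$ as $t\to 0^+$, and, using $\sin^{d-2}(\pi-\theta)=\sin^{d-2}\theta$, also $\Pr{\Theta\ge\pi-t}=\kappa_d\,t^{d-1}(1+o(1))$.

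Next I would fix $x\ge0$, set $t_n\define x\,n^{-2/(d-1)}$, and let $W_n$ count the unordered pairs $\{i,j\}$ with $\Theta_{ij}\le t_n$, where $\Theta_{ij}$ is the angle between $u_i$ and $u_j$; note $t_n^{d-1}=x^{d-1}n^{-2}$, so $\Ex{W_n}=\binom n2\Pr{\Theta\le t_n}\to\tfrac12\kappa_d\,x^{d-1}=:K_d\,x^{d-1}$. The core step is to show $W_n\to\mathrm{Poisson}(K_d\,x^{d-1})$ in distribution, which I would establish by the Chen--Stein method on the dependency graph whose vertices are the pairs and whose edges join two pairs sharing an index. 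Since the indicator of $\{i,j\}$ depends only on $u_i,u_j$ and hence is independent of all indicators of pairs disjoint from $\{i,j\}$, the ``$b_3$'' term vanishes; the ``$b_1$'' term is $O\!\big(n^3\cdot(t_n^{d-1})^2\big)=O(n^{-1})$ (there are $O(n^3)$ ordered index-sharing pairs of pairs); and the ``$b_2$'' term is $O\!\big(n^3\cdot t_n^{2(d-1)}\big)=O(n^{-1})$, using that conditioning on $u_i$ makes $\{\Theta_{ij}\le t_n\}$ and $\{\Theta_{ik}\le t_n\}$ conditionally i.i.d.\ with common probability $O(t_n^{d-1})$ by rotational invariance. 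Thus the total-variation distance from $W_n$ to $\mathrm{Poisson}(\Ex{W_n})$ is $O(n^{-1})\to 0$, and since $\Ex{W_n}\to K_d\,x^{d-1}$,
\[
    \Pr{n^{2/(d-1)}\theta_{\min}>x}=\Pr{W_n=0}\;\longrightarrow\;e^{-K_d\,x^{d-1}},
\]
i.e.\ $\Pr{n^{2/(d-1)}\theta_{\min}\le x}\to 1-e^{-K_d\,x^{d-1}}=F(x)$; for $x\le 0$ both sides equal $0$ (as $\theta_{\min}>0$ almost surely), and since $F$ is continuous this is weak convergence. The claim for $\theta_{\max}$ follows verbatim, counting pairs with $\Theta_{ij}\ge\pi-t_n$ and invoking the symmetry of $\sin^{d-2}$ about $\pi/2$, which yields the same constant $K_d$; asymptotic independence of the two extremes is not needed for the stated marginals but would follow from a bivariate Chen--Stein coupling.

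The step I expect to be the main obstacle is making the error control uniform: the $o(1)$ in the single-pair tail $\Pr{\Theta\le t}\sim\kappa_d t^{d-1}$ and the conditional-independence-given-$u_i$ estimate must hold uniformly along $t_n\to 0$ at the prescribed rate, so some care with constants is required, while the rest is routine counting. If the Chen--Stein bookkeeping becomes unwieldy, a fallback is the method of factorial moments: expand $\Ex{(W_n)_r}$ over ordered $r$-tuples of pairs, observe that tuples of pairwise disjoint pairs contribute $(1+o(1))(K_d\,x^{d-1})^r$ while every overlap forces an extra factor $O(n^{-2})$ unmatched by a factor $n$, and conclude that all factorial moments converge to those of $\mathrm{Poisson}(K_d\,x^{d-1})$, which again identifies the limit law.
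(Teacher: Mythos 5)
This statement is an external citation (Theorem~2 of \cite{CFJ13}); the present paper does not prove it, so there is no ``paper's own proof'' to compare against. Your reconstruction via Chen--Stein Poisson approximation is correct and is essentially the argument used in \cite{CFJ13}: the density-proportional-to-$\sin^{d-2}\theta$ small-angle asymptotics giving $\Pr{\Theta\le t}=\kappa_d t^{d-1}(1+O(t^2))$, the count $W_n$ of near-coincident pairs with $\Ex{W_n}\to K_dx^{d-1}$, the dependency graph on pairs sharing an index (so $b_3=0$, and $b_1,b_2=O(n^3 t_n^{2(d-1)})=O(1/n)$), and the conclusion $\Pr{W_n=0}\to e^{-K_dx^{d-1}}$ via $\{\theta_{\min}>t_n\}=\{W_n=0\}$, with the $\theta_{\max}$ case following from the symmetry of $\sin^{d-2}$ about $\pi/2$. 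Your worry about uniformity is not a real obstacle: the $o(1)$ in the one-pair tail is an explicit $O(t^2)$ coming from the Taylor expansion of $\sin^{d-2}\theta$, and the conditional-independence-given-$u_i$ estimate holds exactly by rotational invariance, not just asymptotically.
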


\Cref{thm:theta-min-max} establishes weak convergence of distributions, \ie
pointwise convergence of the CDFs $F_n$ to $F$. The following standard fact
shows that, since $F$ is continuous, convergence is  uniform: for each $\delta >
0$, we can choose large enough $n$ so that $\abs*{F_n(x) - F(x)} \le \delta$ for
all $x$.

\begin{fact}
    \label{fact:uniform-convergence-cdf}
    Suppose $(X_n)_{n \in \bN}$ is a sequence of real-valued random variables
    converging weakly to $X$. Let $F_n$ and $F$ be the CDFs of $X_n$ and  $X$,
    respectively. If $F$ is continuous, then $F_n \to F$ uniformly.
\end{fact}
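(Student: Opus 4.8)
The plan is to prove this classical fact (a form of P\'olya's theorem) by the standard ``grid'' argument: approximate $F$ by its values on a finite set of points, invoke pointwise convergence there, and interpolate using monotonicity. First I would note that, since $F$ is continuous, every real number is a continuity point of $F$, so the weak convergence $X_n \to X$ upgrades from convergence of $F_n$ at continuity points of $F$ to the statement $F_n(x) \to F(x)$ for every $x \in \bR$.

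Next, fix $\delta > 0$. Using that $F$ is continuous and nondecreasing with $\lim_{x \to -\infty} F(x) = 0$ and $\lim_{x \to +\infty} F(x) = 1$, I would apply the intermediate value theorem to select real numbers $x_1 < x_2 < \dots < x_m$ with $F(x_1) < \delta$, $F(x_m) > 1 - \delta$, and $F(x_{j+1}) - F(x_j) < \delta$ for all $j$ (e.g.\ take $x_j$ with $F(x_j)$ close to $j\delta$). By the pointwise convergence just established, there is an $N$ such that for all $n \ge N$ and all $j$ we have $\abs{F_n(x_j) - F(x_j)} < \delta$.

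Now I would bound $\abs{F_n(x) - F(x)}$ for an arbitrary $x \in \bR$ and $n \ge N$ by cases. If $x_j \le x \le x_{j+1}$ for some $j$, then monotonicity of $F_n$ and $F$ gives $F_n(x) \le F_n(x_{j+1}) \le F(x_{j+1}) + \delta \le F(x_j) + 2\delta \le F(x) + 2\delta$, and symmetrically $F_n(x) \ge F_n(x_j) \ge F(x_j) - \delta \ge F(x_{j+1}) - 2\delta \ge F(x) - 2\delta$. If $x < x_1$, then $0 \le F_n(x) \le F_n(x_1) \le F(x_1) + \delta < 2\delta$ while $0 \le F(x) \le F(x_1) < \delta$, so $\abs{F_n(x) - F(x)} < 2\delta$; the case $x > x_m$ is handled symmetrically using $1 - F(x_m) < \delta$. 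Hence $\sup_{x \in \bR} \abs{F_n(x) - F(x)} \le 2\delta$ for all $n \ge N$, and since $\delta > 0$ was arbitrary this is exactly uniform convergence $F_n \to F$.

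There is no genuine obstacle here --- this is a textbook argument --- and the only point that requires a little care is the treatment of the two tails $x < x_1$ and $x > x_m$, which is precisely why the grid is chosen to also pin down $F$ near $0$ and near $1$.
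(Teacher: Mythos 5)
Your proof is correct and is the standard grid/interpolation argument for P\'olya's theorem. The paper itself states this as a ``standard fact'' (\cref{fact:uniform-convergence-cdf}) without supplying a proof, so there is no in-paper argument to compare against; your write-up simply fills in the textbook proof the authors implicitly rely on, including the small but necessary care in choosing the endpoints $x_1$ and $x_m$ to control the two tails.
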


We now prove our packing lemma.

\begin{lemma}[Packing slightly uncorrelated vectors on the sphere]
    \label{lemma:packing}
    Let $d \ge 2$ be a fixed integer. Then for all constant $a > 0$ and $b \in
    \left(0, \frac{d-1}{4d}\right)$, and all sufficiently small $\epsilon > 0$,
    there exists a set $S$ of at least $(1/\epsilon)^{b d}$ vectors on
    $\bS^{d-1}$ such that, for all distinct $u, v \in S$, it holds that
    $\abs*{\inp{u}{v}} \le \epsilon^{a \epsilon}$.
\end{lemma}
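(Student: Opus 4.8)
The plan is to use the probabilistic method on top of \Cref{thm:theta-min-max}. Since $\epsilon^{a\epsilon} = \exp(a\epsilon\ln\epsilon) \to 1$ as $\epsilon \to 0$, the requirement $\abs*{\inp u v}\le \epsilon^{a\epsilon}$ is a \emph{weak} anti-concentration condition: it only asks that all pairwise angles of the vectors in $S$ lie in the interval $[\psi_\epsilon,\,\pi-\psi_\epsilon]$, where $\psi_\epsilon \define \arccos(\epsilon^{a\epsilon})$. First I would record the size of $\psi_\epsilon$: from $1-e^{-s}\le s$ with $s = a\epsilon\ln(1/\epsilon)$ we get $1-\epsilon^{a\epsilon}\le a\epsilon\ln(1/\epsilon)$, and the elementary bound $\arccos(1-t)\le \pi\sqrt t$ (valid for all $t\in[0,1]$) gives $\psi_\epsilon \le \pi\sqrt{a\,\epsilon\ln(1/\epsilon)}$ for all small $\epsilon$; in particular $\psi_\epsilon \to 0$, so $\psi_\epsilon < \pi/2$ eventually.

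Next I would set $n\define \ceil{(1/\epsilon)^{bd}}$, draw $u_1,\dots,u_n$ independently and uniformly from $\bS^{d-1}$, and let $\theta_{\min},\theta_{\max}$ be the minimum and maximum pairwise angles as in \Cref{thm:theta-min-max}. The goal is to show that with probability at least $\tfrac12$ we have \emph{both} $\theta_{\min}\ge \psi_\epsilon$ and $\theta_{\max}\le \pi-\psi_\epsilon$. Any realization achieving this yields the desired $S=\{u_1,\dots,u_n\}$: the $u_i$ are pairwise distinct since $\theta_{\min}>0$; every pairwise angle $\theta_{ij}$ lies in $[\psi_\epsilon,\pi-\psi_\epsilon]$, on which cosine ranges within $[-\cos\psi_\epsilon,\cos\psi_\epsilon]$, so $\abs*{\inp{u_i}{u_j}} = \abs*{\cos\theta_{ij}}\le \cos\psi_\epsilon = \epsilon^{a\epsilon}$; and $\abs*{S} = n \ge (1/\epsilon)^{bd}$.

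The heart of the argument is to show that the scaled threshold $x_\epsilon \define n^{2/(d-1)}\psi_\epsilon$ tends to $0$, and this is exactly where the hypothesis $b<\frac{d-1}{4d}$ is used. Since $n\le 2(1/\epsilon)^{bd}$ we have $n^{2/(d-1)} = O\big((1/\epsilon)^{2bd/(d-1)}\big)$, so combining with the bound on $\psi_\epsilon$,
\[
    x_\epsilon = O\!\left( (1/\epsilon)^{2bd/(d-1)}\sqrt{\epsilon\ln(1/\epsilon)} \right)
    = O\!\left( (1/\epsilon)^{\frac{2bd}{d-1}-\frac12}\sqrt{\ln(1/\epsilon)} \right).
\]
Writing $\eta \define \tfrac{d-1}{4}-bd > 0$, the exponent equals $-\tfrac{2\eta}{d-1}<0$, so $x_\epsilon\to 0$ and hence $F(x_\epsilon) = 1-e^{-K_d x_\epsilon^{d-1}}\to 0$. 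By \Cref{thm:theta-min-max} and \Cref{fact:uniform-convergence-cdf}, the CDF of $n^{2/(d-1)}\theta_{\min}$ converges \emph{uniformly} to $F$, and since $bd>0$ we have $n\to\infty$ as $\epsilon\to 0$; therefore, for all sufficiently small $\epsilon$,
\[
    \Pr{\theta_{\min} < \psi_\epsilon} = \Pr{n^{2/(d-1)}\theta_{\min} < x_\epsilon} \le F(x_\epsilon) + \tfrac18 \le \tfrac14,
\]
and by the same reasoning applied to $n^{2/(d-1)}(\pi-\theta_{\max})$, also $\Pr{\theta_{\max} > \pi-\psi_\epsilon}\le \tfrac14$. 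A union bound gives probability at least $\tfrac12$ that neither bad event occurs, completing the proof.

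\textbf{Main obstacle.} The only delicate point is the interchange of limits: \Cref{thm:theta-min-max} is asymptotic in $n$, whereas both $\psi_\epsilon$ and $n$ depend on $\epsilon$, so we must evaluate the limiting CDF at the moving point $x_\epsilon$; the uniform-convergence upgrade in \Cref{fact:uniform-convergence-cdf} is precisely what licenses this. Everything else — the estimate $\psi_\epsilon = O(\sqrt{\epsilon\ln(1/\epsilon)})$ and the arithmetic showing the exponent $\tfrac{2bd}{d-1}-\tfrac12$ is negative iff $b<\tfrac{d-1}{4d}$ — is routine.
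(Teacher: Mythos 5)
Your proof is correct and follows essentially the same route as the paper's: sample $n=\ceil{(1/\epsilon)^{bd}}$ random points on $\bS^{d-1}$, reduce the inner-product condition to an angular condition $\theta\in[\psi_\epsilon,\pi-\psi_\epsilon]$, bound $\psi_\epsilon=O(\sqrt{\epsilon\ln(1/\epsilon)})$ via an $\arccos$ estimate, verify that the scaled threshold $n^{2/(d-1)}\psi_\epsilon\to 0$ precisely because $b<\frac{d-1}{4d}$, and invoke \cref{thm:theta-min-max} together with the uniform-convergence upgrade of \cref{fact:uniform-convergence-cdf} to conclude that the bad events have small probability. The only cosmetic difference is your choice of the global bound $\arccos(1-t)\le\pi\sqrt t$ on $[0,1]$ where the paper uses the local series estimate $\arccos(1-x)\le 2\sqrt x$; both give the needed $O(\sqrt{\epsilon\ln(1/\epsilon)})$ control, and your failure-probability threshold ($1/4$ per event) versus the paper's ($0.11$ per event) is immaterial.
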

\begin{proof}
    Let $n \define
    \ceil{(1/\epsilon)^{b d}} \le 2 (1/\epsilon)^{b d}$ and $S = (u^1, \dotsc, u^n)$ be a sequence of  points sampled uniformly and independently from $\bS^{d-1}$. We show that with
    positive probability, every pair $u^i \ne u^j$ in $S$ with $i \ne j$
    satisfies $\abs*{\inp{u^i}{u^j}} \le \epsilon^{a\epsilon}$, thus
    establishing the existence of such a set $S$.

    Let $u, v \in \bS^{d-1}$ be distinct and  $\theta \in [0, \pi]$ be the angle between them.
    Let
    $\alpha \define \epsilon^{a \epsilon}$.  %for simplicity of notation.
    Since $u, v$ are
    unit vectors, $\abs*{\inp{u}{v}} = \abs*{\cos \theta}$. Thus,
    \begin{equation}
        \label{eq:arccos-cond}
        \abs*{\inp{u}{v}} \le \alpha
        \iff \abs*{\cos \theta} \le \alpha
        \iff \theta \in [\arccos(\alpha), \arccos(-\alpha)] \,.
    \end{equation}
    We lower bound $\alpha$ by
    $
        \alpha
        = \epsilon^{a \epsilon}
        = e^{-a \epsilon \ln(1/\epsilon)}
        \ge 1 - a \epsilon \ln\left(\frac 1\epsilon\right) \,.
    $
    Since $\arccos$ is a decreasing function, a sufficient condition for \eqref{eq:arccos-cond}
    to hold is 
    \[
        \theta \in
            [\arccos(1 - a \epsilon \ln(1/\epsilon)),
            \arccos(-1 + a \epsilon \ln(1/\epsilon))] \,.
    \]
    By the series expansions for $x \to 0^+$,
    \begin{align*}
        \arccos(1 - x) &= \sqrt{2x} + O(x^{3/2}) \le 2\sqrt{x} \,, \\
        \arccos(-1 + x) &= \pi - \sqrt{2x} - O(x^{3/2}) \ge \pi - 2\sqrt{x}.
    \end{align*}
    Thus, for sufficiently small $\epsilon$, it suffices  if $\theta \in [C, \pi - C]$ where $C \define 2\sqrt{a \epsilon \ln(1/\epsilon)}$. Equivalently,
    \begin{align}
        \label{eq:theta-cond}
        \theta\geq C \text{ and } \pi - \theta \ge C \,.
    \end{align}
    By \cref{thm:theta-min-max}, all $u^i, u^j \in S$ with $i \ne j$ satisfy
    \eqref{eq:theta-cond} and hence \eqref{eq:arccos-cond}, except with probability at most
    $\Pr{\theta_{\min} < C}
                + \Pr{\pi - \theta_{\max} < C}$ 
        \begin{align}\label{eq:theta-probs}
           =
                \Pr{n^{2/(d-1)} \theta_{\min}
                    <  n^{2/(d-1)} C}
                + \Pr{n^{2/(d-1)} (\pi - \theta_{\max})
                    <  n^{2/(d-1)} C}. 
        \end{align}
    For sufficiently small $\epsilon > 0$, the RHS in the inequalities above is
    \begin{align*}
        n^{2/(d-1)}C
        &\le 2 \left(2 (1/\epsilon)^{b d}\right)^{2/(d-1)} \sqrt{a \epsilon \ln(1/\epsilon)} \\
        &=
            2 \sqrt{a} \cdot
            4^{1/(d-1)} \cdot
            \epsilon^{\frac{1}{2} - \frac{2 b d}{d-1}}
            \sqrt{\ln(1/\epsilon)}
        \le O(1) \cdot \epsilon^{\Omega(1)} \,,
    \end{align*}
    where the first step uses the bound $n\le 2 (1/\epsilon)^{b d}$ and the value of $C$, and the last step uses the assumption that $b < (d-1)/4d$.

    By \cref{thm:theta-min-max,fact:uniform-convergence-cdf}, for all sufficiently small $\epsilon$
    (\ie all sufficiently large $n$), each of the probabilities from \eqref{eq:theta-probs} is
    approximated by the limit CDF
    $F$ from \cref{thm:theta-min-max} up
    to an additive error of (say) $0.1$. Thus, each of the probabilities from \eqref{eq:theta-probs}
    is at most
    \[
        0.1 + F(O(1) \cdot \epsilon^{\Omega(1)})
        = 0.1 + 1 - e^{-K_d (O(1) \cdot \epsilon^{\Omega(1)})^{d-1}}
        = 0.1 + 1 - e^{-O(1) \cdot \epsilon^{\Omega(1)}}
        < 0.11 \,.
    \]
    Thus, $S$ satisfies the desired property with positive probability.
\end{proof}

\subsection{A pseudorandom function for SQ algorithms}
\label{sec:pseudorandom}

Combining \cref{thm:dkz20-implicit,lemma:packing}, we obtain a large set $\cF$
of Boolean functions with low pairwise correlations, where each $f \in \cF$ is
correlated with a halfspace. To derive an SQ lower bound for distance
approximation (as in the proof of \cref{thm:sq-weak-learning} for weak learning
by \cite{Szorenyi09}), we would like to argue that an SQ algorithm cannot
distinguish such an input $f$ from ``random noise''\!\!, because an adversarial
\textsc{Stat} oracle can consistently answer 0 until many queries are made.

However, we wish to obtain a lower bound for distance approximation of
\emph{functions}, not of randomized labelings (\ie joint point-label
distributions). Intuitively, a
random function $f : \bR^d \to \pmset$ is indistinguishable from noise, but
formally, such a function may not even be measurable, so it would be unsuitable
as an input in the SQ model. To handle this, we construct a
``pseudorandom'' function $f_0$ that is sufficiently uncorrelated with every
halfspace and every query of a target deterministic algorithm $\cA$. To
construct such $f_0$, we argue that the set of queries algorithm $\cA$ makes on
all inputs is finite and use this fact together with the connection between
between VC dimension and covering numbers (for the class of halfspaces) to
discretize the space $\bR^d$ into sufficiently small cells, and then make $f_0$
balanced within each cell.

Our argument requires the queries of $\cA$ to come from a finite set, regardless
of the oracle's answers. To enforce this, we define a discretized oracle as
follows. For all $x \in \bR$ and $\tau \in (0,1)$, let $\round_\tau(x)$ denote
the integer multiple of $\tau$ that is closest to $x$, with rounding towards
zero; that is, $\round_\tau(x) \define \sign(x) \cdot \tau \cdot
\floor{\abs{x}/\tau}$. A \emph{$\tau$-rounding oracle} answers query $g : \cX
\to [-1,1]$ with the quantity $\round_\tau\Big(\Exu{x \sim \cD}{g(x)
f(x)}\Big)$. Such an oracle is a valid $\Stat(\tau)$ oracle. If $\Exu{x \sim
\cD}{g(x) f(x)} \in (-\tau, \tau)$, then the $\tau$-rounding oracle outputs $0$,
implementing the desired adversarial behavior.

\begin{lemma}
    \label{lemma:f0}
    Let $d \in \bN$ and $\tau, \epsilon > 0$. Let $\cA$ be a deterministic SQ
    algorithm that takes a bounded number of bits of advice\footnote{We allow
    advice because in the final stage of the proof of \cref{thm:sq-lb}, we
    convert a randomized algorithm to a deterministic algorithm with advice.}
    and makes a bounded number of queries to a $\tau$-rounding oracle. Then
    there exists a function $f_0 : \bR^d \to \pmset$ such that:
    \begin{enumerate}
        \item  $\dist_{\cN(0,I)}(f_0, \cH) \ge \frac 12 - \frac \epsilon{100}$.
        \item For each function $f : \bR^d \to \pmset$, every query $g : \bR^d
        \to [-1,1]$ made by $\cA$ on input $f$ satisfies $\Big|\Exu{x \sim
            \cN(0,I)}{g(x) f_0(x)}\Big| \le \frac \tau 2$.
    \end{enumerate}
\end{lemma}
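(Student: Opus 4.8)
The plan is to build $f_0$ by a single ``diagonalization'' against a \emph{finite} list of functions: all queries that $\cA$ can ever make, together with a finite net of halfspaces. The first — and main — step is to observe that there is a finite set $G$ of functions such that, on every advice string and every input $f:\bR^d\to\pmset$, all queries $\cA$ makes lie in $G$. This is exactly where the $\tau$-rounding oracle is used: a $\tau$-rounding oracle returns only values in the finite set $\{\sign(v)\,\tau\,\floor{|v|/\tau}:v\in[-1,1]\}$, which has size $O(1/\tau)$. Since $\cA$ is deterministic and accesses $f$ only through the oracle, its $j$-th query is a fixed function of the (bounded-length) advice and of the first $j-1$ oracle answers. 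Unrolling the computation yields, for each of the at most $2^{a}$ advice strings, a rooted tree of depth at most $Q$ (the query bound) and branching at most $O(1/\tau)$, whose node labels are the possible queries; hence $|G|\le 2^{a}\cdot O(1/\tau)^{Q}<\infty$.

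\textbf{A finite net of halfspaces.} Because $\cH$ (halfspaces over $\bR^d$) has VC dimension $d+1$, the standard connection between VC dimension and covering numbers gives, for any constant $\epsilon_0>0$, a finite set $\cH_0=\{h_1,\dots,h_M\}\subseteq\cH$ such that every halfspace $h$ satisfies $\Pru{x\sim\cN(0,I)}{h(x)\ne h_i(x)}\le\epsilon_0$ for some $i$. (Alternatively one can build $\cH_0$ directly: take a fine net of hyperplane parameters $(w,\theta)\in\bS^{d-1}\times[-R',R']$ for a suitable $R'$, use that halfspaces with nearby parameters differ only on a thin slab of small Gaussian mass, and use Gaussian concentration to handle large $|\theta|$.)

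\textbf{Construction of $f_0$.} Fix $R$ with $\gamma(\bR^d\setminus B_R)\le\eta$, where $\gamma$ is the standard Gaussian measure, $B_R$ the origin-centered ball, and $\eta:=\min\{\tau/2,\,\epsilon/300\}$; such $R$ exists by Gaussian concentration. Let $\Phi:=G\cup\cH_0$, a finite family of measurable $[-1,1]$-valued functions. Since $\gamma$ restricted to $B_R$ is non-atomic, Lyapunov's convexity theorem says the range of the vector measure $A\mapsto\bigl(\gamma(A),\,(\textstyle\int_A\phi\,d\gamma)_{\phi\in\Phi}\bigr)$ over measurable $A\subseteq B_R$ is convex; it contains the value at $A=\emptyset$ and the value at $A=B_R$, hence their midpoint, so there is a measurable $A\subseteq B_R$ with $\gamma(A)=\tfrac12\gamma(B_R)$ and $\int_A\phi\,d\gamma=\tfrac12\int_{B_R}\phi\,d\gamma$ for every $\phi\in\Phi$. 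Define $f_0:=+1$ on $A\cup(\bR^d\setminus B_R)$ and $f_0:=-1$ on $B_R\setminus A$; this is a simple, hence measurable, function. If one prefers to avoid Lyapunov, one can instead chop $B_R$ into many small cells and give $f_0$ an independent uniform sign on each cell — the variance bound $\sum_k\gamma(C_k)^2\le\max_k\gamma(C_k)$ together with a union bound over $\Phi$ makes all relevant correlations $\le\eta$ with positive probability, which is the ``balanced in each cell'' viewpoint hinted at above.

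\textbf{Verification and the hard part.} For any $\phi\in\Phi$ one gets $\int_{B_R}\phi f_0\,d\gamma=\int_A\phi\,d\gamma-\int_{B_R\setminus A}\phi\,d\gamma=2\int_A\phi\,d\gamma-\int_{B_R}\phi\,d\gamma=0$, so $\bigl|\Ex{\phi(x)f_0(x)}\bigr|=\bigl|\int_{\bR^d\setminus B_R}\phi\,d\gamma\bigr|\le\eta$. In particular each query $g\in G$ has $\bigl|\Ex{g(x)f_0(x)}\bigr|\le\eta\le\tau/2$, which is Item~2. For Item~1, given $h\in\cH$ pick $h_i\in\cH_0$ with $\Pr{h\ne h_i}\le\epsilon_0$; then $\bigl|\Ex{h(x)f_0(x)}\bigr|\le\bigl|\Ex{h_i(x)f_0(x)}\bigr|+\Ex{\,|h(x)-h_i(x)|\,}\le\eta+2\epsilon_0$, hence $\dist_{\cN(0,I)}(f_0,h)=\tfrac{1-\Ex{f_0(x)h(x)}}{2}\ge\tfrac12-\tfrac{\eta}{2}-\epsilon_0\ge\tfrac12-\tfrac{\epsilon}{100}$ once we take $\epsilon_0:=\epsilon/300$; taking the infimum over $h\in\cH$ gives the claim. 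The delicate point is the first step — finiteness of $G$ — which genuinely relies on the $\tau$-rounding of oracle answers together with bounded advice and a bounded number of queries; everything after that (the VC covering bound and the measure-splitting) is routine.
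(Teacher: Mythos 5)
Your proof is correct, and it matches the paper's high-level structure for the first two steps: you establish finiteness of the query set $G$ by the same tree/unrolling argument over advice strings and rounded oracle outputs, and you use the VC-dimension $\Rightarrow$ finite-covering connection to get a finite net $\cH_0$ of reference halfspaces.

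Where you genuinely diverge is the construction of $f_0$. The paper defines a ``color'' for each $x\in\bR^d$ as the tuple of values $\round_{\tau/2}(g(x))$ for $g\in G$ together with the labels $r(x)$ for $r\in\cH_0$; this partitions $\bR^d$ into finitely many cells on each of which all the (rounded) queries and reference halfspaces are constant, and then $f_0$ is a balanced $\pm1$ split of each cell. This makes $\Ex{\round_{\tau/2}(g)\,f_0}=0$ and $\Ex{r\,f_0}=0$ exactly, and the $\tau/2$ loss comes only from $|g-\round_{\tau/2}(g)|\le\tau/2$. Your construction instead truncates to a ball $B_R$ of Gaussian mass $\ge 1-\eta$ and invokes Lyapunov's convexity theorem to find a single measurable $A\subseteq B_R$ that simultaneously halves $\gamma(B_R)$ and $\int_{B_R}\phi\,d\gamma$ for every $\phi\in G\cup\cH_0$; the small loss comes from the tail. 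This buys you the ability to balance against the \emph{unrounded} queries $g$ directly (no need to round queries to precision $\tau/2$), at the cost of a less elementary tool. Your alternative ``chop $B_R$ into tiny cells and assign independent random signs, then apply concentration plus a union bound over the finite family'' is a third route, also valid, and is closer in spirit to the paper's cell-balancing but avoids choosing the cells by the color function. All three constructions establish the same two conclusions with the same quantitative guarantees, and your arithmetic for Item~1 (with $\eta,\epsilon_0\le\epsilon/300$ giving $\dist\ge 1/2-\epsilon/200$) checks out.
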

\begin{proof}
    Since each output of the $\tau$-rounding oracle comes from the finite set of
    integer multiples of $\tau$ in $[-1,1]$, and $\cA$ takes a finite number of
    bits of advice and terminates after finitely many queries, there are
    finitely many sequences of oracle outputs seen by $\cA$ regardless of its
    input. Since $\cA$ is deterministic, there exists a finite set
    $\cG$ such that every query $g$ made by $\cA$ comes from $\cG$.

    Now, since the class $\cH$ of halfspaces over $\bR^d$ has finite VC
    dimension, a standard result in learning theory
    \cite[Corollary~1]{Haussler95} says that $\cH$ has finite covering number
    with respect to the Gaussian distribution, that is, there exists a finite
    set $\cR$ of reference halfspaces such that every halfspace $h$ in
    $\bR^d$ satisfies $\dist_{\cN(0,I)}(h, r) \le \epsilon/100$ for some
    $r \in \cR$.
    
    For each $x \in \bR^d$, let its \emph{color} $C(x)$ be the tuple
    containing all labels of $x$ by the (rounded) query functions from $\cG$ and the reference halfspaces from $\cR$:
    \[
        C(x) \define \left(
            \left(\round_{\tau/2}(g(x))\right)_{g \in \cG},
            \left(r(x)\right)_{r \in \cR}
        \right) \,.
    \]
    % \[
    %     C(x) \define \left(
    %         \round_{\tau/2}(h_1(x)), \dotsc, \round_{\tau/2}(h_Q(x)), g_1(x), \dotsc, g_m(x)
    %     \right).
    % \]
    The number of possible colors is finite. For each such color $c$, definite
    set $P_c = \{x \in \bR^d : C(x) = c\}$. Then sets $P_c$ partition $\bR^d$.
    For each part $P_c$, let $L_c \cup R_c$ be a partition of $P_c$ of balanced
    Gaussian measure, \ie a partition satisfying $\Pru{x \sim \cN(0,I)}{x \in
    L_c} = \Pru{x \sim \cN(0,I)}{x \in R_c}$. Define the function $f_0$ as
    follows: for all colors $c$ and $x \in P_c$, let $f_0(x) = +1$ if $x \in
    L_c$ and $f_0(x) = -1$ if $x \in R_c$.

    Now we prove the two items of \Cref{lemma:f0}. For the first item, let $h$
    be a halfspace in $\bR^d$. By construction, $f_0$ is $\frac 12$-far from
    every reference halfspace $r \in \cR$ under the Gaussian distribution, since
    for each part $P_c$, halfspace $r$ gives all of $P_c$ the same label,
    whereas $f_0$ gives label $+1$ or $-1$ with equal conditional probability.
    Using the covering property, let $r \in \cR$ be a reference halfspace
    satisfying $\dist_{\cN(0,I)}(h, r) \le \epsilon/100$. Then, by the triangle
    inequality,
    \[
        \dist_{\cN(0,I)}(f_0, h)
        \ge \dist_{\cN(0,I)}(f_0, r) - \dist_{\cN(0,I)}(h, r)
        \ge \frac{1}{2} - \frac \epsilon{100} \,.
    \]
    Hence $\dist_{\cN(0,I)}(f_0, \cH) \ge \frac{1}{2} - \frac \epsilon{100}$, as claimed.

    For the second item, let function $g \in \cG$ be a query of $\cA$. By the
    construction of $f_0$, 
    \begin{equation}\label{eq:correlation-of-f0-with-rounded-functions}        
        \Exu{x \sim \cN(0,I)}{\round_{\tau/2}(g(x)) f_0(x)} = 0 \,,
    \end{equation}
    because for each part $P_c$, every $x \in P_c$ has the same value of
    $\round_{\tau/2}(g(x))$, whereas $f_0$ gives label $+1$ or $-1$ with equal
    conditional probability. By applying
    \eqref{eq:correlation-of-f0-with-rounded-functions} and then Jensen's
    inequality together with the fact that $|f_0(x)|=1$ for all $x\in\bR^d$, we
    get
    \begin{align*}
        \abs*{\Exu{x \sim \cN(0,I)}{g(x) f_0(x)}}
        &= \abs*{\Exu{x \sim \cN(0,I)}{
            \left(g(x) - \round_{\tau/2}(g(x))\right) f_0(x)}} \\
        &\le \Exu{x \sim \cN(0,I)}{\abs*{g(x) - \round_{\tau/2}(g(x))}}
        \le \frac \tau 2 \,. \qedhere
    \end{align*}
\end{proof}

\subsection{Lower bound for deterministic algorithms}
\label{sq:sq-det}

We combine the ingredients above to prove a lower bound for deterministic SQ algorithms for distance
approximation of halfspaces. Our proof builds on the simplified argument of \cite{Szorenyi09} for
weak learning (\cref{thm:sq-weak-learning}), which uses the correlation bound in the definition of
SQ dimension to show that each statistical query, if answered adversarially, can rule out only a
limited number of inputs.

\begin{theorem}
    \label{thm:sq-lb-det}
    Let $d \geq 2$ be a constant. Then every deterministic SQ distance approximation algorithm for
    halfspaces under $\cN(0, I)$ with a bounded number of bits of advice requires $(1/\eps)^{\Omega(d)}$ queries to
    $\Stat(\eps^{\Omega(d)})$.
\end{theorem}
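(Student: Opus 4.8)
The plan is to adapt the SQ-dimension counting argument underlying \cref{thm:sq-weak-learning} (in the form simplified by \cite{Szorenyi09}) to the distance-approximation setting, with the pseudorandom function of \cref{lemma:f0} playing the role of the \textsc{No} instance. Throughout I write $\langle f,g\rangle\define\Exu{x\sim\cN(0,I)}{f(x)g(x)}$ for the Gaussian correlation. Fix the constant $d\ge 2$, set $b\define\tfrac{d-1}{8d}$ (so $bd\ge d/16$), $n\define\ceil{(1/\epsilon)^{bd}}$, and $\tau\define n^{-1/3}$. First I would assemble a family of \textsc{Yes} instances: pick a sufficiently large constant $a=a(d)$, put $k\define\ceil{2bd/(a\epsilon)}=\Theta_d(1/\epsilon)$, apply \cref{lemma:packing} with parameters $a,b$ to obtain $n$ unit vectors $u_1,\dots,u_n\in\bS^{d-1}$ with $\abs{\inp{u_i}{u_j}}\le\rho\define\epsilon^{a\epsilon}$ for all $i\ne j$, and feed these into \cref{thm:dkz20-implicit} to get functions $f_1,\dots,f_n:\bR^d\to\pmset$, each satisfying $\langle f_i,h_i\rangle\ge\tfrac1{2k}$ for some halfspace $h_i$ (hence $\dist_{\cN(0,I)}(f_i,\cH)\le\tfrac12-\tfrac1{4k}$) and $\abs{\langle f_i,f_j\rangle}\le 2\rho^{k+1}$ for $i\ne j$. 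Choosing $a$ large in terms of $d$ makes $\tfrac1{4k}\ge 4\epsilon$ (so each $f_i$ is genuinely $(\tfrac12-4\epsilon)$-close to $\cH$) while simultaneously forcing $a\epsilon(k+1)\ge 2bd$, so that $2\rho^{k+1}=2\epsilon^{a\epsilon(k+1)}\le 1/n$; thus $\{f_1,\dots,f_n\}$ is a size-$n$ family of \textsc{Yes} instances with all pairwise correlations at most $1/n$.

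Next I would set up the contradiction. Suppose toward a contradiction that some deterministic SQ distance-approximation algorithm $\cA$ with a bounded number of advice bits always makes fewer than $q_0\define\tfrac13 n^{1/3}$ queries to $\Stat(\tau)$; since a $\tau$-rounding oracle is a valid $\Stat(\tau)$ oracle, I may assume $\cA$ queries such an oracle. Apply \cref{lemma:f0} to $\cA$ (with the above $\tau$ and $\epsilon$) to get $f_0:\bR^d\to\pmset$ with $\dist_{\cN(0,I)}(f_0,\cH)\ge\tfrac12-\tfrac\epsilon{100}$ and with $\abs{\langle g,f_0\rangle}\le\tfrac\tau2<\tau$ for every query $g$ that $\cA$ can make. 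Hence on input $f_0$ every rounded-oracle answer equals $0$; let $g_1,\dots,g_q$ (with $q<q_0$) be the queries in this all-zero transcript and $\widehat\alpha_0$ the output, which by correctness satisfies $\widehat\alpha_0\ge\dist_{\cN(0,I)}(f_0,\cH)-\epsilon\ge\tfrac12-\tfrac\epsilon{100}-\epsilon$.

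The heart of the argument is then to show that $\cA$ cannot tell $f_0$ apart from almost all of the $f_i$. Let $B\define\{i\in[n]:\abs{\langle g_\ell,f_i\rangle}\ge\tau\text{ for some }\ell\le q\}$. For any fixed query $g:\bR^d\to[-1,1]$, the number $m$ of indices $i$ with $\abs{\langle g,f_i\rangle}\ge\tau$ is small: setting $v\define\sum_{i:\,\abs{\langle g,f_i\rangle}\ge\tau}\sign(\langle g,f_i\rangle)\,f_i$, Cauchy--Schwarz together with $\langle f_i,f_i\rangle=1$ and the pairwise bound gives $m\tau\le\langle g,v\rangle\le\|v\|_2\le\sqrt{m+m^2/n}$, whence $m\le 1/(\tau^2-1/n)\le 2\tau^{-2}=2n^{2/3}$ once $\epsilon$ is small enough that $n^{-2/3}\ge 2/n$. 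Summing over the $q<q_0$ queries yields $\abs{B}\le q_0\cdot 2n^{2/3}=\tfrac23 n<n$, so there is $i^\star\in[n]\setminus B$. By induction on the query index, on input $f_{i^\star}$ the rounded oracle also answers $0$ to every query --- at each step $\cA$ is in the same internal state as on input $f_0$, asks $g_\ell$, and $\abs{\langle g_\ell,f_{i^\star}\rangle}<\tau$ forces the answer $0$ --- so $\cA$ outputs the same value $\widehat\alpha_0$ on $f_{i^\star}$. But correctness on $f_{i^\star}$ forces $\widehat\alpha_0\le\dist_{\cN(0,I)}(f_{i^\star},\cH)+\epsilon\le\tfrac12-3\epsilon$, contradicting $\widehat\alpha_0\ge\tfrac12-\tfrac\epsilon{100}-\epsilon$. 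Therefore any such $\cA$ makes at least $q_0=\tfrac13 n^{1/3}=(1/\epsilon)^{\Omega(d)}$ queries, and since $\tau=n^{-1/3}=\epsilon^{\Omega(d)}$ this is the claimed bound (the clarification $\Omega(d)=cd$ holds with, e.g., $c=\tfrac1{48}$, using $bd\ge d/16$).

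The step I expect to be most delicate is the parameter balancing in the first paragraph, not the counting: $k$ must be $\Theta_d(1/\epsilon)$ so that the halfspace-correlation $\tfrac1{2k}$ is $\Omega(\epsilon)$ --- keeping each $f_i$ an honest \textsc{Yes} instance at distance $\tfrac12-\Omega(\epsilon)$, comfortably below $f_0$'s distance $\tfrac12-\tfrac\epsilon{100}$ --- while at the same time $k$ must be large enough that the pairwise correlation $2\rho^{k+1}=2\epsilon^{a\epsilon(k+1)}$ drops all the way to $1/n=\epsilon^{\Theta_d(1)}$; reconciling $a\epsilon k\approx bd$ with $\tfrac1{4k}=\Omega(\epsilon)$ is what forces $a$ to be a sufficiently large constant depending on $d$ (harmless, as $d$ is constant), and one must separately check that $\tau=n^{-1/3}$ sits comfortably above $1/n$ (for the bound $m\le 2\tau^{-2}$) while still being the target tolerance $\epsilon^{\Omega(d)}$. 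The remaining points are routine: that the $f_i$ and $f_0$ are valid (measurable) inputs --- the $f_i$ are piecewise constant along a one-dimensional projection, and $f_0$ is constant-colored on finitely many Gaussian-measurable cells, each split into two pieces of equal Gaussian measure by continuity --- and that \cref{lemma:f0}'s hypotheses (deterministic, bounded advice, and a bounded number of queries to a $\tau$-rounding oracle) all hold under the working assumption $q<q_0$.
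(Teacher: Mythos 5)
Your proof is correct and follows essentially the same route as the paper: instantiate \cref{lemma:packing} and \cref{thm:dkz20-implicit} to build a large pairwise-uncorrelated family of \textsc{Yes} instances, invoke \cref{lemma:f0} for the pseudorandom \textsc{No} instance $f_0$, and use the Cauchy--Schwarz counting argument (\`a la \cite{Szorenyi09}) to find an $f_{i^\star}$ indistinguishable from $f_0$ under the $\tau$-rounding oracle. The only cosmetic differences are your choice of a larger constant $a(d)$ in the packing step (which pushes $\tfrac{1}{4k}\ge 4\epsilon$ so the final contradiction runs at parameter $\epsilon$ rather than the paper's $\epsilon/16$) and bundling the positive- and negative-correlation bounds into a single signed sum instead of handling the two signs separately.
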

\begin{proof}
    Let $\gamma \in (0, 1)$ be a constant. First, we apply \cref{lemma:packing} with parameters $a =
    d$ and $b = \frac{\gamma (d-1)}{4d}$. This yields a set of $s = \left( 1/\eps
    \right)^{\frac{\gamma}{4} (d-1)}$ unit vectors with pairwise correlation at most $\rho
    = \eps^{d\eps}$. We apply \cref{thm:dkz20-implicit} with parameter $k = \frac{1}{\eps} - 1$ to
    this set of vectors to get a set $\cF$ of $s$  functions of the form $\bR^d \to \{ \pm 1 \}$
    satisfying the following conditions.
    \begin{enumerate}
        \item For each $f\in\cF$,
            there is a halfspace $h : \bR^d \to \{ \pm 1 \}$ satisfying
            $\displaystyle\Exu{x \sim \cN(0,I)}{f(x) h(x)} \ge \frac{1}{2k} \geq \frac{\eps}{2},$
            and consequently, $\dist_{\cN(0, I)} (f, \cH)$ is at most
            \[
                \dist_{\cN(0, I)} (f, h)
                = \Pru{x \sim \cN(0,I)}{f(x) \neq h(x)}
                = \frac 12\Big(1 - \Exu{x \sim \cN(0,I)}{f(x) h(x)}\Big)
                \leq \frac{1}{2} - \frac{\eps}{4} \,.
            \]

        \item\label{condition:pairs-of-functions} For all
            distinct $f,f'\in \cF$, we have
            $\Big|\Exu{x \sim \cN(0,I)}{f(x) f'(x)}\Big| \le 2\rho^{k+1} = 2\epsilon^d \leq
            \eps^{\frac{\gamma}{4} (d-1)} = \frac{1}{s}$, where the second inequality holds for
            small enough $\eps > 0$.
    \end{enumerate}
    Let $\cA$ be a deterministic SQ algorithm that makes less than $\frac{s^{1/3} - 1}{2}$ queries
    to a $\tau$-rounding oracle with $\tau = s^{-1/3}$. We apply \cref{lemma:f0} to get a function
    $f_0$ that has correlation at most $\frac{\tau}{2}$ with all  queries made by $\cA$ and
    satisfies  $\dist_{\cN(0, I)} (f_0, \cH) \geq \frac{1}{2} - \frac{\eps}{100}$. We define our
    ``adversarial'' oracle to always respond with 0 to queries made by $\cA$. Then $f_0$ is
    consistent with all query answers. We will show that some $f_*\in\cF$ is also consistent with
    all query answers. Then $\cA$ cannot distinguish $f_0$ from $f_*$, thereby failing to
    approximate the distance to $\cH$ well.

    We use the inner product notation $\inp{f}{g} \define \Exu{x \sim \cN(0, I)}{f(x) g(x)}$. Fix a
    query $g : \bR^d \to [0,1]$. By definition, it has norm $\| g \| \leq 1$.
    Let the \emph{bad set} $B := \{ f\in\cF : \inp g f \geq \tau \}$.
    Then $\inp{g}{\sum_{f \in B} f} \ge |B|\tau$.
    On the other hand, by the Cauchy-Schwarz inequality and
    condition~\ref{condition:pairs-of-functions} on the set $\cF$,
    \[
        \inp{g}{\sum_{f \in B} f}^2
        \le \left\| \sum_{f \in B} f \right\|^2
        = \sum_{f,f' \in B} \inp{f}{f'}
        \le \sum_{f \in B} \left( 1 + \frac{|B|-1}{s} \right)
        \le |B| + \frac{|B|^2}{s} \,.
    \]
    Combining the two inequalities and then substituting $\tau = s^{-1/3}$, we obtain $|B| \le
    \frac{s}{s\tau^2 - 1}=\frac s{s^{1/3}-1}$. Similarly, at most $\frac{s}{s^{1/3} - 1}$ functions
    in $\cF$ have correlation at most $-\tau$ with $g$. Hence, at most $\frac{2s}{s^{1/3} - 1}$ functions in $\cF$ are inconsistent with the answer 0 to query $g$. Since $\cA$ makes less
    than $\frac{s^{1/3} - 1}{2}$ queries, some $f_* \in \cF$ is consistent with every query answer
    being $0$. By construction, $f_0$ is also consistent with every query answer being $0$. However,
    $\dist_{\cN(0, I)} (f_0, \cH) \geq \frac{1}{2} - \frac{\eps}{100}$, whereas $\dist_{\cN(0, I)}
    (f_*, \cH) \leq \frac{1}{2} - \frac{\eps}{4}$. Thus $\cA$ cannot approximate distance to $\cH$
    with parameter $\frac\eps{16}$. Because $s^{1/3} = \left( 1/\eps \right)^{\frac{\gamma}{12}
    (d-1)}$, %\srnote{If we want to extract the constant, this is the expression (plus we need to substitute $\eps/16$ instead of $\eps$).},
    this concludes our proof.
\end{proof}

\subsection{Lower bound for randomized algorithms}
\label{sq:sq-rand}

We now extend the lower bound in \Cref{thm:sq-lb-det} to randomized SQ algorithms via a nonuniform
derandomization argument. The key observation is that the lower bound for deterministic algorithms from \cref{thm:sq-lb-det} applies to \emph{nonuniform} algorithms. Since a randomized algorithm with
sufficiently high success probability implies, via a probabilistic argument, a deterministic algorithm with advice, we obtain a lower bound for randomized algorithms. A similar observation for weak learning algorithms appears in \cite{BshoutyF02}.

\begin{proof}[Proof of \cref{thm:sq-lb}]
    Any algorithm with success probability at least $0.51$ can be boosted to succeed with
    probability $1- \delta$ by running the algorithm $O(\log (1/\delta))$ times and outputting the
    median of the estimates. Thus, fixing any constant $\gamma \in (0,1)$,
    it suffices to show that every algorithm with failure
    probability $< \eps^{\frac{\gamma}{2} (d-1)}$ requires $(1/\epsilon)^{\Omega(d)}$ SQ queries.
    Let $\cB$ be a randomized SQ distance approximation algorithm for halfspaces under $\cN(0,I)$
    with failure probability $< \epsilon^{\frac{\gamma}{2}(d-1)}$.
    %Thus, for the sake of contradiction, we can assume that we are given an
    %algorithm $\cB$ that makes at most $\frac 12 \Big(\left( 1/\eps
    %\right)^{\gamma(d-1)/12}-1\Big)$
    %queries and fails with probability $< \eps^{\frac{\gamma}{2} (d-1)}$ for some constant $\gamma
    %\in (0, 1)$. This is because the
    %number of repetitions is asymptotically dominated by the query complexity.
    By a union bound over the set
    %of functions
    $\cF \cup \{ f_0 \}$ from the proof of \cref{thm:sq-lb-det},
    which consists of $(1/\epsilon)^{\frac{\gamma}{4}(d-1)} + 1
    < (1/\epsilon)^{\frac{\gamma}{2}(d-1)}$ functions,
    there is a random seed $r$ such that algorithm $\cB$ run with random
    seed $r$ outputs an accurate estimate on all of these functions. Let $\cA$ be a deterministic
    algorithm obtained from $\cB$ by using $r$ as advice instead of the random seed. Then $\cA$
    succeeds on $\cF \cup \{ f_0 \}$ and has the same query complexity as $\cB$.
    %But by \cref{thm:sq-lb-det}, this is not possible. This completes the proof of the theorem.
    By \cref{thm:sq-lb-det}, we conclude that $\cA$ (and hence $\cB$) requires
    $(1/\epsilon)^{\Omega(d)}$ queries to $\Stat(\epsilon^{\Omega(d)})$.
\end{proof}

\bibliographystyle{alpha}
\bibliography{references}

\end{document}